\def\Z{\mathbb{Z}}                              
\def\C{\mathbb{C}}                              
\def\Id{\textup{Id}}
\def\Re{\textup{Re}}
\def\Sp{\textup{Sp}}
\def\i{\mathrm{i}}
\theoremstyle{plain}
\newtheorem{theorem}{Theorem}[section]
\newtheorem{lemma}[theorem]{Lemma}
\newtheorem{cor}[theorem]{Corollary}
\newtheorem{prop}[theorem]{Proposition}
\theoremstyle{definition}
\theoremstyle{remark}
\numberwithin{equation}{section}
\numberwithin{figure}{section}
\renewcommand*\env@matrix[1][\arraystretch]{%
  \edef\arraystretch{#1}%
  \hskip -\arraycolsep
  \let\@ifnextchar\new@ifnextchar
  \array{*\c@MaxMatrixCols c}}
\providecommand{\keywords}[1]{\noindent{\textbf{Keywords:}} #1}
\g@addto@macro\bfseries{\boldmath}
\title{Exact results for the six-vertex model with domain wall boundary conditions and a partially reflecting end}
\author{Linnea Hietala}
\date{}
\begin{document}
\maketitle
\begin{abstract}
The trigonometric six-vertex model with domain wall boundary conditions and one partially reflecting end on a lattice of size $2n\times m$, $m\leq n$, is considered. The partition function is computed using the Izergin--Korepin method, generalizing the result of Foda and Zarembo from the rational to the trigonometric case. Thereafter we specify the parameters in Kuperberg's way to get a formula for the number of states as a determinant of Wilson polynomials. We relate this to a type of ASM-like matrices.
\\
\keywords{six-vertex model, domain wall boundary conditions, partially reflecting end, partition function, triangular K-matrix}

\end{abstract}


\tikzset{midarrow/.style={
        decoration={markings,
            mark= at position 0.5 with {\arrow{#1}} ,
        },
        postaction={decorate}
    }
}
\tikzset{latearrow/.style={
        decoration={markings,
            mark= at position 0.7 with {\arrow{#1}} ,
        },
        postaction={decorate}
    }
}
\tikzset{earlyarrow/.style={
        decoration={markings,
            mark= at position 0.4 with {\arrow{#1}} ,
        },
        postaction={decorate}
    }
}

\section{Introduction}
The first example of a six-vertex (6V) model was the ice-model, where all states have the same weight. This and some other special cases of the 6V model with periodic boundary conditions were solved in 1967 by Lieb \cite{Lieb1967}. The same year, Sutherland \cite{Sutherland1967} solved the general case. 

One of the first nontrivial examples of fixed boundaries were the domain wall boundary conditions (DWBC) \cite{Korepin1982}. 
In 1996, Zeilberger \cite{Zeilberger1996} proved the alternating sign matrix conjecture of Mills, Robbins and Rumsey \cite{MillsRobbinsRumsey1983}, which gives a formula for the number of alternating sign matrices (ASMs). There is a bijection between the ASMs and the states of the 6V model with DWBC. Izergin \cite{Izergin1987, IzerginCokerKorepin1992} showed that the partition function of the 6V model with DWBC can be expressed as a determinant, which Kuperberg \cite{Kuperberg1996} used to give another proof of the alternating sign matrix conjecture. 

Tsuchiya \cite{Tsuchiya1998} used the Izergin--Korepin method to obtain a determinant formula for the partition function of the 6V model with one diagonal reflecting end and DWBC on the three other sides on a lattice of size $2n\times n$. Kuperberg \cite{Kuperberg2002} used this to give a formula for the number of the corresponding UASMs. The UASMs are alternating sign matrices with U-turns on one side, and generalize the vertically symmetric alternating sign matrices (VSASMs). 


Foda and Wheeler \cite{FodaWheeler2012} found a determinant formula for the partition function of the 6V model with partial DWBC on a lattice of size $m\times n$, which generalizes the determinant formula of Korepin and Izergin. Foda and Zarembo \cite{FodaZarembo2016} found the corresponding generalization of Tsuchiya's determinant formula in the 
rational case. They obtained a determinant formula for the rational
6V model on a lattice with $2n\times m$ sites, $m\leq n$, with DWBC and where the reflecting end has a triangular $K$-matrix. These boundary conditions are called DWBC with a partially reflecting end. Pozsgay \cite{Pozsgay2014} used the homogeneous limit of Tsuchiya's $2n\times n$ determinant to compute overlaps (i.e. inner products) between (off-shell) Bethe states and certain simple product states, such as the N\'eel states. In a similar way, Foda and Zarembo used their rational $2n\times m$ determinant formula to compute overlaps between Bethe states and more general objects which they call partial N\'eel states. 

Foda and Wheeler commented that in the case of partial DWBC on a lattice of size $n\times m$, it is not obvious if and how one could count ASM-like objects with Kuperberg's specialization, due to phases that vary between different states, coming from the trigonometric weights. However, in the present paper we find that in the case of DWBC and partial reflection, it is possible to count the states, since similar phases do not appear in this case. The reason for this is the alternating orientations of the lines. 

Counting ASMs can be generalized to $x$-enumerations. In the $x$-enumeration of ASMs, each state is counted with a weight $x^k$, where $k$ is the number of $-1$'s in the ASM. A formula in the general case is not known, but in some special cases, $x=1, 2$ and $3$, there are closed expressions \cite{Zeilberger1996, Kuperberg1996, MillsRobbinsRumsey1983}.
Colomo and Pronko \cite{ColomoPronko2006} obtained a simplified treatment of $x$-enumerations by rewriting the Hankel determinant representation of the partition function of the 6V model with DWBC in terms of orthogonal polynomials. The method can be used to find a formula for the $x$-enumerations for those $x$ where the underlying orthogonal polynomials belong to the Askey scheme of hypergeometric orthogonal polynomials. This works for the $1$-, $2$- and $3$-enumerations. In the original case of $1$-enumerations the orthogonal polynomials are continuous Hahn polynomials. 

In this paper, we study the trigonometric 
6V model with DWBC on three sides and one partially reflecting end, on a lattice with $2n\times m$ sites, $m\leq n$. 
We first find a determinant formula for the partition function. Then we specialize the parameters in Kuperberg's manner to finally find a formula that counts the number of states of the model. 

At first, in Section~\ref{sec:prel}, we introduce the model. In Section~\ref{sec:fodazarembo} we follow the Izergin--Korepin method to obtain a determinant formula for the partition function in Theorem~\ref{thm:determinantformula}, i.e. the trigonometric generalization of what Foda and Zarembo did in the rational case. 
The alternative method of Foda and Wheeler to find the determinant formula is presented in Appendix A. 

In Section~\ref{sec:countingstates}, we find a formula for the number of states in terms of the partition function. We connect this to the enumeration of a type of generalized UASMs. The objective of Section~\ref{sec:rewritepartfcn} is to specialize the parameters in the determinant formula in Kuperberg's way. We rewrite the partition function following the ideas of Colomo and Pronko \cite{ColomoPronko2006}. The determinant can be represented by a matrix consisting of a Hankel matrix part and a Vandermonde matrix part, and the underlying orthogonal polynomials are Wilson polynomials. Then in Theorem~\ref{theorem:numberofstates}, we finally write down a determinant formula counting the number of states of the 6V model with DWBC and one partially reflecting end.

\section{Preliminaries}
\label{sec:prel}

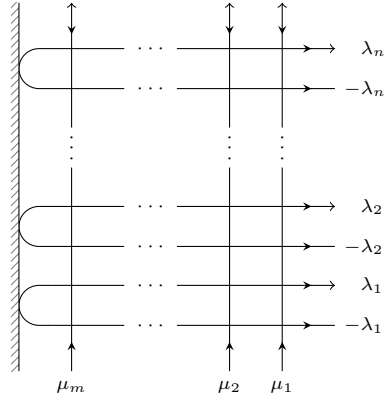
\begin{figure}[t]
\centering
\begin{tikzpicture}[baseline={([yshift=-.5*10pt*0.6]current bounding box.center)}, scale=0.7, font=\scriptsize]
	\foreach \y in {1,2,4} {
		\draw[midarrow={stealth}] (5.55,1.5*\y-.25-.38) -- +(0.01,0);
		\draw[midarrow={stealth}] (5.55,1.5*\y-.25+.38) -- +(0.01,0);
		
		\draw (3, 1.5*\y-.25-.38) -- +(3, 0);
		\draw (0.38, 1.5*\y-.25-.38) -- +(1.62, 0);
		\draw[->] (3, 1.5*\y-.25+.38) -- +(3, 0);
		\draw[-] (0.38, 1.5*\y-.25+.38) -- +(1.62, 0);
		
		\foreach \x in {-1,...,1} \draw (2.5+.2*\x, 1.5*\y-.25-.38) node{$\cdot\mathstrut$};
		\foreach \x in {-1,...,1} \draw (2.5+.2*\x, 1.5*\y-.25+.38) node{$\cdot\mathstrut$};
		
		\draw (0.38,1.5*\y-.25+.38) arc (90:270:0.38);
	}
		
	\node[anchor=west] at (6, 1.5-.25-.38) {$-\lambda_1$};
	\node[anchor=west] at (6, 1.5-.25+.38) {$\phantom{-}\lambda_1$};
	\node[anchor=west] at (6, 1.5*2-.25-.38) {$-\lambda_2$};
	\node[anchor=west] at (6, 1.5*2-.25+.38) {$\phantom{-}\lambda_2$};
	\node[anchor=west] at (6, 1.5*4-.25-.38) {$-\lambda_n$};
	\node[anchor=west] at (6, 1.5*4-.25+.38) {$\phantom{-}\lambda_n$};
		
	\foreach \x in {1,4,5} {
		\draw (\x,0) -- +(0,4.5-0.25-.38); 
		\draw (\x,4.5-0.25+.38) -- +(0,1.63+0.25-0.38); 
			\draw[midarrow={stealth}] (\x,0.55) -- +(0,0.01);
		\draw[->] (\x,6.13)  -- +(0,.87);	
			\draw[midarrow={stealth reversed}] (\x,6.13+.43)  -- +(0,0.01);	
			
		\foreach \y in {-1,...,1} \draw (\x, 4.5-0.25+.2*\y) node{$\cdot\mathstrut$};
	}
	
	\node at (1,-0.3) {${\mu_m}$};
	\node at (4,-0.3) {${\mu_2}$};
	\node at (5,-0.3) {${\mu_1}$};
	
		\draw (0,1.55-0.25) -- (0,1.56-0.25);
		\draw (.55,1.25-.38) -- +(0.01,0);
		\draw (.55,1.25+.38) -- +(0.01,0);
		
	\fill[preaction={fill,white},pattern=north east lines, pattern color=gray] (0,0) rectangle (-.15,7) ; \draw (0,0) -- (0,7);
		
\end{tikzpicture}

\caption{The 6V model with DWBC and one partially reflecting end. The parameters $\lambda_i$ and $\mu_j$ are the spectral parameters.}
\label{fig:6vdwbcreflend}
\end{figure}

Consider a square lattice with $2n\times m$ lines, where the horizontal lines are connected pairwise at the left side, as in Figure~\ref{fig:6vdwbcreflend}. Each such pair of horizontal lines can be thought of as one single line turning at a wall on the left side, see Figure~\ref{fig:6vdwbcreflend}. 
We assign a spin $\pm 1$ to each edge. 
A lattice with a spin assigned to each edge is called a state. 

In order to assign weights to the states, we give each line an orientation. We choose a positive direction, which goes upwards for the vertical lines, to the left for the lower part of the horizontal double line, and to the right for the upper part. The positive direction is indicated by an arrow at the end of a line.
Graphically, spin $+1$ corresponds to an arrow pointing in the positive direction of the line, and spin $-1$ corresponds to an arrow pointing in the opposite direction. At each vertex, the so called ice rule must hold, which demands that two arrows must be pointing inwards to the vertex and two arrows must be pointing outwards.
Because of the ice rule, there are only six types of possible vertices, see Figure~\ref{fig:vertexweights}



To each vertical line, we assign a spectral parameter $\mu_j$, and to each horizontal double line, we assign a spectral parameter which is $-\lambda_i$ on the lower part of the double line and shifts to $\lambda_i$ on the upper part. In Figure~\ref{fig:6vdwbcreflend}, we write these parameters at the lines. Also define a fixed boundary parameter $\zeta\in \C$, associated to the reflecting wall at the turns.

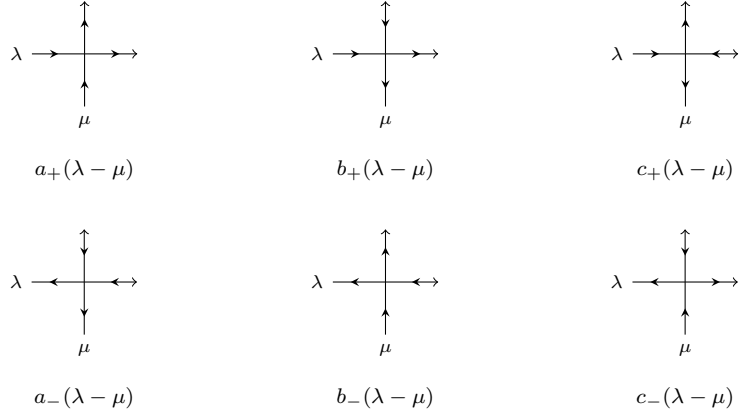
\begin{figure}[t]
\centering
    \subcaptionbox*{}{%
    	\begin{tikzpicture}[scale=0.7, font=\footnotesize]
  	\draw[midarrow={stealth}] (0,1) node[left] {\scriptsize $\lambda$} -- (1,1); 
		\draw[midarrow={stealth reversed}, <-] (2,1) -- (1,1); 
		\draw[midarrow={stealth}] (1,0) node[below] {\scriptsize $\mu$} -- (1,1); 
		\draw[midarrow={stealth reversed}, <-] (1,2) -- (1,1); 
		\draw (1,-1.2) node{$
		a_+(\lambda-\mu)$};
		\end{tikzpicture}
	}\hfil
	\subcaptionbox*{}{%
		\begin{tikzpicture}[scale=0.7, font=\footnotesize]
		\draw[midarrow={stealth}] (0,1) node[left] {\scriptsize $\lambda$} -- (1,1); 
		\draw[midarrow={stealth reversed}, <-] (2,1) -- (1,1); 
		\draw[midarrow={stealth reversed}] (1,0) node[below] {\scriptsize $\mu$} -- (1,1); 
		\draw[midarrow={stealth}, <-] (1,2) -- (1,1); 
		\draw (1,-1.2) node{$
		b_+(\lambda-\mu)$};
		\end{tikzpicture}
	}\hfil
	\subcaptionbox*{}{%
		\begin{tikzpicture}[scale=0.7, font=\footnotesize]
		\draw[midarrow={stealth}] (0,1) node[left] {\scriptsize $\lambda$} -- (1,1); 
		\draw[midarrow={stealth}, <-] (2,1) -- (1,1); 
		\draw[midarrow={stealth reversed}] (1,0) node[below] {\scriptsize $\mu$} -- (1,1); 
		\draw[midarrow={stealth reversed}, <-] (1,2) -- (1,1); 
		\draw (1,-1.2) node{$
		c_+(\lambda-\mu)$};
		\end{tikzpicture}
	}\\
	\vspace{0mm}
    \subcaptionbox*{}{%
		\begin{tikzpicture}[scale=0.7, font=\footnotesize]
		\draw[midarrow={stealth reversed}] (0,1) node[left] {\scriptsize $\lambda$} -- (1,1); 
		\draw[midarrow={stealth}, <-] (2,1) -- (1,1); 
		\draw[midarrow={stealth reversed}] (1,0) node[below] {\scriptsize $\mu$} -- (1,1); 
		\draw[midarrow={stealth}, <-] (1,2) -- (1,1); 
		\draw (1,-1.2) node{$
		a_-(\lambda-\mu)$};
		\end{tikzpicture}
	}\hfil
	\subcaptionbox*{}{%
		\begin{tikzpicture}[scale=0.7, font=\footnotesize]
		\draw[midarrow={stealth reversed}] (0,1) node[left] {\scriptsize $\lambda$} -- (1,1); 
		\draw[midarrow={stealth}, <-] (2,1) -- (1,1); 
		\draw[midarrow={stealth}] (1,0) node[below] {\scriptsize $\mu$} -- (1,1); 
		\draw[midarrow={stealth reversed}, <-] (1,2) -- (1,1); 
		\draw (1,-1.2) node{$
		b_-(\lambda-\mu)$};
		\end{tikzpicture}
	}\hfil
	\subcaptionbox*{}{%
		\begin{tikzpicture}[scale=0.7, font=\footnotesize]
		\draw[midarrow={stealth reversed}] (0,1) node[left] {\scriptsize $\lambda$} -- (1,1); 
		\draw[midarrow={stealth reversed}, <-] (2,1) -- (1,1); 
		\draw[midarrow={stealth}] (1,0) node[below] {\scriptsize $\mu$} -- (1,1); 
		\draw[midarrow={stealth}, <-] (1,2) -- (1,1); 
		\draw (1,-1.2) node{$
		c_-(\lambda-\mu)$};
		\end{tikzpicture}
	}\\
		\vspace{-6mm}
	\caption{The possible vertices and their vertex weights for the 6V model. The spins are indicated by an arrow halfway the edge, where right and up are positive spins, and left and down are negative spins. The vertex weights also depend on the spectral parameters $\lambda$ and $\mu$.}
	\label{fig:vertexweights} 
\end{figure}

\begin{figure}
\centering
\subcaptionbox*{}{%
    \begin{tikzpicture}[baseline={([yshift=-10pt*0.6]current bounding box.center)}, scale=0.7, font=\footnotesize]
		\draw (0.5,0.5) arc (90:270:0.5);
		\draw[midarrow={stealth reversed}] (.5,-.5) -- (1.5,-.5) node[right] {\scriptsize $-\lambda$};
		\draw[midarrow={stealth}, ->] (.5,+.5) -- (1.5,+.5) node[right] {\scriptsize $\lambda$};
	
	
	\draw (1.2,-1.4) node{$
	k_+(\lambda, \zeta)$};
    \end{tikzpicture}
}\hfil
\subcaptionbox*{}{%
    \begin{tikzpicture}[baseline={([yshift=-10pt*0.6]current bounding box.center)}, scale=0.7, font=\footnotesize]
		\draw (0.5,0.5) arc (90:270:0.5);
		\draw[midarrow={stealth}] (.5,-.5) -- (1.5,-.5) node[right] {\scriptsize $-\lambda$};
		\draw[midarrow={stealth reversed},->] (.5,+.5) -- (1.5,+.5) node[right] {\scriptsize $\lambda$};
		
	
	\draw (1.2,-1.4) node{$
	k_-(\lambda, \zeta)$};
    \end{tikzpicture}
		}\hfil
\subcaptionbox*{}{%
    \begin{tikzpicture}[baseline={([yshift=-10pt*0.6]current bounding box.center)}, scale=0.7, font=\footnotesize]
		\draw (0.5,0.5) arc (90:270:0.5);
		\draw[midarrow={stealth}] (.5,-.5) -- (1.5,-.5) node[right] {\scriptsize $-\lambda$};
		\draw[midarrow={stealth},->] (.5,+.5) -- (1.5,+.5) node[right] {\scriptsize $\lambda$};
	
	
	\draw (1.2,-1.4) node{$
	k_c(\lambda, \zeta)$};
    \end{tikzpicture}
}

	\vspace{-6mm}
\caption{The possible boundary configurations and boundary weights for the triangular reflecting end. The weights depend on the spectral parameter $\lambda$ as well as on a boundary parameter $\zeta$.}
    \label{fig:reflectingends}
\end{figure}
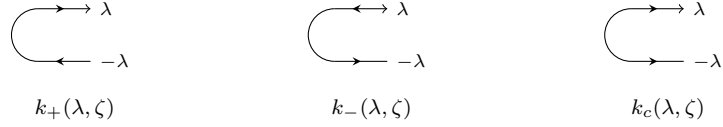

Define
$f(x)=2\sinh(x)$ and let $\gamma\notin 2\pi i\Z$ be a fixed parameter. Then define local weights 
\begin{equation}
\label{vertexweights}
a_\pm(\lambda)=1,\qquad b_\pm(\lambda)=e^{\mp \gamma}\frac{f(\lambda)}{f(\lambda+\gamma)}, \qquad c_\pm(\lambda)=e^{\pm \lambda}\frac{f(\gamma)}{f(\lambda+\gamma)},
\end{equation}
\begin{equation}
\label{turnweights}
k_\pm(\lambda, \zeta)=e^{\zeta\mp\lambda} f(\zeta\pm\lambda),\qquad k_c(\lambda, \zeta)=\varphi f(2\lambda),
\end{equation}
to each vertex and each turn as in Figure~\ref{fig:vertexweights} and Figure~\ref{fig:reflectingends}. Here $\varphi$ is a fixed number. We call 
the turns $k_\pm$ a 'positive' and 'negative' turn respectively, and $k_c$ can be seen as a turn with creation of arrows. 
These choices of weights satisfy the Yang--Baxter equation and the reflection equation with a triangular $K$-matrix where $k_c(\lambda, \zeta)$ is an off-diagonal element (see Section~\ref{subsec:YBErefeq}). Observe that $k_c$ does not depend on $\zeta$, and for $\varphi=0$, we have diagonal reflection. Sometimes 
we will refer to a '$w$~vertex', where $w$ is one of $a_\pm, b_\pm$ or $c_\pm$, meaning a vertex with spin configurations corresponding to weight $w(\lambda)$, for some $\lambda$. Similarly a '$k_\pm$~turn' or '$k_c$~turn' will refer to a turn with weight $k_\pm(\lambda, \zeta)$ or $k_c(\lambda, \zeta)$ respectively. 

The local weight at a vertex with the positive directions up and to the right depends on the spins of the surrounding edges, as well as on the difference between the spectral parameters on the incoming lines from the left and the bottom. Because of the reflecting ends, we need to differentiate between the vertices on the left oriented and the right oriented horizontal lines. The vertices in the right oriented rows are depicted in Figure~\ref{fig:vertexweights}, and the vertices in the left oriented rows are the same, tilted 90 degrees counterclockwise, as in Figure~\ref{fig:lowerrow}. The (local) weight of the vertex in Figure~\ref{fig:upperrow} is $w(\lambda_i-\mu_j)$, and for the vertex in Figure~\ref{fig:lowerrow}, the weight is $w(\mu_j-(-\lambda_i))=w(\lambda_i+\mu_j)$, where $w$ is one of $a_\pm, b_\pm$ or $c_\pm$. 
The (local) boundary weight at each turn depends on the spin on the turning edge, but also on the spectral parameter $\lambda_i$ of the line going through the turn, and on the fixed boundary parameter $\zeta$, as in Figure~\ref{fig:reflectingends}. The weight of a state is the product of all local weights of the vertices and the turns. 

On the three sides without reflecting end, we impose the domain wall boundary conditions, with outgoing spin arrows to the right and ingoing arrows on the upper and lower boundaries. The ice rule implies that $n\geq m$ and that there are $n-m$ turns of type $k_c$. 

The model described above is the six-vertex (6V) model of size $2n\times m$ with DWBC and one partially reflecting end.
We want to find a determinant formula for the partition function 
\begin{equation}
\label{naivepartition}
Z_{n,m}(\boldsymbol\lambda, \boldsymbol\mu)=\sum_\text{state} \text{weight}(\text{state})
\end{equation} of this model, generalizing Tsuchiyas \cite{Tsuchiya1998} partition function for $m=n$. This also generalizes the results of Foda and Zarembo \cite{FodaZarembo2016} from the rational to the trigonometric case.

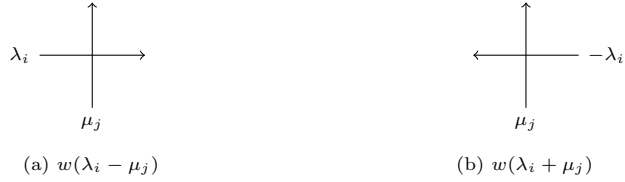
\begin{figure}[t]
\centering
	\subcaptionbox{$w(\lambda_i-\mu_j)$\label{fig:upperrow}}{%
	\begin{tikzpicture}[baseline={(0,0.5)}, scale=0.7]
		\draw[->] (0,1) node[left]{\scriptsize$\lambda_i$} -- (2,1);
		\draw[->] (1,0) node[below]{\scriptsize$\mu_j$} -- (1,2);
		\node at (-1,0) {\phantom{$\bullet$}};
		\node at (3,0) {\phantom{$\bullet$}};
	\end{tikzpicture}
	}\hfil
	\subcaptionbox{$w(\lambda_i+\mu_j)$\label{fig:lowerrow}}{%
	\begin{tikzpicture}[baseline={(0,0.5)}, scale=0.7]
		\draw[<-] (0,1) -- (2,1)  node[right]{\scriptsize$-\lambda_i$};
		\draw[->] (1,0) node[below]{\scriptsize$\mu_j$} -- (1,2);
		\node at (-1,0) {\phantom{$\bullet$}};
		\node at (3,0) {\phantom{$\bullet$}};
	\end{tikzpicture}
	}\\
	\vspace{0mm}
	\caption{The different vertex weights depending on the direction of the row in the 6V model with reflecting end, with spectral parameters $\lambda_i$ and $\mu_j$.}
	\label{fig:nodeupdown}
	\end{figure}

\subsection{The Yang--Baxter equation and the reflection equation}
\label{subsec:YBErefeq}
Define $V$ as a two-dimensional complex vector space. To each line of the lattice we associate a copy of $V$. Given a parameter $\lambda\in\C$, define operators $R(\lambda)\in \text{End}(V\otimes V)$ by
$$
R(\lambda)=
\begin{pmatrix}
    a_+(\lambda) & 0 & 0 & 0 \\
    0 & b_+(\lambda) & c_-(\lambda) & 0 \\
    0 & c_+(\lambda) & b_-(\lambda) & 0 \\
    0 & 0 & 0 & a_-(\lambda)
\end{pmatrix},
$$
with the weights parametrized as in \eqref{vertexweights}.
%
The operator is called the $R$-matrix and satisfies the Yang--Baxter equation (YBE) on $V_1\otimes V_2\otimes V_3$ (where $V_i$ are copies of $V$), i.e.
\begin{multline*}
R_{12}(\lambda_1-\lambda_2)R_{13}(\lambda_1-\lambda_3)R_{23}(\lambda_2-\lambda_3)=R_{23}(\lambda_2-\lambda_3)R_{13}(\lambda_1-\lambda_3)R_{12}(\lambda_1-\lambda_2),
\end{multline*}
where the indices indicate on which spaces the $R$-matrix acts, e.g. 
 $$R_{12}(\lambda_1-\lambda_2)=R(\lambda_1-\lambda_2)\otimes \Id,$$ 
and similarly for $R_{23}$ and $R_{13}$. 
The YBE is depicted in Figure~\ref{fig:YBE}.

\begin{figure}[h]
\centering
\begin{tikzpicture}[baseline={([yshift=-.5*11pt*0.6 +8.5pt]current bounding box.center)}, scale=0.6, font=\scriptsize]
    \pgfmathsetmacro{\csc}{1/sin(130)}
        \draw[->] (-130:1.5*\csc) node[below]{$\lambda_1$} -- (50:1.5*\csc);
        \draw[->] (.5,-1.5) node[below]{$\lambda_2$} -- (.5,1.5);
        \draw[->] (-50:1.5*\csc) node[below]{$\lambda_3$} -- (130:1.5*\csc);
        %
    \end{tikzpicture}
    \ \ = \ \
    \begin{tikzpicture}[baseline={([yshift=-.5*11pt*0.6 +8.5pt]current bounding box.center)}, scale=0.6, font=\scriptsize]
    \pgfmathsetmacro{\csc}{1/sin(130)}
        \draw[->] (-130:1.5*\csc) node[below]{$\lambda_1$} -- (50:1.5*\csc);
        \draw[->] (-.5,-1.5) node[below]{$\lambda_2$} -- (-.5,1.5);
        \draw[->] (-50:1.5*\csc) node[below]{$\lambda_3$} -- (130:1.5*\csc);
        %
    \end{tikzpicture}
\caption{The Yang--Baxter equation.}
\label{fig:YBE}		
\end{figure}
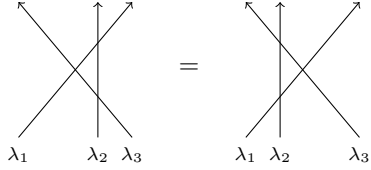

To describe the reflecting boundary, define a triangular operator $K(\lambda, \zeta)\in \text{End}(V)$, by 
$$K(\lambda, \zeta)=
\begin{pmatrix}
    k_+(\lambda, \zeta) & k_c(\lambda, \zeta)  \\
    0 & k_-(\lambda, \zeta)  \\
\end{pmatrix},
$$
with entries parametrized as in \eqref{turnweights}. The operator is called the $K$-matrix and satisfies the reflection equation for the above $R$-matrix on $V_0\otimes V_{0'}$ \cite{Sklyanin1988}, i.e. 
\begin{multline*}
R_{00'}(\lambda-\lambda')K_0(\lambda, \zeta)R_{0'0}(\lambda+\lambda')K_{0'}(\lambda', \zeta)\\
=K_{0'}(\lambda', \zeta)R_{00'}(\lambda+\lambda')K_0(\lambda, \zeta)R_{0'0}(\lambda-\lambda'),
\end{multline*}
where $K_0(\lambda, \zeta)=K(\lambda, \zeta)\otimes \Id$ and $K_{0'}(\lambda, \zeta)=\Id\otimes K(\lambda, \zeta)$, see Figure~\ref{fig:reflectioneq}. 

\begin{figure}[h]
\centering
    \begin{tikzpicture}[baseline={([yshift=-.5*10pt*0.6+3.5pt]current bounding box.center)},
    xscale=0.6, yscale=0.425, font=\scriptsize, triple/.style={postaction={draw,-,shorten >=.05},double,double distance=4pt,-implies}]{
    \pgfmathsetmacro\A{cos(60)}
    \pgfmathsetmacro\B{cos(20)}
        \coordinate(o1) at (0,-.75);
        \coordinate(e1) at ($(60:3.5)+(o1)$);
        \coordinate(o2) at (0,.75);
        \coordinate(e21) at ($(20:2)+(o2)$);
        \coordinate(e22) at ($(-20:2)+(o2)$);
        \coordinate(b1) at (intersection of o1--e1 and o2--e21);
        \coordinate(b2) at (intersection of o1--e1 and o2--e22);
        
        \draw ($(-60:1/\A)+(o1)$) -- (o1) -- (b1);
        \draw[->] (b1) -- (e1);
        \node at ($(-60:1.2/\A)+(o1)$) {${-\lambda'}$};
        
        \draw (e22) -- (b2);
        \draw (b2) -- (o2) -- (b1);
        \draw[->] (b1) -- (e21);
        \node at  ($(-20:2.5)+(o2)$) {${-\lambda}$};
        
        \fill[preaction={fill,white},pattern=north east lines, pattern color=gray] (0,-2.5) rectangle (-.15,2.25); \draw (0,-2.5) -- (0,2.25);

    
				}
    \end{tikzpicture}
     \ \ = \ \
    \begin{tikzpicture}[baseline={([yshift=-.5*10pt*0.6+4pt]current bounding box.center)},
    xscale=0.6, yscale=0.425, font=\scriptsize, triple/.style={postaction={draw,-,shorten >=.05},double,double distance=4pt,-implies}]{
    \pgfmathsetmacro\A{cos(60)}
    \pgfmathsetmacro\B{cos(20)}
    \coordinate(o1) at (0,.5);
    \coordinate(e1) at ($(-60:3.5)+(o1)$);
    \coordinate(o2) at (0,-1);
    \coordinate(e21) at ($(-20:2)+(o2)$);
    \coordinate(e22) at ($(20:2)+(o2)$);
    \coordinate(b1) at (intersection of o1--e1 and o2--e21);
    \coordinate(b2) at (intersection of o1--e1 and o2--e22);
		

        \draw (e1) -- (b1);
        \draw[->] (b1) -- (o1) -- ($(60:1/\A)+(o1)$);
        \node at ($(-60:3.9)+(o1)$) {${-\lambda'}$};
        
        \draw (e21) -- (b1);
        \draw (b1) -- (o2) -- (b2);
        \draw[->] (b2) -- (e22);
        \node at ($(-20:2.5)+(o2)$) {${-\lambda}$};
        
        \fill[preaction={fill,white},pattern=north east lines, pattern color=gray] (0,-2.5) rectangle (-.15,2.25); \draw (0,-2.5) -- (0,2.25);
        

				}
    \end{tikzpicture}
		\vspace{-1mm}
\caption{The reflection equation. 
}
\label{fig:reflectioneq}
\end{figure}
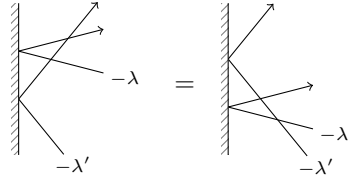

\section{Izergin--Korepin method}
\label{sec:fodazarembo}
Foda and Zarembo \cite{FodaZarembo2016} used the Izergin--Korepin method \cite{Korepin1982, Izergin1987} to find the partition function \eqref{naivepartition} in the rational case, i.e. where $f(x)=x$, which in turn means that all weights are rational. We follow the Izergin--Korepin procedure to find the determinant formula for the partition function in the trigonometric case. We will use the short hand notation $$f(x\pm y)\coloneqq f(x+y)f(x-y).$$ 

\begin{lemma}
\label{lemma:symmetryinlambda}
$Z_{n,m}(\boldsymbol\lambda, \boldsymbol\mu)$ is symmetric in $\lambda_i$ and $\mu_j$ separately. 
\end{lemma}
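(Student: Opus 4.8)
The plan is to prove both symmetries by the standard \emph{train argument}: the Yang--Baxter equation handles the $\mu_j$, while for the $\lambda_i$ one additionally needs the reflection equation to carry the argument through the reflecting wall. First I would treat the $\mu_j$. Pick two adjacent vertical lines carrying $\mu$ and $\mu'$ and glue an extra $R$-matrix $R(\mu'-\mu)$ below the lattice between them. The ingoing DWBC arrows on the bottom, together with the ice rule, force this inserted vertex to be of type $a_+$, so it contributes only the scalar $a_+(\mu'-\mu)$. Applying the YBE (Figure~\ref{fig:YBE}) row by row, I slide the crossing upward through the whole lattice; when it reaches the top the ingoing DWBC arrows there force it to be of type $a_-$, and it detaches with a factor $a_-(\mu'-\mu)$. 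The outcome is the identity
\begin{equation*}
a_+(\mu'-\mu)\,Z_{n,m}(\dots,\mu,\mu',\dots)=a_-(\mu'-\mu)\,Z_{n,m}(\dots,\mu',\mu,\dots).
\end{equation*}
Since $a_+(\lambda)=a_-(\lambda)=1$ by \eqref{vertexweights}, the prefactors agree and are nonzero, so the two partition functions coincide; as adjacent transpositions generate the symmetric group, $Z_{n,m}$ is symmetric in the $\mu_j$.

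For the $\lambda_i$ the bulk YBE is not enough, since each horizontal double line turns at the reflecting wall. Here I would take two adjacent double lines labelled $\lambda$ and $\lambda'$ and attach two extra crossings at the right-hand open end. These can be dragged leftward through the bulk by the YBE and, crucially, pushed through the wall using the reflection equation (Figure~\ref{fig:reflectioneq}) --- precisely the relation satisfied by the chosen triangular $K$-matrix. After passing the wall they re-emerge and are removed at the right end. Exactly as before, the DWBC and the ice rule force the inserted vertices into fixed types, so each side of the resulting identity carries a scalar boundary factor; the two factors differ only in an $a_+(\lambda-\lambda')$ versus an $a_-(\lambda-\lambda')$ weight, all remaining factors being common. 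Using $a_+=a_-$ once more, the common part cancels and the prefactors match, so interchanging $\lambda$ and $\lambda'$ leaves $Z_{n,m}$ invariant, which gives the symmetry in the $\lambda_i$.

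The main obstacle is the bookkeeping at the reflecting end. I must check that the alternating orientations of the two halves of each double line (leftward on the $-\lambda_i$ part, rightward on the $\lambda_i$ part) do not obstruct inserting, sliding and removing the extra crossings, and that the forced vertex and turn types really produce the equal weights $a_+=a_-$ on the two sides rather than genuinely different weights that would block the cancellation. This is exactly the step the reflection equation is designed to handle: it converts the turn at the wall into a relation commuting the two incoming spectral parameters. Verifying that the induced scalar factors are as claimed, independently of the alternating orientations, is the only delicate point, and it reduces to a finite case check dictated by the ice rule at the inserted vertices.
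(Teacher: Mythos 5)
Your proposal is the same train argument the paper uses: an $a_+$ vertex inserted at the bottom is slid up by the YBE and removed at the top as an $a_-$ vertex for the $\mu_j$, and two extra crossings at the right end are pulled through via the YBE and the reflection equation for the $\lambda_i$, with the boundary factors (a common $b_+(\lambda+\lambda')$ and an $a_+$ versus $a_-$) cancelling because $a_+=a_-\neq 0$. The approach and the key cancellations match the paper's proof exactly.
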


To prove this we will use the so called train argument. 

\begin{proof}
Consider two adjacent vertical lines with spectral parameters $\mu$ and $\mu'$. Insert an extra vertex below the lattice, as in Figure~\ref{fig:rttdwbc}. Since we have DWBC, this will be a vertex with weight $a_+(\mu-\mu')$. By the YBE, the extra vertex can be moved through the whole lattice to end up on the top, where it can be removed. 
On the top, the extra vertex has the weight $a_-(\mu-\mu')$. Since $a_+(\mu)=a_-(\mu)\neq 0$ for all $\mu$, these factors cancel. Hence this procedure switches $\mu$ and $\mu'$. 

To prove the symmetry in the $\lambda_i$'s, we add two extra vertices on the right, as in Figure~\ref{fig:globalreflectioneq}. The vertices can then be pulled through each other in a similar way as before, using the YBE and the reflection equation. Because of the boundary conditions and the ice rule, the extra vertices give rise to two extra factors on each side of the equation, see Figure~\ref{fig:globalreflectioneq}. Again since $a_+(\lambda)=a_-(\lambda)\neq 0$, the extra factors cancel. In this manner we can pairwise switch spectral parameters on adjacent lines, which proves the lemma. 
\end{proof}

\begin{figure}
\centering
	$a_+(\mu'-\mu)\ \times$
	\begin{tikzpicture}[baseline={([yshift=-.5*10pt*0.6+7pt]current bounding box.center)},
	scale=0.6,font=\small,]{
		\draw[<-] (0,1) -- (3,1);
		\draw[->] (0,2) -- (3,2);
		\draw[<-] (0,4) -- (3,4);
		\draw[->] (0,5) -- (3,5);
		\draw[midarrow={stealth}] (1,0) node[below]{$\mu\vphantom{'}$} -- (1,1.2); 
		\draw (1.2,1)  -- (2.6,1); 
		\draw[midarrow={stealth}] (2,0) node[below]{$\mu'$} -- (2,1.2); 
		\draw (1.2,2)  -- (2.6,2); 
		
		\draw(1,3.5) -- (1,5.1);
		\draw[midarrow={stealth reversed},->] (1,5.1) -- (1,6);
		\draw(2,3.5) -- (2,5.1);
		\draw[midarrow={stealth reversed},->] (2,5.1) -- (2,6);
		\draw (2,0.9) -- (2,2.6);
		\draw (1,0.9) -- (1,2.6);
		
		\foreach \x in {-1,...,1} \draw (1,3+.2*\x) node{$\cdot\mathstrut$}; 
		\foreach \x in {-1,...,1} \draw (2,3+.2*\x) node{$\cdot\mathstrut$}; 
	}
	\end{tikzpicture}
	\ \ = \ 
	\begin{tikzpicture}[baseline={([yshift=-.5*10pt*0.6+11pt]current bounding box.center)}, 
	scale=0.6, font=\small,]{
		\draw[<-] (0,1) -- (3,1);
		\draw[->] (0,2) -- (3,2);
		\draw[<-] (0,4) -- (3,4);
		\draw[->] (0,5) -- (3,5);
		\draw(1,3.4) -- (1,5.1);
		\draw[midarrow={stealth reversed},->] (1,5.1) -- (1,6);
		\draw(2,3.4) -- (2,5.1);
		\draw[midarrow={stealth reversed},->] (2,5.1) -- (2,6);
		
		\draw[latearrow={stealth}] (1,-0.5) node[below]{$\mu'$} -- (1.5,0);
		\draw[earlyarrow={stealth}, rounded corners=5pt] (1.5,0) -- (2,0.5) -- (2,0.9);
		\draw (2,0.9) -- (2,2.6);
		\draw[latearrow={stealth}] (2,-0.5) node[below]{$\mu\vphantom{'}$} -- (1.5,0);
		\draw[earlyarrow={stealth}, rounded corners=5pt] (1.5,0) -- (1,0.5) -- (1,0.9);
		\draw (1,0.9) -- (1,2.6);
		
		\foreach \x in {-1,...,1} \draw (1,3+.2*\x) node{$\cdot\mathstrut$}; 
		\foreach \x in {-1,...,1} \draw (2,3+.2*\x) node{$\cdot\mathstrut$};
	}
	\end{tikzpicture}
	\ \, 
	\\
	= \ 
	\begin{tikzpicture}[baseline={([yshift=-.5*10pt*0.6+3pt]current bounding box.center)},
	scale=0.6,font=\small,]{
		\draw[<-] (0,1) -- (3,1);
		\draw[->] (0,2) -- (3,2);
		\draw[<-] (0,4) -- (3,4);
		\draw[->] (0,5) -- (3,5);
		\draw[midarrow={stealth}] (1,0) node[below]{$\mu'$} -- (1,1.2); 
		\draw (1,1.2)  -- (1,2.6); 
		\draw[midarrow={stealth}] (2,0) node[below]{$\mu\vphantom{'}$} -- (2,1.2); 
		\draw (2,1.2)  -- (2,2.6); 
		
		\draw[rounded corners=5pt] (1,3.4)  -- (1,5); 
		\draw[latearrow={stealth reversed}, rounded corners=5pt] (1,5)  -- (1,5.5) -- (1.6,6.1); 
		\draw[midarrow={stealth reversed},->] (1.6,6.1) -- (2,6.5); 
		\draw[rounded corners=5pt] (2,3.4)  -- (2,5); 
		\draw[latearrow={stealth reversed}, rounded corners=5pt] (2,5)  -- (2,5.5) -- (1.4,6.1); 
		\draw[midarrow={stealth reversed},->] (1.4,6.1) -- (1,6.5); 
		
		\foreach \x in {-1,...,1} \draw (1,3+.2*\x) node{$\cdot\mathstrut$}; 
		\foreach \x in {-1,...,1} \draw (2,3+.2*\x) node{$\cdot\mathstrut$};
	}
	\end{tikzpicture}
	\ \ = \ 
	\begin{tikzpicture}[baseline={([yshift=-.5*10pt*0.6+7pt]current bounding box.center)}, 
	scale=0.6, font=\small,]{
	  \draw[<-] (0,1) -- (3,1);
		\draw[->] (0,2) -- (3,2);
		\draw[<-] (0,4) -- (3,4);
		\draw[->] (0,5) -- (3,5);
		\draw[midarrow={stealth}] (1,0) node[below]{$\mu'$} -- (1,1.2); 
		\draw (1,1.2)  -- (1,2.6); 
		\draw[midarrow={stealth}] (2,0) node[below]{$\mu\vphantom{'}$} -- (2,1.2); 
		\draw (2,1.2)  -- (2,2.6); 
		
		\draw(1,3.4) -- (1,5.1);
		\draw[midarrow={stealth reversed},->] (1,5.1) -- (1,6);
		\draw(2,3.4) -- (2,5.1);
		\draw[midarrow={stealth reversed},->] (2,5.1) -- (2,6);
		
		\foreach \x in {-1,...,1} \draw (1,3+.2*\x) node{$\cdot\mathstrut$}; 
		\foreach \x in {-1,...,1} \draw (2,3+.2*\x) node{$\cdot\mathstrut$};
	}
	\end{tikzpicture}
	\ \, 
	$\times \ a_-(\mu'-\mu)$
\caption{The partition function is symmetric in the $\mu_j$'s. Because of the DWBC, an extra vertex can be moved through the lattice by using the YBE.}
\label{fig:rttdwbc}	
\end{figure}
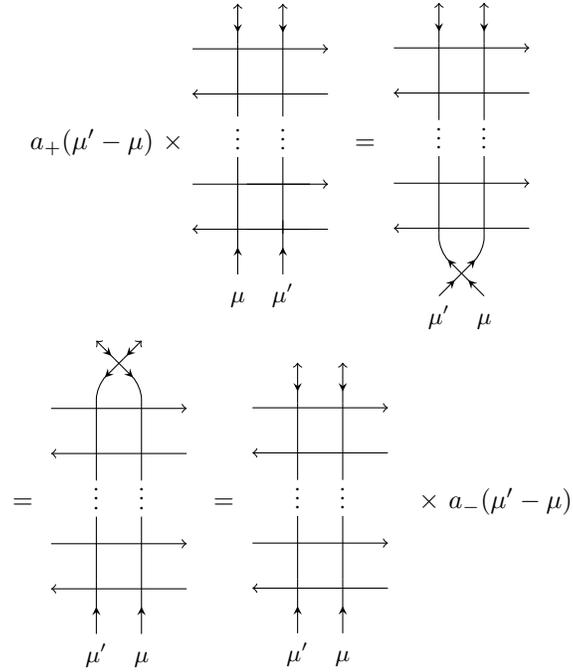

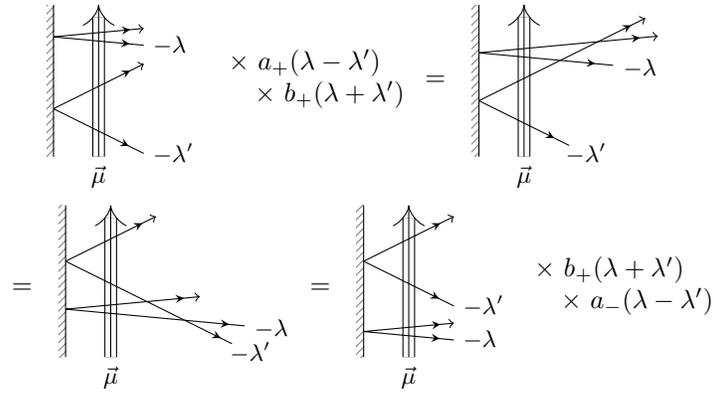
\begin{figure}
\centering
	\begin{tikzpicture}[baseline={([yshift=-.5*10pt*0.6+7.5pt]current bounding box.center)},
    xscale=0.6, yscale=0.425, font=\small, triple/.style={postaction={draw,-,shorten >=.05},double,double distance=4pt,-implies}]{
    \pgfmathsetmacro\A{cos(35)}
    \pgfmathsetmacro\B{cos(7.5)}
        \begin{scope}[shift={(0,-1)}]
            \draw[midarrow={stealth reversed}] (-35:2/\A) node [right] {${-\lambda'}$} -- (-35:1/\A);
            \draw (-35:1/\A)-- (0,0) -- (35:1/\A);
            \draw[latearrow={stealth},->] (35:1/\A) -- (35:2/\A);
        \end{scope}
        \begin{scope}[shift={(0,1.25)}]
            \draw[midarrow={stealth reversed}] (-7.5:2/\B) node [right] {${-\lambda}$} -- (-7.5:1/\B);
            \draw (-7.5:1/\B)-- (0,0) -- (7.5:1/\B);
            \draw[latearrow={stealth},->] (7.5:1/\B) -- (7.5:2/\B);
        \end{scope}
        \fill[preaction={fill,white},pattern=north east lines, pattern color=gray] (0,-2.5) rectangle (-.15,2.25); \draw (0,-2.5) -- (0,2.25);
        \draw[triple] (1,-2.5) node[below]{\small $\vec{\mu}$} -- (1,2.25);
        
        \draw ($(-7.5:1/\B)+(0,1.25)$) +(-7.5:-4pt/\B) -- +(-7.5:4pt/\B);
        \draw ($(7.5:1/\B)+(0,1.25)$) +(7.5:-4pt/\B) -- +(7.5:4pt/\B);
        \draw ($(-35:1/\A)+(0,-1)$) +(-35:-4pt/\A) -- +(-35:4pt/\A);
        \draw ($(35:1/\A)+(0,-1)$) +(35:-4pt/\A) -- +(35:4pt/\A);
				}
    \end{tikzpicture}
		  \! $\begin{array}{l} \times \ a_+(\lambda-\lambda') \\ \quad \times \ b_+(\lambda+\lambda') \end{array} =$ \ \
    \begin{tikzpicture}[baseline={([yshift=-.5*10pt*0.6+7.5pt]current bounding box.center)},
    xscale=0.6, yscale=0.425, font=\small, triple/.style={postaction={draw,-,shorten >=.05},double,double distance=4pt,-implies}]{
    \pgfmathsetmacro\A{cos(35)}
    \pgfmathsetmacro\B{cos(7.5)}
        \coordinate(o1) at (0,-.75);
        \coordinate(e1) at ($(35:4.5)+(o1)$);
        \coordinate(o2) at (0,.75);
        \coordinate(e21) at ($(7.5:4)+(o2)$);
        \coordinate(e22) at ($(-7.5:3)+(o2)$);
        \coordinate(b1) at (intersection of o1--e1 and o2--e21);
        \coordinate(b2) at (intersection of o1--e1 and o2--e22);
        
        \draw[midarrow={stealth reversed}] ($(-35:2/\A)+(o1)$) -- ($(-35:1/\A)+(o1)$);
        \draw ($(-35:1/\A)+(o1)$) -- (o1) -- (b1);
        \draw[latearrow={stealth},->] (b1) -- (e1);
        \node at ($(-35:2.375/\A)+(o1)$) {${-\lambda'}$};
        
        \draw[midarrow={stealth reversed}] (e22) -- (b2);
        \draw (b2) -- (o2) -- (b1);
        \draw[latearrow={stealth},->] (b1) -- (e21);
        \node at  ($(-7.5:3.6)+(o2)$) {${-\lambda}$};
        
        \fill[preaction={fill,white},pattern=north east lines, pattern color=gray] (0,-2.5) rectangle (-.15,2.25); \draw (0,-2.5) -- (0,2.25);
        \draw[triple] (1,-2.5) node[below]{$\vec{\mu}$} -- (1,2.25);

        \draw ($(-7.5:1/\B)+(o2)$) +(-7.5:-4pt/\B) -- +(-7.5:4pt/\B);
        \draw ($(7.5:1/\B)+(o2)$) +(7.5:-4pt/\B) -- +(7.5:4pt/\B);
        \draw ($(-35:1/\A)+(o1)$) +(-35:-4pt/\A) -- +(-35:4pt/\A);
        \draw ($(35:1/\A)+(o1)$) +(35:-4pt/\A) -- +(35:4pt/\A);
				}
    \end{tikzpicture}
		  \\
     \quad = \ \
    \begin{tikzpicture}[baseline={([yshift=-.5*10pt*0.6+5pt]current bounding box.center)},
    xscale=0.6, yscale=0.425, font=\small, triple/.style={postaction={draw,-,shorten >=.05},double,double distance=4pt,-implies}]{
    \pgfmathsetmacro\A{cos(35)}
    \pgfmathsetmacro\B{cos(7.5)}
    \coordinate(o1) at (0,.5);
    \coordinate(e1) at ($(-35:4.5)+(o1)$);
    \coordinate(o2) at (0,-1);
    \coordinate(e21) at ($(-7.5:4)+(o2)$);
    \coordinate(e22) at ($(7.5:3)+(o2)$);
    \coordinate(b1) at (intersection of o1--e1 and o2--e21);
    \coordinate(b2) at (intersection of o1--e1 and o2--e22);
        
        \draw[earlyarrow={stealth reversed}] (e1) -- (b1);
        \draw (b1) -- (o1) -- ($(35:1/\A)+(o1)$);
        \draw[latearrow={stealth},->] ($(35:1/\A)+(o1)$) -- ($(35:2/\A)+(o1)$);
        \node at ($(-35:5)+(o1)$) {${-\lambda'}$};
        
        \draw[earlyarrow={stealth reversed}] (e21) -- (b1);
        \draw (b1) -- (o2) -- (b2);
        \draw[latearrow={stealth},->] (b2) -- (e22);
        \node at ($(-7.6:4.6)+(o2)$) {${-\lambda}$};
        
        \fill[preaction={fill,white},pattern=north east lines, pattern color=gray] (0,-2.5) rectangle (-.15,2.25); \draw (0,-2.5) -- (0,2.25);
        \draw[triple] (1,-2.5) node[below]{$\vec{\mu}$} -- (1,2.25);
        
        \draw ($(-7.5:1/\B)+(o2)$) +(-7.5:-4pt/\B) -- +(-7.5:4pt/\B);
        \draw ($(7.5:1/\B)+(o2)$) +(7.5:-4pt/\B) -- +(7.5:4pt/\B);
        \draw ($(-35:1/\A)+(o1)$) +(-35:-4pt/\A) -- +(-35:4pt/\A);
        \draw ($(35:1/\A)+(o1)$) +(35:-4pt/\A) -- +(35:4pt/\A);
				}
    \end{tikzpicture}
    \! $=$ \ \
    \begin{tikzpicture}[baseline={([yshift=-.5*10pt*0.6+5pt]current bounding box.center)},
    xscale=0.6, yscale=0.425, font=\small, triple/.style={postaction={draw,-,shorten >=.05},double,double distance=4pt,-implies}]{
    \pgfmathsetmacro\A{cos(35)}
    \pgfmathsetmacro\B{cos(7.5)}
        \begin{scope}[shift={(0,.5)}]
            \draw[midarrow={stealth reversed}] (-35:2/\A) node [right] {${-\lambda'}$} -- (-35:1/\A);
            \draw (-35:1/\A)-- (0,0) -- (35:1/\A);
            \draw[latearrow={stealth},->] (35:1/\A) -- (35:2/\A);
        \end{scope}
        \begin{scope}[shift={(0,-1.7)}]
            \draw[midarrow={stealth reversed}] (-7.5:2/\B) node [right] {${-\lambda}$} -- (-7.5:1/\B);
            \draw (-7.5:1/\B)-- (0,0) -- (7.5:1/\B);
            \draw[latearrow={stealth},->] (7.5:1/\B) -- (7.5:2/\B);
        \end{scope}
        \fill[preaction={fill,white},pattern=north east lines, pattern color=gray] (0,-2.5) rectangle (-.15,2.25); \draw (0,-2.5) -- (0,2.25);
        \draw[triple] (1,-2.5) node[below]{$\vec{\mu}$} -- (1,2.25);
        
        \draw ($(-7.5:1/\B)+(0,-1.7)$) +(-7.5:-4pt/\B) -- +(-7.5:4pt/\B);
        \draw ($(7.5:1/\B)+(0,-1.7)$) +(7.5:-4pt/\B) -- +(7.5:4pt/\B);
        \draw ($(-35:1/\A)+(0,.5)$) +(-35:-4pt/\A) -- +(-35:4pt/\A);
        \draw ($(35:1/\A)+(0,.5)$) +(35:-4pt/\A) -- +(35:4pt/\A);

				}
    \end{tikzpicture}
    $\begin{array}{l} \times \ b_+(\lambda+\lambda') \\ \quad \times \ a_-(\lambda-\lambda') \end{array}$
\caption{The partition function is symmetric in the $\lambda_i$'s. Two adjacent double rows can switch places by adding extra vertices that can be moved through the lattice by using the YBE and the reflection equation. The triple arrows should be understood as $n$ vertical arrows.}
\label{fig:globalreflectioneq}
\end{figure}

\begin{lemma}
\label{lemma:2n1}
For any $1\leq j\leq m$, the function $$e^{(2n-2)\mu_j}\prod_{i=1}^n\prod_{j=1}^m f(\lambda_i\pm\mu_j+\gamma) Z_{n,m}(\boldsymbol\lambda, \boldsymbol\mu)$$ is a polynomial of degree $2n-1$ in $e^{2\mu_j}$. 
\end{lemma}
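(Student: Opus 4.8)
The plan is to isolate the dependence of $Z_{n,m}$ on the single variable $\mu_j$ and to show that, once the denominators of the vertex weights along the $j$-th vertical line are cleared, the result is an honest Laurent polynomial in $w:=e^{\mu_j}$ whose monomials all have even degree and lie in a narrow window. First I would observe that $\mu_j$ enters only through the $2n$ vertices on the $j$-th vertical line: the turns $k_\pm,k_c$ depend on $\lambda_i$ and $\zeta$ only, and every other vertex involves some $\mu_{j'}$ with $j'\neq j$. The two vertices on this line in the double row $i$ carry weights $w(\lambda_i-\mu_j)$ (upper, right-oriented) and $w(\lambda_i+\mu_j)$ (lower, left-oriented), so whenever they are of type $b$ or $c$ their denominators are $f(\lambda_i-\mu_j+\gamma)$ and $f(\lambda_i+\mu_j+\gamma)$ respectively. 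The factor $\prod_{i=1}^n f(\lambda_i\pm\mu_j+\gamma)$ in the statement thus supplies exactly one matching factor for each of these $2n$ vertices, while the remaining factors with $j'\neq j$ are constants in $\mu_j$.

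Next I would compute, vertex by vertex, the effect of multiplying each weight by its matching $f(\cdot+\gamma)$. With $w=e^{\mu_j}$, one finds that $a_\pm$ (cleared to $f(\lambda_i\mp\mu_j+\gamma)$) and $b_\pm$ (cleared to $e^{\mp\gamma}f(\lambda_i\mp\mu_j)$) each become a hyperbolic sine, i.e. a sum of the two monomials $w^{+1}$ and $w^{-1}$, whereas $c_\pm$ becomes a single exponential and hence a single monomial $w^{+1}$ or $w^{-1}$. In particular every cleared vertex weight involves only odd powers of $w$, so in the product over all $2n$ vertices each monomial carries a power of $w$ equal to a sum of $2n$ odd integers, hence even. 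This already yields the parity statement: up to an overall power of $w$ the cleared partition function is a polynomial in $w^2=e^{2\mu_j}$.

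To pin down the degree I would use the ice rule along the line. Under the DWBC the vertical spin is $+$ at the bottom and $-$ at the top and is flipped precisely at the $c$-vertices, so the number of descending $c$-vertices (flip $+\to-$ going upward) exceeds the number of ascending ones by exactly one; in particular there is at least one $c$-vertex, so $n_c\ge 1$. Reading off the six configurations of Figure~\ref{fig:vertexweights}, together with their $90^\circ$ rotations for the left-oriented rows, shows that in \emph{both} orientations a descending $c$-vertex contributes the monomial $w^{+1}$ and an ascending one the monomial $w^{-1}$. Hence for a state with $n_c$ vertices of type $c$ on the line, the top power of $w$ in its cleared weight equals $(2n-n_c)+(\#\text{desc}-\#\text{asc})=2n-n_c+1\le 2n$ and the bottom power equals $-(2n-n_c)+(\#\text{desc}-\#\text{asc})=-2n+n_c+1\ge -2n+2$, using $n_c\ge 1$ and $\#\text{desc}-\#\text{asc}=1$. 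Summing over states, the cleared $Z_{n,m}$ has all its (even) powers of $w$ in the window $[-2n+2,\,2n]$, so after multiplying by $e^{(2n-2)\mu_j}=w^{2n-2}$ we obtain a genuine polynomial in $e^{2\mu_j}$ of degree at most $2n-1$.

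Finally, for the degree to be exactly $2n-1$ I would note that the coefficient of the top power $w^{2n}$ receives contributions only from the single-flip states ($n_c=1$), each contributing the product of the $w^{+1}$-coefficients of its $2n-1$ cleared $a/b$ weights with that of its unique descending $c$-weight; these are explicit nonzero functions of the $\lambda_i$ and $\gamma$ that do not cancel for generic parameters. The main obstacle is precisely this control of the leading coefficient, but it is inessential for what follows: the induction using Lemma~\ref{lemma:inductionstep} only requires that the expression be a polynomial in $e^{2\mu_j}$ of degree \emph{at most} $2n-1$, since such a polynomial is then determined by its values at the $2n$ points $\mu_j=\pm\lambda_i$.
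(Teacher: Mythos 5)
Your proposal is correct and follows essentially the same route as the paper's proof: clear the denominators of the weights along the $j$-th vertical line using the factor $\prod_{i}f(\lambda_i\pm\mu_j+\gamma)$, observe that the cleared $a_\pm,b_\pm$ weights each contribute $e^{\pm\mu_j}$-pairs while each $c_\pm$ contributes a single exponential, and use the ice rule to guarantee at least one $c$-vertex on the line so that at most $2n-1$ two-term factors remain. Your version is somewhat more careful than the paper's (explicit parity of the exponents, the window $[-2n+2,2n]$ of powers, and the remark that only the bound $\deg\le 2n-1$ is needed for the interpolation argument), but the underlying idea is identical.
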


\begin{proof}
Multiply each vertex weight \eqref{vertexweights} by $f(\lambda+\gamma)$. This is the same as to multiply the partition function by
$$\prod_{i=1}^n\prod_{j=1}^m f(\lambda_i\pm\mu_j+\gamma).$$
Then the new weights are
$$\hat a_\pm(\lambda)=f(\lambda+\gamma),\quad \hat b_\pm(\lambda)=e^{\mp \gamma}f(\lambda), \quad \hat c_\pm(\lambda)=e^{\pm\lambda} f(\gamma).$$ For a given $j$, all vertices involving $\mu_j$ lie along the same vertical line. Along a vertical line the spin must be changed at least once from up spin on the bottom of the lattice to down spin on the top. Each vertical line thus contains at least one $\hat c_+(\lambda+\mu)$ or one $\hat c_-(\lambda-\mu)$ vertex since these are the only combinations changing the spin in this way, see Figure~\ref{fig:vertexweights}. These weights each contain a factor $e^{\mu_j}$. Each additional $\hat c_\pm$ vertex yields a factor $e^{\pm \mu_j}$. Each $\hat a_\pm$ and $\hat b_\pm$ vertex gives rise to a factor $e^{\mu_j+x}-e^{-\mu_j-x}$, where $x$ possibly contains $\lambda_i$ and $\gamma$. There can be a maximum of $2n-1$ vertices of type $\hat a_\pm$ and $\hat b_\pm$ involving $\mu_j$. We multiply $\prod_{i=1}^n\prod_{j=1}^m f(\lambda_i\pm\mu_j+\gamma)Z_{n,m}(\boldsymbol\lambda, \boldsymbol\mu)$ by $e^{(2n-2)\mu_j}$ to get a polynomial of degree $2n-1$ in $e^{2\mu_j}$.
\end{proof}

\begin{lemma}
\label{lemma:inductionstep}
$Z_{n,m}(\boldsymbol{\lambda}, \boldsymbol{\mu})$ satisfies a recurrence relation when setting $\mu_k=\pm \lambda_l$, namely,
\begin{align*}
&Z_{n,m}(\boldsymbol{\lambda}, \boldsymbol{\mu})|_{\mu_k=\pm\lambda_l}= e^{-n\gamma}e^{\zeta+\mu_k} f(\zeta-\mu_k)
\prod_{i=1}^n \frac{f(\lambda_i+\lambda_l)}{f(\lambda_i+\lambda_l+\gamma)} \times Z_{n-1, m-1}(\boldsymbol{\hat\lambda_l}, \boldsymbol{\hat\mu_k}),
\end{align*}
where $$\boldsymbol{\hat\lambda_l}=(\lambda_1, \dots, \hat\lambda_l, \dots, \lambda_n), \qquad \boldsymbol{\hat\mu_k}=(\mu_1, \dots, \hat\mu_k, \dots, \mu_m),$$
and $\hat\lambda_l$ indicates that the variable $\lambda_l$ is omitted, and similarly $\hat\mu_k$ indicates that the variable $\mu_k$ is omitted.
\end{lemma}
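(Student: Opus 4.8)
The plan is to prove the recurrence by a \emph{freezing} (corner) argument: specializing $\mu_k=\pm\lambda_l$ forces a boundary strip of the lattice into a unique configuration, whose weights factor out as the stated prefactor, while the remaining vertices reassemble into the partition function of the $2(n-1)\times(m-1)$ model. By Lemma~\ref{lemma:symmetryinlambda} the partition function is separately symmetric in the $\lambda_i$ and in the $\mu_j$, so it suffices to treat the two extremal corners. For $\mu_k=-\lambda_l$ I would choose $k,l$ so that $\mu_k$ is the rightmost column and $-\lambda_l$ the bottom row, and for $\mu_k=+\lambda_l$ the corresponding top-right corner; the conclusion for arbitrary $k,l$ then follows by symmetry.

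The essential input is that $b_\pm(0)=e^{\mp\gamma}f(0)/f(\gamma)=0$, since $f(0)=0$. In the case $\mu_k=-\lambda_l$, the vertex where the $k$-th column meets the lower part of the $l$-th double row carries the argument $\lambda_l+\mu_k=0$ (recall from Figure~\ref{fig:lowerrow} that a lower-row vertex has weight $w(\lambda_i+\mu_j)$). The ice rule together with the DWBC leaves only a $b$- or $c$-type vertex at this corner, and since the $b$ weight vanishes at argument $0$, the corner must be a $c$-vertex, contributing $c_\pm(0)=f(\gamma)/f(\gamma)=1$. Once this corner vertex is fixed, the ice rule determines the spins along both horizontal lines of the $l$-th double row and along the whole $k$-th column, freezing this strip uniquely; the complementary vertices form a $2(n-1)\times(m-1)$ lattice with DWBC and one partially reflecting end, whose weighted sum is precisely $Z_{n-1,m-1}(\boldsymbol{\hat\lambda_l},\boldsymbol{\hat\mu_k})$.

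It then remains to collect the frozen weights and match them to the prefactor. The turn on the $l$-th row is forced to be a $k_+$ turn, and by \eqref{turnweights} its weight $k_+(\lambda_l,\zeta)=e^{\zeta-\lambda_l}f(\zeta+\lambda_l)$ equals $e^{\zeta+\mu_k}f(\zeta-\mu_k)$ after the substitution $\mu_k=-\lambda_l$. The vertices where the frozen $k$-th column crosses the \emph{upper} part of each double row $i$ carry argument $\lambda_i-\mu_k=\lambda_i+\lambda_l$ and are all forced to be $b_+$ vertices, so that
\[
\prod_{i=1}^n b_+(\lambda_i+\lambda_l)=e^{-n\gamma}\prod_{i=1}^n\frac{f(\lambda_i+\lambda_l)}{f(\lambda_i+\lambda_l+\gamma)}.
\]
All remaining frozen vertices are $a$-vertices, which contribute $1$ by \eqref{vertexweights}. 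Multiplying these factors reproduces exactly the claimed prefactor. The case $\mu_k=+\lambda_l$ is handled symmetrically at the top-right corner, where the forced turn is instead a $k_-$ turn with $k_-(\lambda_l,\zeta)=e^{\zeta+\lambda_l}f(\zeta-\lambda_l)=e^{\zeta+\mu_k}f(\zeta-\mu_k)$, and the frozen column crosses the lower parts of the double rows with the same arguments $\lambda_i+\lambda_l$, yielding the identical $b_+$ product; hence both specializations give the same right-hand side.

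The main obstacle I anticipate is the bookkeeping of the frozen spin configuration: one must verify from the ice rule and the imposed boundary arrows that every frozen horizontal crossing is genuinely a $b_+$ vertex, rather than $b_-$ (which would replace $e^{-n\gamma}$ by $e^{+n\gamma}$ and spoil the sign of the exponent), and that the single corner vertex and all other crossings are $a$- or $c$-type contributing $1$. This sign is exactly where the \emph{alternating orientations} of the double rows enter, as flagged in the introduction, and it is what makes the trigonometric phases cancel cleanly in this model. Tracking the orientation of each forced arrow uniformly in $i$ along the whole strip is the delicate part; the subsequent algebraic simplification of the weight product is then routine.
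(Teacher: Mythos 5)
Your proposal is correct and follows essentially the same route as the paper: specialize $\mu_k=\pm\lambda_l$ at a corner, use $b(0)=0$ to force a $c$-vertex of weight $1$ there, let the ice rule freeze the $k$-th column and $l$-th double row, and identify the complement with the $2(n-1)\times(m-1)$ partition function. Your explicit bookkeeping of the frozen weights (the $k_\pm$ turn giving $e^{\zeta+\mu_k}f(\zeta-\mu_k)$, the $n$ forced $b_+$ crossings giving $e^{-n\gamma}\prod_i f(\lambda_i+\lambda_l)/f(\lambda_i+\lambda_l+\gamma)$, and the remaining $a_\pm$ vertices giving $1$) matches what the paper conveys via Figure~\ref{lambdaneqmu1}, and the appeal to Lemma~\ref{lemma:symmetryinlambda} to reduce to the two extremal corners is exactly the paper's argument.
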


\begin{proof}
The vertices marked with a red dot in Figure~\ref{lambdaneqmu1} can only be $b_+$ or $c_\pm$ vertices, because of the ice rule and the DWBC. Specialize $\mu_1=-\lambda_1$ as in the left lattice. Now if the vertex at the bottom right is of type $b_+$, it has zero weight, and does not contribute to the partition function. Hence the vertex must be a $c_+$ vertex. Then the ice rule determines the spins on the two horizontal bottom lines and the rightmost vertical line, i.e. the vertices within the frozen region (the area marked with a blue dotted line) are determined uniquely. The part of the lattice outside the frozen region is then a lattice with DWBC of size $2(n-1)\times (m-1)$. This yields a recursion relation for the partition function. Likewise, specializing $\mu_1=\lambda_n$, as in the right lattice, forces the vertex marked with a red dot to be a $c_-$ vertex, and the ice rule determines the rest of the vertex weights within the frozen region. Lemma~\ref{lemma:symmetryinlambda} yields that the recursion relations hold for $\mu_k=\pm\lambda_l$ for any $1\leq l\leq n$, $1\leq k\leq m$. 
\end{proof}

\begin{figure}[tb]
\centering
\subcaptionbox*{}{%
\begin{tikzpicture}[baseline={([yshift=-.5*10pt*0.6]current bounding box.center)}, scale=0.8, font=\scriptsize]
	\foreach \y in {1,2,4} {
		\draw[midarrow={stealth}] (5.55,1.5*\y-.25-.38) -- +(0.01,0);
		\draw[midarrow={stealth}] (5.55,1.5*\y-.25+.38) -- +(0.01,0);
		
		\draw (3, 1.5*\y-.25-.38) -- +(3, 0);
		\draw (0.38, 1.5*\y-.25-.38) -- +(1.62, 0);
		\draw[->] (3, 1.5*\y-.25+.38) -- +(3, 0);
		\draw[-] (0.38, 1.5*\y-.25+.38) -- +(1.62, 0);
		
		\foreach \x in {-1,...,1} \draw (2.5+.2*\x, 1.5*\y-.25-.38) node{$\cdot\mathstrut$};
		\foreach \x in {-1,...,1} \draw (2.5+.2*\x, 1.5*\y-.25+.38) node{$\cdot\mathstrut$};
		
		\draw (0.38,1.5*\y-.25+.38) arc (90:270:0.38);
	}
		
	\node[anchor=west] at (6, 1.5-.25-.38) {$-\lambda_1$};
	\node[anchor=west] at (6, 1.5*2-.25-.38) {$-\lambda_2$};
	\node[anchor=west] at (6, 1.5*4-.25-.38) {$-\lambda_n$};
		
	\foreach \x in {1,4,5} {
		\draw (\x,0) -- +(0,4.5-0.25-.38); 
		\draw (\x,4.5-0.25+.38) -- +(0,1.63+0.25-0.38); 
			\draw[midarrow={stealth}] (\x,0.55) -- +(0,0.01);
		\draw[->] (\x,6.13)  -- +(0,.87);	
			\draw[midarrow={stealth reversed}] (\x,6.13+.43)  -- +(0,0.01);	
			
		\foreach \y in {-1,...,1} \draw (\x, 4.5-0.25+.2*\y) node{$\cdot\mathstrut$};
	}
	
	\node at (1,-0.3) {${\mu_m}$};
	\node at (4,-0.3) {${\mu_2}$};
	\node at (5,-0.3) {${-\lambda_{1}}$};
	
		\draw[midarrow={stealth reversed}] (.55,1.25-.38) -- +(0.01,0);
		\draw[midarrow={stealth}] (.55,1.25+.38) -- +(0.01,0);
		
		\draw [midarrow={stealth reversed}] (1.55, 1.5-.25-.38) -- +(0.01, 0);
		\draw [midarrow={stealth reversed}] (3.55, 1.5-.25-.38) -- +(0.01, 0);
		\draw [midarrow={stealth reversed}] (4.55, 1.5-.25-.38) -- +(0.01, 0);
		\draw [midarrow={stealth}](1.55, 1.5-.25+.38) -- +(0.01, 0);
		\draw [midarrow={stealth}] (3.55, 1.5-.25+.38) -- +(0.01, 0);
		\draw [midarrow={stealth}] (4.55, 1.5-.25+.38) -- +(0.01, 0);
		
		\draw [midarrow={stealth}] (4.55, 3-.25-.38) -- +(0.01, 0);
		\draw [midarrow={stealth}] (4.55, 3-.25+.38) -- +(0.01, 0);

		%
		\draw [midarrow={stealth}] (4.55, 6-.25-.38) -- +(0.01, 0);
		%
		\draw [midarrow={stealth}](4.55, 6-.25+.38) -- +(0.01, 0);
		
	
			\draw [midarrow={stealth}] (1,0.87+.43) -- +(0,0.01); 
			\draw [midarrow={stealth}] (1,1.63+.43) -- +(0,0.01); 
			\draw [midarrow={stealth}] (4,0.87+.43) -- +(0,0.01); 
			\draw [midarrow={stealth}] (4,1.63+.43) -- +(0,0.01); 
			
			\draw [midarrow={stealth reversed}] (5,0.87+.43) -- +(0,0.01); 
			\draw [midarrow={stealth reversed}] (5,1.63+.43) -- +(0,0.01); 
			\draw [midarrow={stealth reversed}] (5,2.37+.43) -- +(0,0.01); 
			\draw [midarrow={stealth reversed}] (5,3.13+.43) -- +(0,0.01); 
			\draw [midarrow={stealth reversed}] (5,4.63+.43) -- +(0,0.01); 
			\draw [midarrow={stealth reversed}] (5,5.37+.43) -- +(0,0.01); 
			
			\draw [densely dotted, blue] (0.02, 1.63+0.18) -- (4.75, 1.63+0.18) -- (4.75, 7-0.25) -- (5.75, 7-0.25) -- (5.75, 0.18) -- (0.02, 0.18) -- (0.02, 1.63+0.18);
			
			\node [red] at (5,1.5-0.25-0.38) {\textbullet};
			
	\fill[preaction={fill,white},pattern=north east lines, pattern color=gray] (0,0) rectangle (-.15,7) ; \draw (0,0) -- (0,7);
\end{tikzpicture}}
\hfil
\subcaptionbox*{}{%
\begin{tikzpicture}[baseline={([yshift=-.5*10pt*0.6]current bounding box.center)}, scale=0.8, font=\scriptsize]
	\foreach \y in {1,3,4} {
		\draw[midarrow={stealth}] (5.55,1.5*\y-.25-.38) -- +(0.01,0);
		\draw[midarrow={stealth}] (5.55,1.5*\y-.25+.38) -- +(0.01,0);
		
		\draw (3, 1.5*\y-.25-.38) -- +(3, 0);
		\draw (0.38, 1.5*\y-.25-.38) -- +(1.62, 0);
		\draw[->] (3, 1.5*\y-.25+.38) -- +(3, 0);
		\draw[-] (0.38, 1.5*\y-.25+.38) -- +(1.62, 0);
		
		\foreach \x in {-1,...,1} \draw (2.5+.2*\x, 1.5*\y-.25-.38) node{$\cdot\mathstrut$};
		\foreach \x in {-1,...,1} \draw (2.5+.2*\x, 1.5*\y-.25+.38) node{$\cdot\mathstrut$};
		
		\draw (0.38,1.5*\y-.25+.38) arc (90:270:0.38);
	}
		
	\node[anchor=west] at (6, 1.5-.25-.38) {$-\lambda_1$};
	\node[anchor=west] at (6, 1.5*3-.25-.38) {$-\lambda_{n-1}$};
	\node[anchor=west] at (6, 1.5*4-.25-.38) {$-\lambda_n$};
		
	\foreach \x in {1,4,5} {
		\draw (\x,0) -- +(0,3-0.25-.38); 
		\draw (\x,3-0.25+.38) -- +(0,3.13+0.25-0.38); 
			\draw[midarrow={stealth}] (\x,0.55) -- +(0,0.01);
		\draw[->] (\x,6.13)  -- +(0,.87);	
			\draw[midarrow={stealth reversed}] (\x,6.13+.43)  -- +(0,0.01);	
			
		\foreach \y in {-1,...,1} \draw (\x, 3-0.25+.2*\y) node{$\cdot\mathstrut$};
	}
	
	\node at (1,-0.3) {${\mu_m}$};
	\node at (4,-0.3) {${\mu_2}$};
	\node at (5,-0.3) {${\lambda_{n}}$};
	

		\draw[midarrow={stealth}] (.55,5.75-.38) -- +(0.01,0);
		\draw[midarrow={stealth reversed}] (.55,5.75+.38) -- +(0.01,0);
		
		%
		\draw [midarrow={stealth}] (4.55, 1.5-.25-.38) -- +(0.01, 0);
		%
		\draw [midarrow={stealth}] (4.55, 1.5-.25+.38) -- +(0.01, 0);
		
		\draw [midarrow={stealth}] (4.55, 4.5-.25-.38) -- +(0.01, 0);
		\draw [midarrow={stealth}] (4.55, 4.5-.25+.38) -- +(0.01, 0);

		\draw [midarrow={stealth}] (1.55, 6-.25-.38) -- +(0.01, 0);
		\draw [midarrow={stealth}] (3.55, 6-.25-.38) -- +(0.01, 0);
		\draw [midarrow={stealth}] (4.55, 6-.25-.38) -- +(0.01, 0);
		\draw [midarrow={stealth reversed}] (1.55, 6-.25+.38) -- +(0.01, 0);
		\draw [midarrow={stealth reversed}](3.55, 6-.25+.38) -- +(0.01, 0);
		\draw [midarrow={stealth reversed}](4.55, 6-.25+.38) -- +(0.01, 0);
		
	
			\draw [midarrow={stealth reversed}] (1,4.63+.43) -- +(0,0.01); 
			\draw [midarrow={stealth reversed}] (1,5.37+.43) -- +(0,0.01); 
			\draw [midarrow={stealth reversed}] (4,4.63+.43) -- +(0,0.01); 
			\draw [midarrow={stealth reversed}] (4,5.37+.43) -- +(0,0.01); 
			
			\draw [midarrow={stealth}] (5,0.87+.43) -- +(0,0.01); 
			\draw [midarrow={stealth}] (5,1.63+.43) -- +(0,0.01); 
			\draw [midarrow={stealth}] (5,3.13+.43) -- +(0,0.01); 
			\draw [midarrow={stealth}] (5,3.87+.43) -- +(0,0.01);
			\draw [midarrow={stealth}] (5,4.63+.43) -- +(0,0.01); 
			\draw [midarrow={stealth}] (5,5.37+.43) -- +(0,0.01); 
			
			\draw [densely dotted, blue] (4.75, 0.18) -- (4.75, 5.37-0.18) -- (0.02, 5.37-0.18) -- (0.02, 7-0.25) -- (5.75, 7-0.25) -- (5.75, 0.18) -- (4.75, 0.18);

			\node [red] at (5,6-0.25+0.38) {\textbullet};
			
	\fill[preaction={fill,white},pattern=north east lines, pattern color=gray] (0,0) rectangle (-.15,7) ; \draw (0,0) -- (0,7);
\end{tikzpicture}}
\vspace{-6mm}
\caption{Specializing $\mu_j=\pm\lambda_i$ determines the vertices inside the frozen region (the area marked with a blue dotted line). The part of the lattice outside the frozen region is then a lattice of size $2(n-1)\times (m-1)$ with DWBC. This yields a recursion relation for the partition function.}
\label{lambdaneqmu1}
\end{figure}
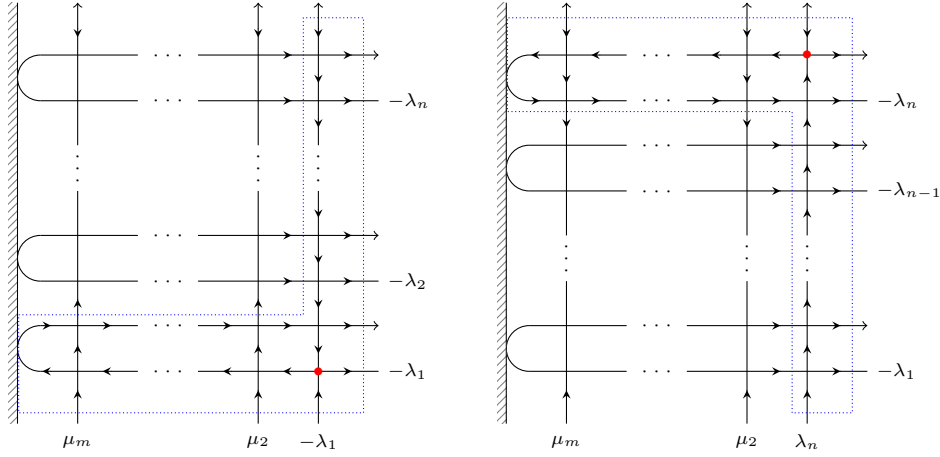

\begin{lemma}
\label{lemma:basestep}
For $m=0$, the partition function is
$$Z_{n,0}(\boldsymbol{\lambda})
=\varphi^n \prod_{i=1}^n f(2\lambda_i).$$
\end{lemma}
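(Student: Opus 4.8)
The plan is to exploit the degeneracy of the lattice at $m=0$: with no vertical lines present there are no bulk vertices at all, so every state weight reduces to a pure product of the $n$ turn weights. The argument splits into three short steps, and the only point needing care is the uniqueness of the admissible state.

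First I would pin down that unique state. The right-boundary DWBC prescribe outgoing (rightward) arrows on every horizontal segment. On the upper part of each double line the positive direction is to the right, so an outgoing arrow is a $+1$ spin; on the lower part the positive direction is to the left, so an outgoing arrow is a $-1$ spin. Since there are no vertices to alter these spins, they reach the wall unchanged, fixing the spin pattern at every turn.

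Next I would match this pattern against the three boundary configurations in Figure~\ref{fig:reflectingends}. Reading off the arrow orientations there, both $k_+$ and $k_-$ send one horizontal segment's arrow back toward the lattice, i.e. an \emph{ingoing} arrow at the right boundary, which is incompatible with the DWBC; only $k_c$ has both segments carrying rightward (outgoing) arrows. Hence each of the $n$ turns is forced to be of type $k_c$. This is also consistent with the count recorded earlier, that there are $n-m$ turns of type $k_c$, which for $m=0$ accounts for all $n$ of them. Finally I would evaluate: by \eqref{turnweights} a $k_c$ turn on the line carrying parameter $\lambda_i$ has weight $k_c(\lambda_i,\zeta)=\varphi f(2\lambda_i)$, independent of $\zeta$. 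As the state weight is the product of all local weights and there are no vertex contributions,
$$Z_{n,0}(\boldsymbol{\lambda})=\prod_{i=1}^n k_c(\lambda_i,\zeta)=\varphi^n\prod_{i=1}^n f(2\lambda_i).$$

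There is no substantial obstacle here; this merely furnishes the base case of the induction on $m$ used to fix the partition function. The one point worth spelling out is the uniqueness of the admissible state, namely that no $k_\pm$ turn can occur. As indicated above, this follows immediately from reading off the forced boundary spins and comparing them with the three allowed turn configurations, so the proof is a direct verification rather than a computation.
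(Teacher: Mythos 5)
Your proposal is correct and follows essentially the same route as the paper: with no vertical lines there are no bulk vertices, the boundary conditions force every turn to be of type $k_c$, and the partition function is the product $\prod_{i=1}^n k_c(\lambda_i,\zeta)=\varphi^n\prod_{i=1}^n f(2\lambda_i)$. The paper states this more tersely, while you additionally spell out why the $k_\pm$ turns are excluded, which is a harmless elaboration of the same argument.
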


\begin{proof}
For $m=0$ there are no vertical lines. In this case, the partition function is just a product of $n$ turns of type $k_c$. 
\end{proof}

It makes little sense to think of a system with no lines, but for the sake of the following induction argument, we define $Z_{0,0}=1$, which is in line with the above lemma. 

		%
		%
%
	%
	%
	%

Lemmas~\ref{lemma:2n1}, \ref{lemma:inductionstep} and \ref{lemma:basestep} together determine $Z_{n,m}(\boldsymbol\lambda, \boldsymbol\mu)$. A polynomial of degree $2n-1$ is uniquely determined by its values in $2n$ distinct points. Starting from the case $m=0$, we can hence establish $Z_{n,m}(\boldsymbol\lambda, \boldsymbol\mu)$ as a function of $\mu_j$ by induction, using Lemma~\ref{lemma:inductionstep}.

\begin{theorem}
\label{thm:determinantformula}
For the 6V model with DWBC and a partially reflecting end on a lattice of size $2n\times m$, $m\leq n$, the partition function is
\begin{multline}
Z_{n,m}(\boldsymbol{\lambda}, \boldsymbol{\mu})=\varphi^{n-m} e^{\left(\binom{m}{2}-nm\right)\gamma} f(\gamma)^m\prod_{i=1}^m \left[e^{\mu_i+\zeta} f(\mu_i-\zeta)\right]\prod_{i=1}^n f(2\lambda_i)\\
\times\frac{\prod_{i=1}^n\prod_{j=1}^m f(\mu_j\pm\lambda_i)}{\prod_{1\leq i<j\leq m}f(\mu_j\pm\mu_i)\prod_{1\leq i< j\leq n}\left[f(\lambda_i-\lambda_j)f(\lambda_i+\lambda_j+\gamma)\right]} \det_{1\leq i,j\leq n} M,
\label{determinantformula}
\end{multline}
where $M$ is an $n\times n$ matrix with
$$M_{ij}=
\begin{cases}
\dfrac{1}{f(\mu_i\pm\lambda_j)f(\mu_i\pm(\lambda_j+\gamma))}, &\text{for } i\leq m,\\[2ex]
h((n-i)(2\lambda_j+\gamma)), & \text{for } m< i<n,\\
1, &\text{for } i=n,
\end{cases}
$$
where $f(x)=2\sinh(x)$ and $h(x)=2\cosh(x)$.
\end{theorem}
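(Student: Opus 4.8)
The plan is to invoke the Izergin--Korepin uniqueness principle that the paper has already set up: by Lemmas~\ref{lemma:2n1}, \ref{lemma:inductionstep} and \ref{lemma:basestep}, the partition function is uniquely pinned down. Indeed, for each fixed $j$ the rescaled function of Lemma~\ref{lemma:2n1} is a polynomial of degree $2n-1$ in $e^{2\mu_j}$, hence determined by its values at $2n$ distinct points; the recursion of Lemma~\ref{lemma:inductionstep} supplies these values at the $2n$ points $\mu_j=\pm\lambda_i$ (Lemma~\ref{lemma:symmetryinlambda} is what lets us reach all of them), expressing them through $Z_{n-1,m-1}$, and Lemma~\ref{lemma:basestep} together with $Z_{0,0}=1$ anchors an induction on $m$. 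It therefore suffices to verify that the right-hand side of \eqref{determinantformula} satisfies all four properties; since those properties admit a unique solution, this forces equality.

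First I would check the symmetry. Interchanging $\lambda_i\leftrightarrow\lambda_{i+1}$ swaps two columns of $\det M$, producing a sign, while in the prefactor only $\prod_{1\leq i<j\leq n}[f(\lambda_i-\lambda_j)f(\lambda_i+\lambda_j+\gamma)]$ is affected: the antisymmetric factor $f(\lambda_i-\lambda_j)$ flips sign and $f(\lambda_i+\lambda_j+\gamma)$ is unchanged, so the two sign changes cancel. The symmetry in the $\mu_j$ is identical, a row swap in $M$ against the sign flip of $\prod_{1\leq i<j\leq m}f(\mu_j\pm\mu_i)$.

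Next I would treat the base case and the recursion, which are comparatively mechanical. For $m=0$ the prefactor collapses and, writing $y_i=h(2\lambda_i+\gamma)$, the identity
$$h(2\lambda_j+\gamma)-h(2\lambda_i+\gamma)=f(\lambda_i+\lambda_j+\gamma)\,f(\lambda_j-\lambda_i)$$
turns the denominator into $(-1)^{\binom{n}{2}}\prod_{i<j}(y_j-y_i)$. Because $h(k\theta)$ is a monic degree-$k$ polynomial in $h(\theta)$ (a Chebyshev relation), row operations reduce $\det M$ to a reversed Vandermonde determinant in the $y_i$, matching the denominator up to the same sign $(-1)^{\binom{n}{2}}$ coming from the reversed row degrees, and reproducing $\varphi^n\prod_i f(2\lambda_i)$. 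For the recursion I would set $\mu_k=\lambda_l$: the factor $f(\mu_k-\lambda_l)$ vanishes, annihilating every entry of the $k$th row of $M$ except the $l$th, whose pole is cancelled by the corresponding zero in the numerator. A Laplace expansion along this row isolates the minor $M_{\mathrm{minor}(k,l)}$, which is precisely the matrix for $Z_{n-1,m-1}(\boldsymbol{\hat\lambda_l},\boldsymbol{\hat\mu_k})$; collecting the surviving prefactors reproduces Lemma~\ref{lemma:inductionstep}, and the case $\mu_k=-\lambda_l$ is symmetric.

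The main obstacle is the degree statement of Lemma~\ref{lemma:2n1}, the only place where the polynomial structure of the determinant must be extracted nontrivially. Fixing $i\leq m$ and setting $x_i=e^{-2\mu_i}$, $y_j=h(2\lambda_j+\gamma)$, I would Taylor expand the top-block entry
$$\frac{1}{f(\mu_i\pm\lambda_j)f(\mu_i\pm(\lambda_j+\gamma))}=\frac{x_i^2}{(1-e^{-\gamma}x_iy_j-e^{-2\gamma}x_i^2)(1-e^{\gamma}x_iy_j-e^{2\gamma}x_i^2)}=x_i^2\sum_{0\leq l\leq k}A_{k,l}\,x_i^k y_j^l$$
around $x_i=0$. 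The crucial observation is that the terms with $l\leq n-m-1$ are $y_j$-polynomials of the same shape as the lower $n-m$ rows $h((n-i)(2\lambda_j+\gamma))$ and $1$ of $M$, hence can be eliminated by row operations; the surviving part begins at $l=n-m$, and tracking its leading power as $x_i\to 0$ fixes the degree of $\det M$ in $e^{2\mu_i}$. Combining this with the explicit powers of $e^{\mu_i}$ carried by the prefactor and by the rescaling factor $\prod_{i,j}f(\lambda_i\pm\mu_j+\gamma)$ yields total degree $2n-1$. The delicate part is the bookkeeping of these competing powers together with the verification that the first $n-m$ expansion terms genuinely lie in the row space of the polynomial block.
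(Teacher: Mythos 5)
Your overall strategy is exactly the paper's: verify that the right-hand side of \eqref{determinantformula} satisfies Lemmas~\ref{lemma:symmetryinlambda}--\ref{lemma:basestep} and invoke uniqueness of a degree-$(2n-1)$ polynomial from its values at the $2n$ points $\mu_j=\pm\lambda_i$. The symmetry check, the Vandermonde reduction with $y_i=h(2\lambda_i+\gamma)$ for $m=0$, the minor extraction at $\mu_k=\lambda_l$, and the Taylor expansion in $x_i=e^{-2\mu_i}$ with row operations against the lower block are all the same steps the paper carries out, and your identification of the elimination of the $l\leq n-m-1$ terms as the key point is accurate.

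There is, however, one genuine omission in your treatment of Lemma~\ref{lemma:2n1}. That lemma asserts that the rescaled $Z_{n,m}$ is a \emph{polynomial} in $e^{2\mu_j}$, and your verification only controls the degree as $\mu_i\to\infty$. The candidate expression has the explicit denominator $\prod_{1\leq i<j\leq m}f(\mu_j\pm\mu_i)$, so before any degree count is meaningful you must show this denominator divides the determinant; otherwise the function has poles at $\mu_i=\pm\mu_j$ and the ``determined by $2n$ values'' argument does not apply. The paper handles this by observing that each entry of the first block depends on $\mu_i$ only through even combinations, so rows $i$ and $j$ of $M$ coincide when $\mu_i=\pm\mu_j$ and the determinant vanishes there; since ($e^{2(m-1)\mu_i}$ times) the denominator is a polynomial of degree $2(m-1)$ with exactly these zeroes, it is a factor of the determinant and the quotient is a Laurent polynomial. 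You need to add this step. Separately, a small slip: the expanded denominator should read $(1-e^{-\gamma}x_iy_j+e^{-2\gamma}x_i^2)(1-e^{\gamma}x_iy_j+e^{2\gamma}x_i^2)$ with plus signs on the $x_i^2$ terms (one checks $1-e^{\mp\gamma}x y+e^{\mp2\gamma}x^2=(1-xe^{\pm2\lambda})(1-xe^{\mp2\lambda\mp2\gamma})$ for $y=h(2\lambda+\gamma)$); this does not affect the structure of the argument, since the only property used is that every power of $y_j$ is accompanied by at least as high a power of $x_i$.
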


To prove this, it is enough to check that the right hand side of \eqref{determinantformula} satisfies the properties stated in Lemmas~\ref{lemma:symmetryinlambda} through \ref{lemma:basestep}. 

\begin{proof}
Let $\tilde Z_{n,m}(\boldsymbol{\lambda}, \boldsymbol{\mu})$ be the right hand side of \eqref{determinantformula}. The symmetry in the $\lambda_i$'s and $\mu_j$'s is obvious, so the condition in Lemma~\ref{lemma:symmetryinlambda} holds for \eqref{determinantformula}. 
Proving that the statements in Lemma~\ref{lemma:inductionstep} and Lemma \ref{lemma:basestep} hold for \eqref{determinantformula} is straightforward: 
put $\lambda_l=\mu_k$ into \eqref{determinantformula}. Then all terms with $f(\mu_k-\lambda_l)=0$ vanish. The only terms left are the terms coming from the part of the determinant containing $f(\mu_k-\lambda_l)$, since the corresponding factor from the numerator cancels. Hence the terms left are
\begin{align*}
\tilde Z_{n,m}(\boldsymbol{\lambda}, \boldsymbol{\mu})|_{\mu_k=\lambda_l}&=\frac{(-1)^{k+l}e^{-n\gamma}f(\gamma) e^{\mu_k+\zeta}f(\mu_k-\zeta) f(2\lambda_l)}{f(\mu_k\pm(\lambda_l+\gamma))}\prod_{\mathclap{\substack{1\leq i \leq n\\ i\neq l}}}f(\mu_k\pm\lambda_i)\prod_{\mathclap{\substack{1\leq j \leq m\\ j\neq k}}}f(\mu_j\pm\lambda_l)\\
&\hphantom{=\ }\times\frac{1}{\prod_{1\leq i< l}\left[f(\lambda_i-\lambda_l)f(\lambda_i+\lambda_l+\gamma)\right]\prod_{l< i\leq n}\left[f(\lambda_l-\lambda_i)f(\lambda_l+\lambda_i+\gamma)\right]} \\
&\hphantom{=\ }\times\frac{1}{\prod_{1\leq i<k}f(\mu_k\pm\mu_i)\prod_{k<i\leq m}f(\mu_i\pm\mu_k)}\tilde Z_{n-1,m-1}(\boldsymbol{\hat\lambda_l}, \boldsymbol{\hat\mu_k}).
\end{align*}
Simplifying this, we can conclude that the recurrence relation in Lemma~\ref{lemma:inductionstep} holds. The proof is similar for $\tilde Z_{n,m}(\boldsymbol{\lambda}, \boldsymbol{\mu})|_{\mu_k=-\lambda_l}$.

For $m=0$, the right hand side of \eqref{determinantformula} is
\begin{align*}
&\tilde Z_{n,0}(\boldsymbol{\lambda})=\frac{\varphi^n\prod_{i=1}^n f(2\lambda_i)}{\prod_{1\leq i< j\leq n}\left[f(\lambda_i-\lambda_j)f(\lambda_i+\lambda_j+\gamma)\right]} \det_{1\leq i,j\leq n} M,
\end{align*}
where $M$ is an $n\times n$ matrix with
$$M_{ij}=
\begin{cases}
h((n-i)(2\lambda_j+\gamma)), & \text{for } 1\leq i\leq n-1,\\
1, &\text{for } i=n.
\end{cases}
$$
Now let $$D=\prod_{1\leq i< j\leq n}\left[f(\lambda_i-\lambda_j)f(\lambda_i+\lambda_j+\gamma)\right]$$ and put $y_i=h(2\lambda_i+\gamma)$. Then we can write $D=\prod_{1\leq i< j\leq n} (y_i-y_j)$. By row operations, we can rewrite $\det M$ as a determinant of a Vandermonde matrix
$\tilde M_{ij}=y_j^{n-i}.$ 
The determinant of a Vandermonde matrix is exactly $D$. Hence the condition in Lemma~\ref{lemma:basestep} holds.

The only thing left is to prove the condition of Lemma~\ref{lemma:2n1}. Observe that both the determinant and the denominator of \eqref{determinantformula} are $0$ when $\mu_i=\pm\mu_j$. It is easy to see that $e^{2(m-1)\mu_i}$ times the denominator is a polynomial of order $2(m-1)$ in $e^{2\mu_i}$ and is hence determined by its value in $2(m-1)$ points (except for a constant factor). Since the determinant and the denominator share zeroes, the denominator is a factor in the determinant. This proves that $\tilde Z_{n,m}(\boldsymbol\lambda, \boldsymbol\mu)$ is a Laurent polynomial in $e^{\mu_i}$. 
To determine the order, we check what happens with the determinant when $\mu_i\to \infty$. 
Let
\begin{align*}
&F(x_i, y_j)=\frac{1}{f(\mu_i\pm\lambda_j)f(\mu_i\pm(\lambda_j+\gamma))}.
\end{align*}
For this part of the proof, let  $x_i=e^{-2\mu_i}$ and $y_j=h(2\lambda_j+\gamma)$. Then each entry of the determinant involving $\mu_i$ is
$$F(x_i, y_j)=\frac{x_i^2}{(1-e^{-\gamma}x_i y_j+e^{-2\gamma}x_i^2)(1-e^{\gamma}x_i y_j+e^{2\gamma}x_i^2)}.$$
We can see that $y_j$ always stands together with $x_i$, so in the Taylor series expansion, the power of $x_i$ is always bigger than the power of $y_j$. The Taylor series expansion around $x_i=0$ is 
\begin{equation*}
F(x_i, y_j)
=x_i^2 \left(\sum_{\substack{l\leq n-m-1\\ l \leq k}} A_{k,l} x_i^k y_j^l+\sum_{\substack{n-m\leq l \\ l\leq k}} A_{k,l} x_i^k y_j^l\right),
\end{equation*}
for some constants $A_{k,l}$. The sum on the left can be written in terms of the last $n-m$ rows of the matrix. Hence in the determinant, the left sum can be removed by row operations, and we can change the entries $F(x_i, y_j)$ in the determinant to
$$x_i^2 \sum_{l=n-m}^k A_{k,l} x_i^k y_j^l.$$
Letting $x_i\to 0$ (i.e. letting $\mu_i\to \infty$), $A_{n-m,n-m} x_i^{n-m+2} y_j^{n-m}$ is the leading term of the determinant, so the degree of the determinant in the variable $e^{2\mu_i}$ is $m-n-2$. 
The degree of $e^{(2n-2)\mu_i}\prod_{i=1}^n\prod_{j=1}^m f(\lambda_i\pm\mu_j+\gamma) \tilde Z_{n,m}(\boldsymbol\lambda, \boldsymbol\mu)$ in $e^{2\mu_i}$ 
is $$
3n-m+1+\deg_{e^{2\mu_i}} (\det M)
=2n-1$$ which proves the condition in Lemma~\ref{lemma:2n1}.
\end{proof}

For $m=n$, the determinant formula in Theorem~\ref{thm:determinantformula} coincides with Tsuchiyas determinant formula.  

In \cite{FodaZarembo2016}, another way to find the partition function is suggested, by starting from Tsuchiyas determinant for $2n\times n$ lattices and step by step removing the extra vertical lines by letting the corresponding variables $\mu_j\to \infty$, following Foda and Wheeler \cite{FodaWheeler2012}. We prove Theorem~\ref{thm:determinantformula} with this method in Appendix A. 

\section{Kuperberg's specialization of the parameters}
\label{sec:countingstates}
In this section we specialize the parameters in the definition of the partition function \eqref{naivepartition}. By doing this, we find a way to count the total number of states of the model in terms of the partition function. 

First, we will state and prove the $2n\times m$ generalization of Lemma 3.1 in \cite{Hietala2020}. For any given state, let $\nu(w)$ denote the number of vertices of type $w$.
\begin{lemma}
\label{lemmanumberofbc}
For any given state of the 6V model with DWBC and one partially reflecting end on a $2n\times m$ lattice, we have
$$\nu(b_+)-\nu(b_-)=\binom{n+1}{2}-\binom{n-m+1}{2} \quad \text{ and } \quad \nu(c_+)-\nu(c_-)=m-2\nu(k_-).$$
\end{lemma}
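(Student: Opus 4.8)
The plan is to prove the two identities separately by tracking the arrow/spin configuration along the lines, using the explicit vertex configurations read off from Figure~\ref{fig:vertexweights}. First I would fix conventions: on a right-oriented (upper) row a horizontal arrow is spin $+$ when it points right, on a left-oriented (lower) row when it points left, and a vertical arrow is $+$ when it points up throughout. Reading the figures, an $a$- or $b$-vertex transmits both the horizontal and the vertical spin unchanged (left edge $=$ right edge, and bottom $=$ top), whereas a $c$-vertex flips both. The fact that drives everything is that a lower-row vertex is an upper-row vertex rotated $90^\circ$ counterclockwise, so the $\pm$ roles of $c$ (and of $b$) with respect to the \emph{upward} vertical spin are interchanged between the two row types. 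Concretely, going upward an upper-$c_-$ and a lower-$c_+$ each flip $+\to-$, while an upper-$c_+$ and a lower-$c_-$ flip $-\to+$; and $b_+$ is exactly a $b$-vertex with $\rho\,\sigma^{\mathrm v}=-1$ (equivalently $\rho\,\sigma^{\mathrm h}=+1$), $b_-$ the opposite, where $\rho=+1$ on upper rows and $-1$ on lower rows and $\sigma^{\mathrm h},\sigma^{\mathrm v}$ are the local horizontal and vertical spins.

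For the $c$-identity I would track the horizontal spin along each horizontal part from the wall to the right DWBC boundary. On a right-oriented part the spin starts at the top-edge value of the turn (which Figure~\ref{fig:reflectingends} gives as $+$ for $k_+$ and $k_c$, and $-$ for $k_-$) and ends at $+$ at the boundary; since each $c_+$ lowers and each $c_-$ raises the spin as one moves rightward, the net count there is $\tfrac12(s_{\mathrm{start}}-s_{\mathrm{end}})$, which sums to $-\nu(k_-)$ over the upper parts. The analogous count on the left-oriented parts (boundary value $-$, relevant turn edge the bottom one) contributes $+\nu(k_+)$. Adding the two and using that there are $n$ turns with exactly $n-m$ of type $k_c$, so that $\nu(k_+)+\nu(k_-)=m$, gives $\nu(c_+)-\nu(c_-)=\nu(k_+)-\nu(k_-)=m-2\nu(k_-)$.

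For the $b$-identity the decisive device is a telescoping along each vertical line that exploits the alternation of row orientations. Labelling the vertices of a fixed vertical line $p=1,\dots,2n$ from the bottom, with $\rho_p=(-1)^p$ and vertical edge spins $s_0=+1$ (lower DWBC), $s_1,\dots,s_{2n-1}$, $s_{2n}=-1$ (upper DWBC), the sum $\sum_p \rho_p\tfrac{s_{p-1}+s_p}{2}$ has $\rho_{p+1}+\rho_p=0$, so the bulk cancels and it collapses to $\tfrac12(\rho_1 s_0+\rho_{2n}s_{2n})=-1$. The $c$-vertices drop out ($s_{p-1}+s_p=0$ there), so summing over the $m$ lines yields $\sum_{a,b}\rho\,\sigma^{\mathrm v}=-m$. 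Combined with $b_+\Leftrightarrow\rho\sigma^{\mathrm v}=-1$ this gives the exact reduction $\nu(b_+)-\nu(b_-)=m+\sum_{a}\rho\,\sigma^{\mathrm v}$, where the remaining sum runs over $a$-vertices only.

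The remaining task, and the main obstacle, is to show that $\sum_{a}\rho\,\sigma^{\mathrm v}$ is state-independent and to evaluate it, since no one-dimensional line count can separate the $a$- from the $b$-vertices (both preserve all spins). I would do this by introducing the face height function $\phi$ determined by the ice rule, rewriting $\sum_{a}\rho\,\sigma^{\mathrm v}$ (equivalently $H:=\sum_{\text{all}}\rho\,\tfrac{s_W+s_E}{2}$, for which $\nu(b_+)-\nu(b_-)=\tfrac12(m+H)$) as a sum of $\rho$-weighted height differences across the horizontal edges, and then telescoping these through each vertical stack of faces. Because consecutive horizontal edges in a stack alternate between lower and upper rows, the $\rho$-weights alternate and the sum collapses to the boundary faces, so $H$ is a functional of the boundary heights alone. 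These are fixed on the top, bottom and right by the DWBC and on the left by the reflecting wall through the turn types; the $n-m$ creation turns $k_c$ produce a staircase in the left profile whose quadratic contribution is the triangular number $\binom{n-m+1}{2}$, while the remainder yields $\binom{n+1}{2}$. Assembling these gives $\sum_a\rho\sigma^{\mathrm v}=\binom{n+1}{2}-\binom{n-m+1}{2}-m$ and hence the stated value. I would finally cross-check against the $m=n$ case of \cite{Hietala2020}, where the formula must reduce to $\binom{n+1}{2}$, and against the smallest lattices.
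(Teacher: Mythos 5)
Your treatment of the second identity is essentially the paper's own argument: you read off the net excess of $c_+$ over $c_-$ on each half of a double row from the spins at the turn and at the right boundary, and sum over turn types. For the first identity you take a genuinely different route. The paper sets up a small linear system in the global arrow counts $\nu(\uparrow),\nu(\downarrow),\nu(\leftarrow),\nu(\rightarrow)$ (equations \eqref{equ2}--\eqref{equ6}) and eliminates; you instead telescope $\sum_p\rho_p\tfrac{s_{p-1}+s_p}{2}=-1$ along each vertical line and reduce to a weighted horizontal flux. The pieces you actually carry out are correct: I checked that $b_\pm$ is characterized by $\rho\,\sigma^{\mathrm v}=\mp1$ on both row orientations, that the vertical telescoping gives $\sum_{a,b}\rho\,\sigma^{\mathrm v}=-m$, and hence that $\nu(b_+)-\nu(b_-)=\tfrac12(m+H)$ with $H=\sum_{\text{vertices}}\rho\,\tfrac{s_W+s_E}{2}$. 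This organization is arguably more transparent than the paper's system of six equations, and it localizes exactly what remains to be computed.

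The gap is in the evaluation of $H$, which you yourself flag as the main obstacle and then only sketch. Two problems. First, the collapse mechanism you invoke is not right as stated: if the height function is defined so that crossing a horizontal edge upward changes $\phi$ by the \emph{spin} $s$, then the column sum $\sum_k\rho_k(\phi_k-\phi_{k-1})$ with alternating $\rho_k$ does \emph{not} telescope to boundary terms (Abel summation leaves all interior heights with coefficient $\pm2$). The statement that does work is simpler and needs no height function: $\rho s=+1$ exactly when the arrow on that horizontal edge points right, so the sum of $\rho s$ over the column of $2n$ horizontal edges between vertical lines $j$ and $j+1$ is the net rightward arrow flux through that cut, which the ice rule fixes to $2n-2(m-j)$ (conservation in the region to the right: $2n$ arrows out the right boundary, $m-j$ in at the top and $m-j$ at the bottom). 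Second, your attribution of the two binomial coefficients to a ``staircase'' at the $k_c$ turns does not match what the computation produces: the left boundary cut contributes only the single weighted term $\tfrac12(2n-2m)$, and the triangular number arises from summing $2j$ over the interior cuts. Carrying this out, $H=\tfrac12(2n-2m)+\sum_{j=1}^{m-1}\bigl(2n-2m+2j\bigr)+n=m(2n-m)$, whence $\nu(b_+)-\nu(b_-)=nm-\binom{m}{2}=\binom{n+1}{2}-\binom{n-m+1}{2}$ as required. With the flux-conservation step substituted for the height-function sketch, your argument closes.
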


\begin{proof}
We follow the same reasoning as in \cite{Hietala2020}. For the first result we count the number of different spin arrows. To be able to keep track of the arrow flow through the system, in particular where the arrows go up and down, we think of a $k_+$ turn as one left, one up and one right arrow, and likewise we think of a $k_-$ turn as one right, one down and one left arrow. To get the number of up arrows, we count the number of the different vertices and turns with up arrows, and similarly for the down arrows. To not count the edges twice, we count the vertices on every second row, and add the edges from the boundaries. Hence we get two different equations, depending on which rows we choose to count. Denote by $w^N$ a vertex of type $w$ on the upper part of a double line and similarly $w^S$ for a vertex of type $w$ on the lower part ($N$ for north, $S$ for south). We get the equations
\begin{equation}
\label{equ2}
\nu(\uparrow)=2\nu(a_+^S) + 2\nu(b_+^S) + \nu(c_+^S) + \nu(c_-^S) + \nu(k_+),
\end{equation}
\begin{equation}
\label{equ1}
\nu(\uparrow)=2\nu(a_+^N) + 2\nu(b_-^N) + \nu(c_+^N) + \nu(c_-^N) + \nu(k_+) + m,
\end{equation}
\begin{equation}
\label{equ4}
\nu(\downarrow)=2\nu(a_-^S) + 2\nu(b_-^S) + \nu(c_+^S) + \nu(c_-^S) + \nu(k_-)+m,
\end{equation}
and
\begin{equation}
\label{equ3}
\nu(\downarrow)=2\nu(a_-^N) + 2\nu(b_+^N) + \nu(c_+^N) + \nu(c_-^N) + \nu(k_-).
\end{equation}
To get the number of left arrows, we count the number of the different vertices with left arrows, plus the number of all $k_+$ and $k_-$ turns, then divide everything by two. The number of left arrows is
\begin{equation}
\label{equ5}
\nu(\leftarrow)=\nu(a_-^N) + \nu(b_-^N) + \nu(a_+^S) + \nu(b_-^S) + \frac{\nu(c_+) + \nu(c_-) +\nu(k_+)+\nu(k_-)}{2}.
\end{equation}
Similary, 
the number of right arrows is 
\begin{align}
\label{equ6}
\nu(\rightarrow)&=\nu(a_+^N) + \nu(b_+^N) + \nu(a_-^S) + \nu(b_+^S)+ \frac{\nu(c_+) + \nu(c_-)+\nu(k_+)+\nu(k_-)}{2}+\nu(k_c)+n.
\end{align}
Adding the equations \eqref{equ2}, \eqref{equ3} and two times \eqref{equ6}, and subtracting \eqref{equ4}, \eqref{equ1} and two times \eqref{equ5} yields
\begin{equation}
\label{equ7}
4[\nu(b_+)-\nu(b_-)]=2\nu(\rightarrow)-2\nu(\leftarrow)-2n+2m-2\nu(k_c).
\end{equation}
With the given boundary conditions we have $\nu(k_c)=n-m$, and furthermore, we must have
$\nu(\rightarrow)=2n(m+1)-\frac{m(m+1)}{2}$ and $\nu(\leftarrow)=\frac{m(m+1)}{2}$ to satisfy the ice rule. Insert this into \eqref{equ7}. The first part of the lemma follows. 

For the second part of the lemma we observe that at a $k_+$ double row, the upper part of the double line starts and ends with a right arrow, and the lower part of the line starts with a left arrow and ends with a right arrow. The only vertex types that can change the arrows are the $c_\pm$ vertices. Hence at each $k_+$ double row,
$$\nu(c_+^S)=\nu(c_-^S)+1, \qquad \nu(c_+^N)=\nu(c_-^N),$$
and similarly at each $k_-$ double row,
$$\nu(c_+^S)=\nu(c_-^S) \quad \text{ and } \quad \nu(c_+^N)=\nu(c_-^N)-1.$$ 
At a $k_c$ double row, each row starts and ends with right arrows, so
$$\nu(c_+^S)=\nu(c_-^S) \quad \text{ and } \quad \nu(c_+^N)=\nu(c_-^N).$$
Since $\nu(c_+)=\nu(c_-)$ on the $k_c$ rows, we only need to consider the $k_+$ and $k_-$ rows.
There are $m$ rows of this type. Hence we get the sum $\nu(c_+)=\nu(c_-)+m-2\nu(k_-)$, which is the second result of the lemma. 
\end{proof}

Furthermore we can count the number of $a_\pm$ vertices.
\begin{cor}
\label{lemmanumberofa}
For any given state of the 6V model with DWBC and one partially reflecting end on a $2n\times m$ lattice, the number of vertices of type $a$ (i.e. $a_+$ and $a_-$ together) is $$\nu(a)=mn+\binom{m}{2}-m+2[\nu(k_-)-\nu(b_-)-\nu(c_-)].$$
\end{cor}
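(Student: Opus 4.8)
The plan is to start from the single global count that every bulk vertex of the $2n\times m$ lattice is one of the six types $a_\pm, b_\pm, c_\pm$, while the turns $k_\pm$ and $k_c$ are boundary data rather than bulk vertices. Since there is exactly one vertex at each crossing of the $2n$ horizontal lines with the $m$ vertical lines, and writing $\nu(a)=\nu(a_+)+\nu(a_-)$ and similarly for $b$ and $c$, this gives the conservation law
\[
\nu(a)+\nu(b)+\nu(c)=2nm,
\]
so that $\nu(a)=2nm-\nu(b)-\nu(c)$. The whole corollary then reduces to rewriting $\nu(b)$ and $\nu(c)$ by means of Lemma~\ref{lemmanumberofbc}.

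First I would use the two identities already established there. Writing $\nu(b)=\bigl(\nu(b_+)-\nu(b_-)\bigr)+2\nu(b_-)$ and $\nu(c)=\bigl(\nu(c_+)-\nu(c_-)\bigr)+2\nu(c_-)$ and inserting the stated differences yields
\[
\nu(b)=\binom{n+1}{2}-\binom{n-m+1}{2}+2\nu(b_-),\qquad \nu(c)=m-2\nu(k_-)+2\nu(c_-).
\]
Substituting these into $\nu(a)=2nm-\nu(b)-\nu(c)$ produces the term $2[\nu(k_-)-\nu(b_-)-\nu(c_-)]$ automatically and leaves the purely numerical remainder
\[
2nm-\binom{n+1}{2}+\binom{n-m+1}{2}-m
\]
to be matched against $mn+\binom{m}{2}-m$.

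The only computation is the binomial identity
\[
\binom{n+1}{2}-\binom{n-m+1}{2}+\binom{m}{2}=nm,
\]
which I would verify by expanding $\binom{n+1}{2}-\binom{n-m+1}{2}=\tfrac12\bigl[n(n+1)-(n-m)(n-m+1)\bigr]=nm-\binom{m}{2}$ and then adding $\binom{m}{2}$. Once this holds, the numerical remainder above equals $mn+\binom{m}{2}-m$ and the corollary follows. I do not anticipate any genuine obstacle; the one point requiring care is the bookkeeping of which objects count toward the total $2nm$, since the turns $k_\pm,k_c$ must be excluded and enter only through the $\nu(k_-)$ contribution that Lemma~\ref{lemmanumberofbc} has already isolated.
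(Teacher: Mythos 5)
Your proposal is correct and follows exactly the paper's argument: the paper likewise starts from $\nu(a)=2mn-\nu(b)-\nu(c)$ and substitutes the two identities of Lemma~\ref{lemmanumberofbc}, and your verification of the binomial identity $\binom{n+1}{2}-\binom{n-m+1}{2}+\binom{m}{2}=nm$ correctly fills in the arithmetic the paper leaves implicit.
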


\begin{proof}
We have that $\nu(a)=2mn-\nu(b)-\nu(c)$. Use Lemma~\ref{lemmanumberofbc} to compute $\nu(b)$ in terms of $\nu(b_-)$, and $\nu(c)$ in terms of $\nu(c_-)$. 
\end{proof}

Now let $\gamma=4\pi \i /3$ so that $e^\gamma$ becomes a third root of $1$. Then let $\lambda_i\to\gamma$ and $\mu_j\to 0$ in the partition function $Z_{n,m}(\boldsymbol\lambda, \boldsymbol\mu)$. In this limit, the weights \eqref{vertexweights} and \eqref{turnweights} are
\begin{align*}
&a_\pm(\gamma)=1,&\ &b_+(\gamma)=c_-(\gamma)=-e^{-\gamma},&\  &b_-(\gamma)=c_+(\gamma)=-e^{\gamma}f(\gamma),\\
&k_+(\gamma)=e^{2\zeta}-e^{-2\gamma},&\
&k_-(\gamma)=e^{2\zeta}-e^{2\gamma}, &\
&k_c(\gamma)=\varphi f(2\gamma),
\end{align*}
and the partition function becomes
\begin{align*}
Z_{n,m}(\gamma, \dots, \gamma, 0, \dots, 0)&=\sum_{\text{states}} a_+(\gamma)^{\nu(a_+)}a_-(\gamma)^{\nu(a_-)}b_+(\gamma)^{\nu(b_+)} b_-(\gamma)^{\nu(b_-)} \\
&\hphantom{=\ } \times c_+(\gamma)^{\nu(c_+)}c_-(\gamma)^{\nu(c_-)}
k_+(\gamma)^{\nu(k_+)}k_-(\gamma)^{\nu(k_-)}k_c(\gamma)^{\nu(k_c)}.
\end{align*}
Here we have used that for $\gamma=4\pi \i /3$, it holds that $f(2\gamma)=-f(\gamma)$.
Inserting the expressions for the weights and using Lemma~\ref{lemmanumberofbc} and Corollary~\ref{lemmanumberofa}, the partition function can be computed to be
\begin{align*}
Z_{n,m}(\gamma, \dots, \gamma, 0, \dots, 0)
&= (-1)^{\binom{m}{2}-nm+n}\varphi^{n-m} e^{\left(\binom{m+1}{2}-nm\right)\gamma}f(\gamma)^{n-m}\\
&\hphantom{=\ } \times \sum_{k=0}^m N_k (e^{2\zeta}-e^{-2\gamma} )^{k}\left(\frac{e^{2\zeta}-e^{2\gamma}}{e^{2\gamma}} \right)^{m-k}, 
\end{align*}
where $N_k$ is the number of states with $\nu(k_+)=k$, i.e. where $k$ is the number of $k_+$ turns.
Let $Z_{n,m, k}(\boldsymbol\lambda, \boldsymbol\mu)$
be the partition function for the same model where the number of $k_+$ turns is fixed to $\nu(k_+)=k$. Then 
\begin{align} 
\label{Nk}
N_k&=(-1)^{\binom{m}{2}-nm+n}\varphi^{m-n} e^{\left(nm-\binom{m+1}{2}\right)\gamma}f(\gamma)^{m-n}\nonumber \\
&\hphantom{=\ } \times\left(\frac{1}{e^{2\zeta}-e^{-2\gamma}}\right)^k\left(\frac{e^{2\gamma}}{e^{2\zeta}-e^{2\gamma}} \right)^{m-k} Z_{n,m,k}(\gamma, \dots, \gamma, 0,\dots, 0).
\end{align}
Finally
\begin{equation}
\label{numberofstates}
A(m,n)\coloneqq\sum_{k=0}^m N_k
\end{equation}
 counts the total number of states of the model. 

\subsection{Connection to alternating sign matrices}
\label{subsec:ASMs}
By specializing the parameters in the partition function for the 6V model with DWBC (without reflecting end), Kuperberg \cite{Kuperberg1996} was able to count the number of alternating sign matrices (ASMs). ASMs are matrices where the entries consist of $1$, $-1$ and $0$ such that the nonzero elements in each row and each column alternate in sign and such that the sum of the elements of any row and any column is $1$. Hence the first and last nonzero element of each row and each column is always $1$. 


In the case of DWBC and partially reflecting end, we also get a bijection to a type of ASM-like objects. 
Consider matrices of size $2n\times m$, $m\leq n$, consisting of elements $0$, $-1$ and $1$. Vertically and horizontally the nonzero elements alternate in sign. The sum of the elements of each column is $1$, as for ASMs. Horizontally connect the rows pairwise on the left edge to form a double row as in Figure~\ref{fig:croppedASMs}. A double row may consist of only zeroes. If a row has any nonzero elements, the rightmost of these must be $1$. 
Furthermore the sum of the entries in a double row must be $0$ or $1$. Equivalently, a double row can not consist of two rows both having 1 as their leftmost nonzero element.

\begin{prop}
\label{prop:croppedASMs}
The expression
$A(m,n)$
yields the number of matrices of size $2n\times m$, $m\leq n$, described above.
\end{prop}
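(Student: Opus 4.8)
The plan is to exhibit an explicit bijection between the states counted by $A(m,n)$ and the matrices described in the statement; the proposition then follows at once, since \eqref{numberofstates} shows that $A(m,n)$ equals the total number of states of the model. To set up the correspondence I would assign to each intersection of a horizontal and a vertical line a matrix entry defined \emph{physically} in terms of the four surrounding arrows: the entry is $+1$ when both horizontal arrows point away from the vertex and both vertical arrows point into it, $-1$ when both horizontal arrows point into the vertex and both vertical arrows point away, and $0$ otherwise. By Figure~\ref{fig:vertexweights} the $\pm1$ entries are exactly the $c_\pm$ vertices (which of the two weight labels occurs depends on the orientation of the row), while the $a_\pm$ and $b_\pm$ vertices all give $0$. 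Reading the $2n$ horizontal lines as the rows, paired into $n$ double rows as in Figure~\ref{fig:croppedASMs}, and the $m$ vertical lines as the columns yields a $2n\times m$ matrix with entries in $\{-1,0,1\}$.

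First I would check the bulk conditions. The horizontal arrow along a row reverses only at a $c$ vertex, and likewise the vertical arrow along a column, so the nonzero entries alternate in sign both horizontally and vertically. The DWBC on the upper and lower boundary force each column to begin with an up arrow at the bottom and end with a down arrow at the top, so a column contains an odd number of sign changes that alternate and begin and end with $+1$; hence every column sums to $1$. The DWBC on the right force the rightmost nonzero entry of every row to be $+1$. (As a consistency check, summing the column condition over all $m$ columns gives total entry sum $m$, in agreement with the vertex count of Lemma~\ref{lemmanumberofbc}.)

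The crux will be matching the reflecting end to the double-row conditions. Using the orientation conventions of Section~\ref{sec:prel}, the left-hand arrow of each horizontal line is governed by the parity of the number of nonzero entries in that row: a row whose leftmost nonzero entry is $+1$ has an odd number of sign changes, hence a left-hand arrow pointing \emph{into} the turn, whereas a row whose leftmost nonzero entry is $-1$, or which is identically zero, has its left-hand arrow pointing out of the turn. Comparing with Figure~\ref{fig:reflectingends}, the four a priori pairs of left-hand arrows correspond exactly to the four entries of the triangular $K$-matrix: $k_+$ and $k_-$ to the two mixed pairs (precisely one of the two rows has leftmost entry $+1$), $k_c$ to the pair in which neither row has leftmost entry $+1$, and the \emph{forbidden} configuration, i.e. the vanishing $(2,1)$ entry of $K$, to the pair in which both rows have leftmost entry $+1$. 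Since a row sums to $1$ exactly when its leftmost nonzero entry is $+1$ and to $0$ otherwise, this says precisely that a double row sums to $0$ or $1$ and can never consist of two rows both having $+1$ as leftmost nonzero entry; in particular the turn type is uniquely determined by the two rows.

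Finally I would verify that the assignment is a bijection by reconstructing the state from the matrix. Given a matrix satisfying all the stated conditions, the horizontal arrows are determined by the row alternation together with the right DWBC, the vertical arrows by the column alternation together with the top and bottom DWBC, and at each site the two reconstructed arrow pairs are automatically compatible: a $\pm1$ entry produces the corresponding $c_\pm$ vertex, and a $0$ entry produces an $a_\pm$ or $b_\pm$ vertex. The double-row conditions guarantee that the forced pair of left-hand arrows is one of the three allowed turns $k_+,k_-,k_c$, so a unique state is recovered. Hence states and matrices are in bijection and $A(m,n)$ counts the matrices. I expect the main obstacle to be the consistent bookkeeping of physical arrow directions across the two oppositely oriented rows of each double row; once that convention is fixed, identifying the three turn types with the three admissible double-row patterns, and the forbidden $K$-matrix entry with the excluded ``both leftmost $+1$'' pattern, is the decisive and only nonroutine verification.
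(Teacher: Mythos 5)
Your proposal is correct and follows essentially the same route as the paper: both use the standard Kuperberg correspondence sending $c_\pm$ vertices to $\pm1$ entries, derive the alternation, column-sum and rightmost-entry conditions from the ice rule and the DWBC, and match the three admissible turn types $k_+,k_-,k_c$ to the three allowed double-row patterns (with the excluded ``both leftmost $+1$'' pattern corresponding to the vanishing entry of the triangular $K$-matrix). Your presentation is somewhat more explicit than the paper's --- defining the entries by the physical arrow configuration rather than by the orientation-dependent weight labels, and spelling out the reconstruction of the state from the matrix --- but these are refinements of the same argument, not a different one.
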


\begin{proof}
There is a bijection between the states of the 6V model and matrices consisting of $0$, $1$ and $-1$ \cite{Kuperberg1996} (see Figure~\ref{fig:bijection6VUASM}), where $a_\pm$ and $b_\pm$ corresponds to $0$ in the matrix, and in the case of reflecting end, a $c_-$ vertex in the upper part of a double row corresponds to $1$, and $c_+$ corresponds to $-1$. On the lower part of a double row, $c_+$ corresponds to $1$, and $c_-$ corresponds to $-1$.
Due to the ice rule and the boundary conditions, the rest of the vertices are determined uniquely. 

More rows than columns allows for rows without $c_\pm$ vertices, corresponding to rows with only zeroes in the matrix. If there are any $c_\pm$ vertices on the lower part of a double row, $k_-$ and $k_c$ turns on that double row force the leftmost $c_\pm$ vertex to be $c_-$, corresponding to $-1$ in the matrix, whereas a $k_+$ turn forces it to be $c_+$, corresponding to $1$ in the matrix. Similarly if there are any $c_\pm$ vertices on the upper part of the double row, $k_+$ and $k_c$ force the leftmost nonzero element in the matrix to be $-1$, and $k_-$ yields that the leftmost nonzero matrix element is $1$. In this way we can see that both rows can not have $1$ as their leftmost nonzero element. In a similar way the DWBC impose that if a row has any nonzero element, then the rightmost of them must be $1$. 
\end{proof}

From the discussion above, it follows that $N_k$ \eqref{Nk} counts the number of matrices equivalent to states of the 6V model with exactly $k$ positive turns.
In the case $m=n$, the sum of the elements in each double row must be $1$, and $A(m,n)$ counts the number of UASMs \cite{Kuperberg2002}. 
The case where $m=n$ and $k=0$ corresponds to VSASMs.
 

\begin{figure}
\centering
\subcaptionbox*{}{%
		$
\tikz[baseline={([yshift=-.5*10pt*0.6+1pt]current bounding box.center)},scale=0.6]{
	\draw (0,0) arc (90:270:0.3);
	\draw (0,1.4) arc (90:270:0.3);
	\draw (0,2.8) arc (90:270:0.3);
	}
\left.
\begin{matrix}
1 & 0  \\
0 & 0 \\
-1 & 1 \\
0 & 0 \\
1 & 0 \\
0 & 0 \\
\end{matrix}
\right)
$}\hfil
\subcaptionbox*{}{%
		$
\tikz[baseline={([yshift=-.5*10pt*0.6+1pt]current bounding box.center)},scale=0.6]{
	\draw (0,0) arc (90:270:0.3);
	\draw (0,1.4) arc (90:270:0.3);
	\draw (0,2.8) arc (90:270:0.3);
	}
\left.
\begin{matrix}
1 & 0  \\
0 & 0 \\
0 & 0 \\
-1 & 1 \\
1 & 0 \\
0 & 0 \\
\end{matrix}
\right)
$}\hfil
\subcaptionbox*{}{%
		$
\tikz[baseline={([yshift=-.5*10pt*0.6+1pt]current bounding box.center)},scale=0.6]{
	\draw (0,0) arc (90:270:0.3);
	\draw (0,1.4) arc (90:270:0.3);
	\draw (0,2.8) arc (90:270:0.3);
	\draw (0,4.2) arc (90:270:0.3);
	}
\left.
\begin{matrix}
1 & 0 & 0 \\
0 & 0 & 0 \\
0 & 1 & 0 \\
0 & 0 & 0 \\
0 & -1 & 1 \\
-1 & 1 & 0 \\
1 & 0 & 0 \\
0 & 0 & 0 \\
\end{matrix}
\right)
$}
\hfil
\subcaptionbox*{}{%
		$
\tikz[baseline={([yshift=-.5*10pt*0.6+1pt]current bounding box.center)},scale=0.6]{
	\draw (0,0) arc (90:270:0.3);
	\draw (0,1.4) arc (90:270:0.3);
	\draw (0,2.8) arc (90:270:0.3);
	}
\left.
\begin{matrix}
0 & 0 \\
0 & 0  \\
0 & 1 \\
1 & 0 \\
0 & 0 \\
0 & 0 \\
\end{matrix}
\right)
$}
\vspace{-3mm}
\caption{The three matrices on the left are counted by $A(m,n)$, whereas the rightmost matrix does not correspond to a state in the model considered in this paper.}
\label{fig:croppedASMs}
\end{figure}

\begin{figure}
\centering
$
\tikz[baseline={([yshift=-.5*10pt*0.6+1pt]current bounding box.center)},scale=0.6]{
	\draw (0,0) arc (90:270:0.3);
	\draw (0,1.4) arc (90:270:0.3);
	\draw (0,2.8) arc (90:270:0.3);
	}
\left.
\begin{matrix}
1 & 0 & 0 \\
-1 & 1 & 0 \\
0 & -1 & 1 \\
1 & 0 & 0 \\
0 & 1 & 0 \\
0 & 0 & 0 \\
\end{matrix}
\right)
\ \ \longleftrightarrow \ \ $
\begin{tikzpicture}[baseline={([yshift=-.5*10pt*0.6]current bounding box.center)}, scale=0.6, font=\small]
	\foreach \y in {1,...,3} {
		\draw[midarrow={stealth}] (3.55,1.5*\y-.25-.38) -- +(0.01,0);
		\draw[midarrow={stealth}] (3.55,1.5*\y-.25+.38) -- +(0.01,0);
		
		\draw (1, 1.5*\y-.25-.38) -- +(3, 0);
		\draw[->] (1, 1.5*\y-.25+.38) -- +(3, 0);
		
		\draw (0.38,1.5*\y-.25+.38) arc (90:270:0.38);
		}
		
	\foreach \x in {1,...,3} {
		\draw (\x,0) -- +(0,.87); 
			\draw[midarrow={stealth}] (\x,0.55) -- +(0,0.01);
		\draw[->] (\x,4.63)  -- +(0,.87);	
			\draw[midarrow={stealth reversed}] (\x,4.63+.43)  -- +(0,0.01);	
			
		\draw (\x,.87) -- +(0,4.63); 
	}
	
		\draw(.38,1.25-.38) -- (1,1.25-.38);
			\draw[midarrow={stealth}] (.55,1.25-.38) -- +(0.01,0);
		\draw(.38,1.25+.38) -- (1,1.25+.38);
			\draw[midarrow={stealth reversed}] (.55,1.25+.38) -- +(0.01,0);
		
		\draw (.38,2.75-.38) -- (1,2.75-.38);
			\draw[midarrow={stealth reversed}] (.55,2.75-.38) -- +(0.01,0);
		\draw(.38,2.75+.38) -- (1,2.75+.38);
			\draw[midarrow={stealth}] (.55,2.75+.38) -- +(0.01,0);
		
		\draw(.38,4.25-.38) -- (1,4.25-.38);
			\draw[midarrow={stealth}] (.55,4.25-.38) -- +(0.01,0);
		\draw (.38,4.25+.38) -- (1,4.25+.38);
			\draw[midarrow={stealth reversed}] (.55,4.25+.38) -- +(0.01,0);
		
		\draw [midarrow={stealth}] (1.55, 1.5-.25-.38) -- +(0.01, 0);
		\draw [midarrow={stealth}] (2.55, 1.5-.25-.38) -- +(0.01, 0);
		\draw [midarrow={stealth reversed}](1.55, 1.5-.25+.38) -- +(0.01, 0);
		\draw [midarrow={stealth}] (2.55, 1.5-.25+.38) -- +(0.01, 0);
		
		\draw [midarrow={stealth}] (1.55, 3-.25-.38) -- +(0.01, 0);
		\draw [midarrow={stealth}] (2.55, 3-.25-.38) -- +(0.01, 0);
		\draw [midarrow={stealth}] (1.55, 3-.25+.38) -- +(0.01, 0);
		\draw [midarrow={stealth reversed}](2.55, 3-.25+.38) -- +(0.01, 0);
		
		\draw [midarrow={stealth reversed}] (1.55, 4.5-.25-.38) -- +(0.01, 0);
		\draw [midarrow={stealth}] (2.55, 4.5-.25-.38) -- +(0.01, 0);
		\draw [midarrow={stealth}](1.55, 4.5-.25+.38) -- +(0.01, 0);
		\draw [midarrow={stealth}] (2.55, 4.5-.25+.38) -- +(0.01, 0);
		
			\draw [midarrow={stealth}] (1,0.87+.43) -- +(0,0.01); 
			\draw [midarrow={stealth}] (1,1.63+.43) -- +(0,0.01); 
			\draw [midarrow={stealth reversed}] (1,2.37+.43) -- +(0,0.01); 
			\draw [midarrow={stealth reversed}] (1,3.13+.43) -- +(0,0.01); 
			\draw [midarrow={stealth}] (1,3.87+.43) -- +(0,0.01);  
			
			\draw [midarrow={stealth}] (2,0.87+.43) -- +(0,0.01); 
			\draw [midarrow={stealth reversed}] (2,1.63+.43) -- +(0,0.01); 
			\draw [midarrow={stealth reversed}] (2,2.37+.43) -- +(0,0.01); 
			\draw [midarrow={stealth}] (2,3.13+.43) -- +(0,0.01); 
			\draw [midarrow={stealth reversed}] (2,3.87+.43) -- +(0,0.01);
			
			\draw [midarrow={stealth}] (3,0.87+.43) -- +(0,0.01); 
			\draw [midarrow={stealth}] (3,1.63+.43) -- +(0,0.01); 
			\draw [midarrow={stealth}] (3,2.37+.43) -- +(0,0.01); 
			\draw [midarrow={stealth reversed}] (3,3.13+.43) -- +(0,0.01); 
			\draw [midarrow={stealth reversed}] (3,3.87+.43) -- +(0,0.01);
			
	\fill[preaction={fill,white},pattern=north east lines, pattern color=gray] (0,0) rectangle (-.15,5.5) ; \draw (0,0) -- (0,5.5);
	
\end{tikzpicture}
\caption{The bijection between a UASM and a state of the 6V model with DWBC and a reflecting end.}
\label{fig:bijection6VUASM}
\end{figure}
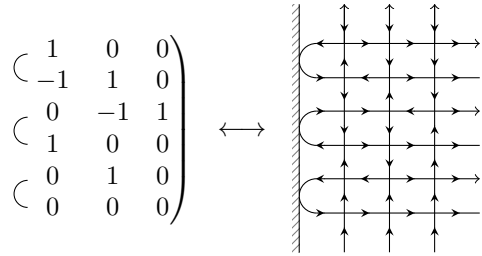

\section{Specialization in the determinant}
\label{sec:rewritepartfcn}
In this section we specialize the parameters in the determinant formula in the same way as in the previous section. We show that with this specialization of the variables, the partition function can be written as a determinant of Wilson polynomials, and finally we use this together with the results from the previous section to give a formula for the number of states of the model. 

\subsection{Moments and orthogonal polynomials}
We start by rewriting the determinant formula of the partition function \eqref{determinantformula}.
%
%
Let $x_i=\lambda_i-\gamma$ and $y_j=\mu_j$. 
In the previous section we let $\lambda_i\to \gamma$ and $\mu_i\to 0$ which corresponds to letting $x_i\to 0$ and $y_i\to 0$. 
For $1\leq i \leq n$, multiply column $i$ of the determinant by $f(2x_i)$ and for $1\leq j\leq m$, multiply row $j$ of the determinant by $f(2y_j)$. 
Then specify $\gamma=4\pi \i /3$. 
%
By row operations on the lower part of the determinant we can then rewrite the partition function as
\begin{align}
\label{partfcn6}
&Z_{n,m}(x_1 + \gamma, \dots, x_n + \gamma, y_1, \dots, y_m)=\varphi^{n-m}e^{\left(\binom{m}{2}-nm\right)\gamma}f(\gamma)^m\nonumber\\
&\hphantom{=\ }\times
\dfrac{\prod_{i=1}^{m} \left[e^{y_i+\zeta} f(y_i-\zeta)\right]\prod_{i=1}^n f(2x_i+2\gamma)\prod_{i=1}^n\prod_{j=1}^{m} f(y_j\pm(x_i+\gamma))}{\prod_{1\leq i<j\leq m}f(y_j\pm y_i)\prod_{1\leq i<j\leq n}f(x_i\pm x_j)\prod_{i=1}^n f(2x_i)\prod_{j=1}^m f(2y_j)}\det_{1\leq i,j\leq n} \tilde M,
\end{align}
where
\begin{equation*}
\tilde M_{ij}=
\begin{cases}
g(x_j, y_i), &\text{ for } 1\leq i \leq m,\\
f(2(n-i+1) x_j), &\text{ for } m+1\leq i\leq n,
\end{cases}
\end{equation*}
with $$g(x,y)=\frac{f(2x)f(2y)}{f(y\pm (x+\gamma))f(y\pm(x+2\gamma))}.$$
It is apparent that $f(x)=\sinh x$ and $g(x,y)$ are odd functions. Furthermore $x=0$ is a zero of $f(x)$, and $x=0$ and $y=0$ are zeroes of $g(x,y)$. Hence we can write the functions as $f(x)=x \hat f(x^2)$ and $g(x,y)=xy \hat g(x^2,y^2)$, where $\hat f$ and $\hat g$ are analytic at $x=y=0$. 
The partition function \eqref{partfcn6} can thus be written
\begin{align*}
&Z_{n,m}(x_1 + \gamma, \dots, x_n + \gamma, y_1, \dots, y_m)\\
&=2^{n-m}(n-m)!\ \varphi^{n-m}e^{\left(\binom{m}{2}-nm\right)\gamma}f(\gamma)^m \prod_{i=1}^{m} \left[e^{y_i+\zeta} f(y_i-\zeta)\right] \nonumber\\
&\hphantom{=\ }\times
\frac{\prod_{i=1}^n f(2x_i+2\gamma)\prod_{i=1}^n\prod_{j=1}^{m} f(y_j\pm(x_i+\gamma))\prod_{j=1}^n x_j \prod_{k=1}^m y_k}{\prod_{1\leq i<j\leq m}f(y_j\pm y_i)\prod_{1\leq i<j\leq n}f(x_i\pm x_j)\prod_{i=1}^n f(2x_i)\prod_{j=1}^m f(2y_j)}\det_{1\leq i, j\leq n}\hat M,
\end{align*}
where
\begin{equation*}
\hat M_{ij}=
\begin{cases}
\hat g(x_j^2, y_i^2), &\text{ for } 1\leq i \leq m,\\
\hat f((2(n-i+1) x_j)^2), &\text{ for } m+1\leq i\leq n.
\end{cases}
\end{equation*}

Regarding $x_i^2$ and $y_j^2$ as our variables, then in the limit where all $x_i,y_j \to 0$, the determinant can be written in terms of partial derivatives. Subtract the first column from the second and divide by $x^2_2-x^2_1$, then in the limit $x_i\to 0$, this equals the first partial derivative. Similarly for the $k$th column, subtract the $(k-1)$th order Taylor series expansion of each entry around $x_k^2=0$, which in the limit is a sum of the first $k-1$ columns. Then divide by $\prod_{i=1}^{k-1} (x_k^2-x_i^2)/k!$. By l'H\^opital's rule this limit equals the $(k-1)$th partial derivative. A similar argument for the first $m$ rows lets us write the entries as partial derivatives in $y_i^2$ as well. The method is described in detail in \cite{IzerginCokerKorepin1992}.
Hence in the limit, the partition function equals 
\begin{align*}
&Z_{n,m}(\gamma, \dots, \gamma, 0, \dots, 0)\\ 
&=\lim_{x_i, y_j \to 0} 2^{n-m}(n-m)!\ \varphi^{n-m}e^{\left(\binom{m}{2}-nm\right)\gamma}f(\gamma)^m \prod_{i=1}^{m} \left[e^{y_i+\zeta} f(y_i-\zeta)\right]\nonumber\\
&\hphantom{=\ }\times
\frac{\prod_{i=1}^n f(2x_i+2\gamma)\prod_{i=1}^n\prod_{j=1}^{m} f(y_j\pm(x_i+\gamma))\prod_{j=1}^n x_j \prod_{i=1}^m y_i}{\prod_{1\leq i<j\leq m}f(y_j\pm y_i)\prod_{1\leq i<j\leq n}f(x_i\pm x_j)\prod_{i=1}^n f(2x_i)\prod_{j=1}^m f(2y_j)}\nonumber\\
&\hphantom{=\ }\times\dfrac{\prod_{1\leq i < j\leq n} (x_j^2-x_i^2)\prod_{1\leq i < j\leq m} (y_j^2-y_i^2)}{\prod_{i=1}^n (i-1)!\prod_{j=1}^m (j-1)!}\nonumber\\
&\hphantom{=\ }\times\det
\begin{pmatrix}[2]
\hat g(x_1^2, y_1^2) & \frac{\partial}{\partial x_2^2}\hat g(x_2^2, y_1^2) & \cdots & \frac{\partial^{n-1}}{\partial x_n^{2(n-1)}}\hat g(x_n^2, y_1^2)\\
 \frac{\partial}{\partial y_2^2}\hat g(x_1^2, y_2^2) & \frac{\partial^2}{\partial x_2^2 y_2^2}\hat g(x_2^2, y_2^2) & \cdots & \frac{\partial^n}{\partial x_n^{2(n-1)}y_2^2}\hat g(x_n^2, y_2^2)\\
\vdots & \vdots & \ddots & \vdots\\
 \frac{\partial^{m-1}}{\partial y_m^{2(m-1)}} \hat g(x_1^2, y_m^2) & \frac{\partial^m}{\partial x_2^2 y_m^{2(m-1)}}\hat g(x_2^2, y_m^2) & \cdots & \frac{\partial^{n+m-2}}{\partial x_n^{2(n-1)}y_m^{2(m-1)}}\hat g(x_n^2, y_m^2)\\
\hat f((2(n-m)x_1)^2) & \frac{\partial}{\partial x_2^2}\hat f((2(n-m)x_2)^2) & \cdots & \frac{\partial^{n-1}}{\partial x_n^{2(n-1)}}\hat f((2(n-m)x_n)^2)\\
\vdots & \vdots & \ddots & \vdots\\
\hat f((2 x_1)^2) & \frac{\partial}{\partial x_2^2}\hat f((2 x_2)^2) & \cdots & \frac{\partial^{n-1}}{\partial x_n^{2(n-1)}}\hat f((2 x_n)^2)
\end{pmatrix}.
\end{align*}
To cancel the zeroes in the denominators, the prefactors can be simplified by
\begingroup
\allowdisplaybreaks[0]
\begin{align*}
&\lim_{x_i,y_j\to 0}\frac{\prod_{i=1}^n x_i \prod_{j=1}^m y_j \prod_{1\leq i < j\leq n} (x_j^2-x_i^2)\prod_{1\leq i < j\leq m} (y_j^2-y_i^2)}{\prod_{i=1}^n f(2x_i)\prod_{j=1}^m f(2y_j)\prod_{1\leq i<j\leq n}f(x_i\pm x_j)\prod_{1\leq i<j\leq m}f(y_j\pm y_i)}\\
&=\frac{(-1)^{\binom{n}{2}}}{2^{n^2+n+m^2+m}}.
\end{align*}
\endgroup
Moreover
$$f(2l x)
=4lx\left(1+\frac{(2 lx)^2}{3!}+\frac{(2 lx)^4}{5!}+\cdots\right),$$
so $$\hat f(x)=2\left(1+\frac{x}{3!}+\frac{x^2}{5!}+\cdots\right),$$
and 
\begin{align*}
\frac{\partial^k}{\partial x^{2k}}\hat f((2 lx)^2)\big|_{x=0}
=
\frac{2(2 l)^{2k}k!}{(2k+1)!}.
\end{align*}
For $\gamma=4\pi \i /3$, 
$$g(x,y)
=\frac{f(x-y)}{f(3x-3y)}-\frac{f(x+y)}{f(3x+3y)}.$$ 
Following \cite{ColomoPronko2006}, we use Fourier transforms to see that 
$$\frac{f(x)}{f(3x)}=\frac{1}{2\sqrt 3}\int_{-\infty}^{\infty} e^{-\i xt}\frac{\sinh(\pi t/6)}{\sinh(\pi t /2)}dt.$$
From this, it follows that 
\begin{align*}
\frac{f(x-y)}{f(3x-3y)}-\frac{f(x+y)}{f(3x+3y)}=
\frac{2}{\sqrt 3}\int_0^\infty \sin(xt) \sin(yt)\frac{\sinh(\pi t/6)}{\sinh(\pi t /2)}dt.
\end{align*}
By Taylor expansion and Lebesgue's dominated convergence theorem we get
\begin{align*}
g(x,y)
=xy\frac{2}{\sqrt 3}\sum_{k,l=0}^\infty\left(\frac{(-1)^{k+l}x^{2k}y^{2l}}{(2k+1)!(2l+1)!}\int_0^\infty t^{2k+2l+2}\frac{\sinh(\pi t/6)}{\sinh(\pi t /2)}dt\right),
\end{align*}
for $|x|,|y|<\pi/3$, 
so $$\hat g(x^2,y^2) =\frac{2}{\sqrt 3} \sum_{k,l=0}^\infty\left(\frac{(-1)^{k+l}x^{2k}y^{2l}}{(2k+1)!(2l+1)!}\int_0^\infty t^{2k+2l+2}\frac{\sinh(\pi t/6)}{\sinh(\pi t /2)}dt\right).$$
Hence
\begin{align*}
\frac{\partial^{k+l}}{\partial x^{2k}y^{2l}} \hat g(x^2, y^2)\big|_{x=y=0}
=\frac{2(-1)^{k+l}k!l!}{\sqrt 3 (2k+1)!(2l+1)!}\int_0^\infty t^{2k+2l+2}\frac{\sinh(\pi t/6)}{\sinh(\pi t /2)}dt.
\end{align*}
Let $c_k$ denote the moment 
\begin{equation*}
c_k=\int t^k d\mu(t),
\end{equation*}
where 
$$\int f(t) d\mu(t)=\int_0^\infty f(t^2) t^2 \frac{\sinh(\pi t/6)}{\sinh(\pi t /2)}dt.$$
The corresponding inner product is
$$\left\langle p(t), q(t)\right\rangle=\int_0^\infty p(t^2) q(t^2) t^2 \frac{\sinh(\pi t/6)}{\sinh(\pi t /2)}dt,$$
for polynomials $p$ and $q$.

In the limit $x_i, y_j\to 0$, the partition function thus equals
\begin{align*}
&Z_{n,m}(\gamma, \dots, \gamma, 0, \dots, 0)
=
(-1)^{\binom{n}{2}+mn}\prod_{i=1}^n \frac{1}{(2i-1)!}\prod_{j=1}^m \frac{1}{(2j-1)!}\nonumber\\ 
&\quad\times\frac{2^{n-2m-m^2-n^2}}{3^{m/2}}(n-m)!\ \varphi^{n-m}e^{\left(\binom{m}{2}-mn\right)\gamma}f(\gamma)^{m+2mn}f(2\gamma)^n(1-e^{2\zeta})^m \nonumber\\
&\quad\times\det
\begin{pmatrix}[1.5]
c_0 & -c_1 & \cdots & (-1)^{n+1} c_{n-1}\\
-c_1 & c_2 & \cdots & (-1)^{n+2} c_n\\
\vdots & \vdots & \ddots & \vdots\\
(-1)^{m+1} c_{m-1} & (-1)^{m+2}c_m & \cdots & (-1)^{n+m} c_{n+m-2}\\
1 & (2(n-m))^2 & \cdots & (2(n-m))^{2(n-1)}\\
\vdots & \vdots & \ddots & \vdots\\
1 & 2^2 & \cdots & 2^{2(n-1)}
\end{pmatrix}.
\end{align*}
We can write the entries $c_k$ in terms of inner products. Since $(-1)^{\frac{n(n-1)}{2}+\left\lfloor \frac{n}{2}\right\rfloor}=1$ for all $n$, the partition function above becomes
\begin{align*}
&Z_{n,m}(\gamma, \dots, \gamma, 0, \dots, 0)=(-1)^{mn-m(m-1)/2}\prod_{i=1}^n \frac{1}{(2i-1)!}\prod_{j=1}^m \frac{1}{(2j-1)!}\nonumber\\
&\quad\times \frac{2^{n-2m-m^2-n^2}}{3^{m/2}} (n-m)!\ \varphi^{n-m}e^{\left(\binom{m}{2}-nm\right)\gamma}f(\gamma)^{2mn+m}f(2\gamma)^n(1-e^{2\zeta})^m\nonumber\\
&\quad\times\det
\begin{pmatrix}[1.5]
\left\langle 1,1\right\rangle & \left\langle t,1\right\rangle & \cdots & \left\langle t^{n-1},1\right\rangle\\
\left\langle 1,t\right\rangle & \left\langle t,t\right\rangle & \cdots & \left\langle t^{n-1},t\right\rangle\\
\vdots & \vdots & \ddots & \vdots\\
\left\langle 1, t^{m-1}\right\rangle & \left\langle t, t^{m-1}\right\rangle & \cdots & \left\langle t^{n-1}, t^{m-1}\right\rangle\\
1 & -(2(n-m))^2 & \cdots & (-1)^{n-1}(2(n-m))^{2(n-1)}\\
\vdots & \vdots & \ddots & \vdots\\
1 & -2^2 & \cdots & (-1)^{n-1}2^{2(n-1)}
\end{pmatrix}.
\end{align*}
Similar determinants show up e.g. in \cite{ColomoPronko2006}. By row and column operations, the determinant can be rewritten, and the partition function is
\begin{align*}
&Z_{n,m}(\gamma, \dots, \gamma, 0, \dots, 0) =(-1)^{mn-m(m-1)/2}\prod_{j=1}^n \frac{1}{(2j-1)!}\prod_{j=1}^m \frac{1}{(2j-1)!}\\
&\quad\times \frac{2^{n-2m-m^2-n^2}}{3^{m/2}} (n-m)!\ \varphi^{n-m}e^{\left(\binom{m}{2}-nm\right)\gamma}f(\gamma)^{2mn+m}f(2\gamma)^n(1-e^{2\zeta})^m\\
&\quad \times\det
\begin{pmatrix}[1.5]
\left\langle p_0(t), q_0(t)\right\rangle & \left\langle p_1(t), q_0(t)\right\rangle & \cdots & \left\langle p_{n-1}(t),q_0(t)\right\rangle\\
\left\langle p_0(t), q_1(t)\right\rangle & \left\langle p_1(t), q_1(t)\right\rangle & \cdots & \left\langle p_{n-1}(t),q_1(t)\right\rangle\\
\vdots & \vdots & \ddots & \vdots\\
\left\langle p_0(t), q_{m-1}(t)\right\rangle & \left\langle p_1(t), q_{m-1}(t)\right\rangle & \cdots & \left\langle p_{n-1}(t), q_{m-1}(t)\right\rangle\\
p_0(-(2(n-m))^2) & p_1(-(2(n-m))^2) & \cdots & p_{n-1}(-(2(n-m))^2)\\
\vdots & \vdots & \ddots & \vdots\\
p_0(-4^2) & p_1(-4^2) & \cdots & p_{n-1}(-4^2)\\
p_0(-2^2) & p_1(-2^2) & \cdots & p_{n-1}(-2^2)
\end{pmatrix},
\end{align*}
where $p_k(t)$ and $q_k(t)$ are arbitrary monic polynomials of degree $k$. We can choose $\{p_i=q_i\}_{i=1}^n$ orthogonal, i.e.
\begin{equation}
\label{orthogonality}
\left\langle p_k(t), p_l(t)\right\rangle=\int p_k(t) p_l(t) d\mu(t)=h_k \delta_{kl},
\end{equation}
for some $h_k$. Then most entries in the upper part of the matrix become $0$, and we can simplify the expression further. We also rearrange the rows.
Then the partition function is
\begin{align}
\label{partfcn61}
&Z_{n,m}(\gamma, \dots, \gamma, 0, \dots, 0)\nonumber\\
&=\frac{2^{n-2m-m^2-n^2}}{3^{m/2}} (n-m)!\ \varphi^{n-m}e^{\left(\binom{m}{2}-nm\right)\gamma}f(\gamma)^{2mn+m}f(2\gamma)^n(1-e^{2\zeta})^m\nonumber\\
&\hphantom{=\ }\times (-1)^{\binom{n}{2}+m}\prod_{j=1}^n \frac{1}{(2j-1)!}\prod_{j=1}^m \frac{1}{(2j-1)!} \prod_{i=0}^{m-1} \left\langle p_i(t), p_i(t)\right\rangle\nonumber\\
&\hphantom{=\ }\times \det
\begin{pmatrix}[1.5]
p_m(-2^2) & p_{m+1}(-2^2) &\cdots & p_{n-1}(-2^2)\\
p_m(-4^2) & p_{m+1}(-4^2) &\cdots & p_{n-1}(-4^2)\\
\vdots & \vdots &\ddots & \vdots\\
p_m(-(2(n-m))^2) & p_{m+1}(-(2(n-m))^2) & \cdots & p_{n-1}(-(2(n-m))^2)
\end{pmatrix}.
\end{align}

\subsection{Wilson polynomials}
Colomo and Pronko \Cite{ColomoPronko2006} found a way to write the determinant formula of the partition function of the 6V model with DWBC in terms of orthogonal polynomials. In this section we show that we can rewrite the determinant formula \ref{determinantformula} in terms of orthogonal polynomials as well.  

We can rewrite the weight $\mu(t)$ in terms of the gamma function defined as the analytic continuation of the integral 
$$\Gamma(z)=\int_0^\infty x^{z-1} e^{-x} dx,$$ 
which is defined only for $\Re(z)>0$. It can be shown that the gamma function satisfies the following identities:
$$\left|\Gamma(1+\i x)\right|^2=\frac{\pi x}{\sinh(\pi x)}$$
and
\begin{equation}
\label{additionformulagamma}
\prod_{k=0}^{l-1}\Gamma\left(x+\frac{k}{l}\right)=(2\pi)^{\frac{l-1}{2}} l^{\frac{1}{2}-lx}\Gamma(lx).
\end{equation}
Hence we have
\begin{align}
\label{orthoweight}
\mu(t)= t^2 \frac{\sinh(\pi t/6)}{\sinh(\pi t /2)}=\frac{3^2}{2^2\pi^3}\left|\frac{\Gamma(\i t/6+1/3)\Gamma(\i t/6+1/2)\Gamma(\i t/6+2/3)\Gamma(\i t/6+1)}{\Gamma(\i t/3)}\right|^2.
\end{align}
This is the orthogonality weight for a known family of polynomials, namely, the Wilson polynomials. These are defined in terms of generalized hypergeometric series as (see e.g. \cite{KoekoekSwarttouw1998})
\begin{equation*}
\frac{W_k(x^2; a, b,c,d)}{(a+b)_k (a+c)_k (a+d)_k}= {}_4 F_3 \left( \begin{matrix} -k, k+a+b+c+d-1, a+\i x, a-\i x\\ a+b, a+c, a+d \end{matrix} \bigg| 1 \right),
\end{equation*}
where the orthogonality condition reads
\begin{align}
\label{wilsonpolynomial}
&\frac{1}{2\pi} \int_0^\infty W_k(x^2; a, b,c,d) W_l(x^2; a, b,c,d)\left|\frac{\Gamma(\i x+a)\Gamma(\i x+b)\Gamma(\i x+c)\Gamma(\i x+d)}{\Gamma(2\i x)} \right|^2 dx\nonumber\\
&=\frac{\Gamma(k+a+b)\Gamma(k+a+c)\Gamma(k+a+d)\Gamma(k+b+c)\Gamma(k+b+d)\Gamma(k+c+d)}{\Gamma(2k+a+b+c+d)}\nonumber\\
&\hphantom{=\ }\times (k+a+b+c+d-1)_k k! \delta_{kl},
\end{align}
and where $(a)_k=a(a+1)\cdots(a+k-1)$, for $k\geq 1$, and $(a)_0=1$ is the rising factorial. 
Comparing this to \eqref{orthogonality} and \eqref{orthoweight}, we choose the parameters $a=1/3$, $b=1/2$, $c=2/3$, $d=1$ and $x=t/6$. Then the Wilson polynomials are
\begin{align}
\label{wilsonpoly}
W_k\left(\left(\frac{t}{6}\right)^2; \frac{1}{3}, \frac{1}{2}, \frac{2}{3}, 1\right)
= (5/6)_k (4/3)_k k!  \sum_{j=0}^k \frac{(-k)_j (3/2+k)_j(1/3+\i t/6)_j(1/3-\i t/6)_j}{(5/6)_j (4/3)_j (j!)^2}.
\end{align}
These are polynomials in $t^2$ of degree $k$, and leading coefficient
\begin{align}
\label{kappak}
\kappa_k  
=  \frac{(-1)^k (3/2+k)_k 
}{6^{2k}}. 
\end{align}
The polynomials can hence be written $W_k\left(\left(\frac{t}{6}\right)^2; \frac{1}{3}, \frac{1}{2}, \frac{2}{3}, 1\right)=\kappa_k p_k(t^2)$, where $p_k(t)$ is a monic polynomial of degree $k$ in $t$.
The right hand side of \eqref{wilsonpolynomial} with the above choices of parameters is 
\begin{align}
\label{rhs}
&\frac{\Gamma(k+5/6)\Gamma(k+1)\Gamma(k+4/3)\Gamma(k+7/6)\Gamma(k+3/2)\Gamma(k+5/3)}{\Gamma(2k+5/2)} (k+3/2)_k k!\delta_{kl}\nonumber\\ &=
\frac{\pi^2 (k+3/2)_k k! (6k+4)!(2k+1)!}{2^{2k-1}3^{6k+9/2}(4k+3)!}\delta_{kl},
\end{align}
where we have used \eqref{additionformulagamma} and the fact that $\Gamma(j+1)=j!$ if $j$ is a non-negative integer.

We insert \eqref{wilsonpolynomial}-\eqref{rhs} into \eqref{orthogonality}, and get
\begin{align*}
&\left\langle p_k(t), p_l(t) \right\rangle 
=\frac{2^{2k+1} k! (6k+4)!(2k+1)!}{3^{2k+3/2} (3/2+k)_k(4k+3)!}\delta_{kl}.
\end{align*}
Hence 
\begin{align*}
\prod_{j=0}^{m-1} \left\langle p_j(t), p_j(t)\right\rangle
&=\frac{2^{2m^2-m}}{3^{m^2+m/2} }\prod_{j=1}^{m} \frac{(6j-2)!((2j-1)!)^2(2j-2)!}{ (4j-3)! (4j-1)!},
\end{align*}
where we used that 
\begin{equation}
\label{factorialingamma}
(x)_k=\frac{\Gamma(x+k)}{\Gamma(x)}.
\end{equation}
Insert $p_k(t^2)=W_k\left(\left(\frac{t}{6}\right)^2; \frac{1}{3}, \frac{1}{2}, \frac{2}{3}, 1\right)/\kappa_k$ into the partition function \eqref{partfcn61}, which then reads
\begin{align*}
&Z_{n,m}(\gamma, \dots, \gamma, 0, \dots, 0)\\
&=
\frac{2^{n-n^2-3m+m^2}}{3^{m^2+m} } (n-m)!\ \varphi^{n-m}e^{\left(\binom{m}{2}-nm\right)\gamma}f(\gamma)^{2mn+m}f(2\gamma)^n(1-e^{2\zeta})^m\nonumber\\
&\hphantom{=\ }\times (-1)^{\binom{n}{2}+m}\prod_{i=m+1}^n \frac{1}{(2i-1)!}\prod_{j=1}^m \frac{(6j-2)!(2j-2)!}{(4j-3)! (4j-1)!}\nonumber\\
&\hphantom{=\ }\times\det_{1\leq l,j\leq n-m}\left(W_{m+j-1}\left(-\frac{l^2}{9}; \frac{1}{3}, \frac{1}{2}, \frac{2}{3}, 1\right)\bigg/\kappa_{m+j-1}\right).
\end{align*}
We factor out $\kappa_{m+j-1}$ from each column of the determinant and use \eqref{factorialingamma}, and get
\begin{align}
\label{lastpartfcn}
&Z_{n,m}(\gamma, \dots, \gamma, 0, \dots, 0)\nonumber\\
&=
\frac{2^{n^2-n-m^2-m}}{3^{2m^2-n^2+n} } (n-m)!\ \varphi^{n-m}e^{\left(\binom{m}{2}-nm\right)\gamma}f(\gamma)^{2mn+m}f(2\gamma)^n(1-e^{2\zeta})^m\nonumber\\
&\hphantom{=\ }\times (-1)^{\binom{m+1}{2}}\prod_{j=1}^n \frac{(2j-2)!}{(4j-3)!} \prod_{j=1}^m \frac{(6j-2)!}{ (4j-1)!}\prod_{j=m+1}^n \frac{1}{(j-1)!}\nonumber\\
&\hphantom{=\ }\times\det_{1\leq l,j\leq n-m}\left(W_{m+j-1}\left(-\frac{l^2}{9}; \frac{1}{3}, \frac{1}{2}, \frac{2}{3}, 1\right)\right).
\end{align}

\subsection{A formula for the number of states}
\label{subsec:final}
Next, we rewrite the partition function as a sum over the number of positive turns, $k$, to be able to identify the terms with the terms in the formulas from Section~\ref{sec:countingstates}. 
We write
\begin{align*}
1-e^{2\zeta}=-\frac{e^{\gamma}f(\gamma)}{f(2\gamma)}\left((e^{2\zeta}-e^{-2\gamma})+\frac{e^{2\zeta}-e^{2\gamma}}{e^{2\gamma}}\right), 
\end{align*}
and by using the binomial theorem we get
\begin{align*}
(1-e^{2\zeta})^m
=\left(-\frac{e^\gamma f(\gamma)}{f(2\gamma)}\right)^m\sum_{k=0}^m \left(\binom{m}{k}(e^{2\zeta}-e^{-2\gamma})^k\left(\frac{e^{2\zeta}-e^{2\gamma}}{e^{2\gamma}}\right)^{m-k}\right).
\end{align*}
For $\gamma=4\pi \i /3$, we have $f(2\gamma)=-f(\gamma)$ and $f(\gamma)^2=-3$.
The partition function \eqref{lastpartfcn} thus becomes
\begin{align*}
&Z_{n,m}(\gamma, \dots, \gamma, 0, \dots, 0)=\sum_{k=0}^m \left(\binom{m}{k}(e^{2\zeta}-e^{-2\gamma})^k\left(\frac{e^{2\zeta}-e^{2\gamma}}{e^{2\gamma}}\right)^{m-k}\right)\nonumber\\
&\hphantom{=\ }\times(-1)^{\binom{m+1}{2}+mn+n}\frac{2^{n^2-n-m^2-m}}{3^{2m^2-n^2+n-mn}} (n-m)!\ \varphi^{n-m}e^{\left(\binom{m+1}{2}-nm\right)\gamma}f(\gamma)^{m+n}\nonumber\\
&\hphantom{=\ }\times \prod_{j=1}^n \frac{(2j-2)!}{(4j-3)!} \prod_{j=1}^m \frac{(6j-2)!}{ (4j-1)!}\prod_{j=m+1}^n \frac{1}{(j-1)!}\det_{1\leq l,j\leq n-m}\left(W_{m+j-1}\left(-\frac{l^2}{9}; \frac{1}{3}, \frac{1}{2}, \frac{2}{3}, 1\right)\right).
\end{align*}

The terms $(e^{2\zeta}-e^{-2\gamma})^k\left(\frac{e^{2\zeta}-e^{2\gamma}}{e^{2\gamma}}\right)^{m-k}$ are linearly independent for different $k$'s as functions of $\zeta$, since the $k$th term has a zero of
degree $m-k$ in $\zeta =\gamma$. Therefore we can fix a $k$ as in the previous section, and we get 
\begin{align}
\label{partfcnk}
&Z_{n,m,k}(\gamma, \dots, \gamma, 0,\dots, 0)=\binom{m}{k}(e^{2\zeta}-e^{-2\gamma})^k\left(\frac{e^{2\zeta}-e^{2\gamma}}{e^{2\gamma}}\right)^{m-k}\nonumber\\
&\hphantom{=\ }\times(-1)^{\binom{m+1}{2}+mn+n}\frac{2^{n^2-n-m^2-m}}{3^{2m^2-n^2+n-mn}} (n-m)!\ \varphi^{n-m}e^{\left(\binom{m+1}{2}-nm\right)\gamma}f(\gamma)^{m+n}\nonumber\\
&\hphantom{=\ }\times \prod_{j=1}^n \frac{(2j-2)!}{(4j-3)!} \prod_{j=1}^m \frac{(6j-2)!}{ (4j-1)!}\prod_{j=m+1}^n \frac{1}{(j-1)!}\det_{1\leq l,j\leq n-m}\left(W_{m+j-1}\left(-\frac{l^2}{9}; \frac{1}{3}, \frac{1}{2}, \frac{2}{3}, 1\right)\right).
\end{align}
Now we can go back to $N_k$ \eqref{Nk}, the number of states 
where $k$ is the number of $k_+$ turns. We insert \eqref{partfcnk} and get the following theorem.
\begin{theorem}
\label{theorem:numberofstates}
For the 6V model with DWBC and a partially reflecting end, the number of states with exactly $k$ turns of type $k_+$ is
\begin{align}
\label{thmNk} 
N_k
&=\binom{m}{k}\frac{2^{n^2-n-m^2-m}(n-m)!}{3^{2m^2-m-n^2+n-mn}} \prod_{j=1}^n \frac{(2j-2)!}{(4j-3)!} \prod_{j=1}^m \frac{(6j-2)!}{ (4j-1)!}\prod_{j=m+1}^n \frac{1}{(j-1)!}\nonumber\\
&\hphantom{=\ }\times\det_{1\leq l,j\leq n-m}\left(W_{m+j-1}\left(-\frac{l^2}{9}; \frac{1}{3}, \frac{1}{2}, \frac{2}{3}, 1\right)\right).
\end{align}
\end{theorem}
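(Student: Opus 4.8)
The plan is to obtain \eqref{thmNk} purely by substitution: formula \eqref{Nk} already expresses $N_k$ as a $\zeta$- and $\varphi$-dependent prefactor times $Z_{n,m,k}(\gamma,\dots,\gamma,0,\dots,0)$, and \eqref{partfcnk} gives this latter quantity explicitly in terms of the Wilson determinant. So I would insert \eqref{partfcnk} into \eqref{Nk} and simplify the combined prefactor. All of the conceptual work by this point is complete --- the reduction of the homogeneous-limit partition function to the Wilson-polynomial determinant \eqref{lastpartfcn}, and the extraction of the fixed-$k$ piece $Z_{n,m,k}$ via the linear independence (as functions of $\zeta$) of the monomials $(e^{2\zeta}-e^{-2\gamma})^k\left((e^{2\zeta}-e^{2\gamma})/e^{2\gamma}\right)^{m-k}$ --- so what remains is bookkeeping.

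First I would cancel every $\zeta$-dependent factor. The factors $\left(e^{2\zeta}-e^{-2\gamma}\right)^{-k}$ and $\left(e^{2\gamma}/(e^{2\zeta}-e^{2\gamma})\right)^{m-k}$ appearing in \eqref{Nk} are precisely the reciprocals of the factors $(e^{2\zeta}-e^{-2\gamma})^{k}$ and $\left((e^{2\zeta}-e^{2\gamma})/e^{2\gamma}\right)^{m-k}$ in \eqref{partfcnk}, so they cancel, leaving a $\zeta$-independent result --- as must happen, since $N_k$ is a count. In the same way $\varphi^{m-n}$ in \eqref{Nk} cancels $\varphi^{n-m}$ in \eqref{partfcnk}, and the exponentials $e^{(nm-\binom{m+1}{2})\gamma}$ and $e^{(\binom{m+1}{2}-nm)\gamma}$ cancel. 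The combinatorial data --- $\binom{m}{k}$, $(n-m)!$, the products $\prod_{j=1}^n (2j-2)!/(4j-3)!$, $\prod_{j=1}^m (6j-2)!/(4j-1)!$, $\prod_{j=m+1}^n 1/(j-1)!$, and the Wilson determinant --- is inherited unchanged.

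The one genuinely delicate step, and hence where I expect the main (purely computational) obstacle, is consolidating the $f(\gamma)$-powers, the signs, and the powers of $3$. Multiplying $f(\gamma)^{m-n}$ from \eqref{Nk} by $f(\gamma)^{m+n}$ from \eqref{partfcnk} gives $f(\gamma)^{2m}=(f(\gamma)^2)^m=(-3)^m$, using $f(\gamma)^2=-3$ at $\gamma=4\pi\i/3$. I would then check the total sign: the two prefactors contribute $(-1)^{\binom{m}{2}-nm+n}(-1)^{\binom{m+1}{2}+mn+n}=(-1)^{2\binom{m}{2}+m+2n}=(-1)^m$, and this combines with the $(-1)^m$ from $(-3)^m$ to give $(-1)^{2m}=1$, so $N_k$ comes out positive, as required of a count. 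Finally the surviving $3^m$ merges with the factor $3^{-(2m^2-n^2+n-mn)}$ from \eqref{partfcnk} into $3^{-(2m^2-m-n^2+n-mn)}$, while the power $2^{n^2-n-m^2-m}$ carries over directly; assembling these factors reproduces exactly \eqref{thmNk}.
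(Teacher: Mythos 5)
Your proposal is correct and is exactly the paper's own argument: the theorem is obtained by substituting \eqref{partfcnk} into \eqref{Nk} and simplifying, and your bookkeeping of the cancellations (the $\zeta$-, $\varphi$- and exponential factors, the sign $(-1)^{m^2}$ combining with $f(\gamma)^{2m}=(-3)^m$ to give $+1$, and the merge of $3^m$ into the power of $3$) checks out.
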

As a corollary we get that the total number of states \eqref{numberofstates} of the model is
\begin{align*}
A(m,n)
&=\frac{2^{n^2-n-m^2}(n-m)!}{3^{2m^2-m-n^2+n-mn}} \prod_{j=1}^n \frac{(2j-2)!}{(4j-3)!} \prod_{j=1}^m \frac{(6j-2)!}{ (4j-1)!}\prod_{j=m+1}^n \frac{1}{(j-1)!}\nonumber\\
&\hphantom{=\ }\times\det_{1\leq l,j\leq n-m}\left(W_{m+j-1}\left(-\frac{l^2}{9}; \frac{1}{3}, \frac{1}{2}, \frac{2}{3}, 1\right)\right).
\end{align*}
From Proposition~\ref{prop:croppedASMs} it thus follows that the number of matrices described in Section~\ref{subsec:ASMs} is also given by this expression. 

\subsection{An alternative expression for the number of states}
In this section we derive another way to present the expression \eqref{thmNk}.
We can insert the formula \eqref{wilsonpoly} for $W_{m+j-1}\left(-\frac{k^2}{9}; \frac{1}{3}, \frac{1}{2}, \frac{2}{3}, 1\right)$ into the determinant $$D=\det_{1\leq k,j\leq n-m}\left(W_{m+j-1}\left(-\frac{k^2}{9}; \frac{1}{3}, \frac{1}{2}, \frac{2}{3}, 1\right)\right).$$ 
Factor out $(5/6)_{m+j-1} (4/3)_{m+j-1} (m+j-1)!$ from each column $j$. We can rewrite the sum in each determinant entry $$\sum_{l=0}^{m+j-1} \frac{(1-m-j)_l (1/2+m+j)_l(1/3-k/3)_l(1/3+k/3)_l}{(5/6)_l(4/3)_l (l!)^2}$$ of row $k$ and column $j$ to go from $0$ to $n-1$, since the terms disappear when $l> m+j-1$. We then get
\begin{align*}
D
&=\prod_{j=m}^{n-1}(5/6)_j (4/3)_j j!  \\
&\hphantom{=\ }\times\det_{1\leq k,j\leq n-m}\left( \sum_{l=0}^{n-1} \frac{(1-m-j)_l (1/2+m+j)_l(1/3+k/3)_l(1/3-k/3)_l}{(5/6)_l (4/3)_l (l!)^2}\right).
\end{align*}
We use linearity of the rows and write the determinant as a sum of determinants. The above becomes
\begin{align*}
D
&=\prod_{j=m}^{n-1}(5/6)_j (4/3)_j j!\sum_{l_1, l_2, \dots, l_{n-m}=0}^{n-1}\left(\prod_{i=1}^{n-m} \frac{((1-i)/3)_{l_i}((1+i)/3)_{l_i}}{(5/6)_{l_i} (4/3)_{l_i} (l_i!)^2}\right.\\
&\hphantom{=\ }\left.\times\det_{1\leq k,j\leq n-m}\left((1-m-j)_{l_k} (1/2+m+j)_{l_k}\right)\vphantom{\prod_{i=1}^{n-m}}\right).
\end{align*}
Since $(x)_j/(x)_i=(x+i)_{j-i}$, for $j\geq i$, we can write 
\begin{multline*}
(1-m-j)_{l_k} (1/2+m+j)_{l_k}\\
=\frac{(1-n)_{l_k}(m+3/2)_{l_k}}{(1-n)_{n-m-j}(m+3/2)_{j-1}}(1-n+l_k)_{n-m-j}(m+3/2+l_k)_{j-1},
\end{multline*}
and we can factor out all factors that depend either only on the row $k$ or only on the column~$j$. 
The determinant $D$ now becomes
\begin{align*}
D&=
\prod_{j=m}^{n-1}(5/6)_j (4/3)_j j! \sum_{l_1, l_2, \dots, l_{n-m}=0}^{n-1}\left(\prod_{i=1}^{n-m} \frac{((1-i)/3)_{l_i}((1+i)/3)_{l_i}(1-n)_{l_i}(m+3/2)_{l_i}}{(5/6)_{l_i} (4/3)_{l_i} (l_i!)^2 (1-n)_{i-1}(m+3/2)_{i-1}}\right. \\
&\hphantom{=\ }\times\left.\det_{1\leq k,j,\leq n-m}\left((1-n+l_k)_{n-m-j}(m+3/2+l_k)_{j-1}\right)\vphantom{\prod_{i=1}^{n-m}}\vphantom{\prod_{i=1}^{n-m}}\right).
\end{align*}
Each element of the alternant matrix $$\hat D=\det_{1\leq k,j,\leq n-m}\left((1-n+l_k)_{n-m-j}(m+3/2+l_k)_{j-1}\right)$$ is a polynomial in $l_k$ of degree $n-m-1$. This is a special case of a bigger family of determinants, see further \cite[Lemma~3]{Krattenthaler2001}. The determinant is 
$$\hat D=C \prod_{1\leq i < j \leq n-m} (l_i-l_j)$$ 
for some $C$ which does not depend on the $l_k$'s. To find $C$, we put $l_k=-k-m-1/2$ which makes the matrix triangular. Then 
\begin{align*}
&\prod_{j=0}^{n-m-1}(-1)^j j!(-1/2-n-m-j)_{n-m-1-j}
=C \prod_{1\leq i < j \leq n-m} (j-i),
\end{align*}
so 
\begin{equation*}
C
=\prod_{j=1}^{n-m}(5/2+2n-2j)_{j-1}.
\end{equation*}
Hence the determinant $D$ becomes
\begin{align*}
&D
=\prod_{j=m}^{n-1}(5/6)_j (4/3)_j j!\prod_{j=1}^{n-m} \frac{(5/2+2n-2j)_{j-1}}{(1-n)_{j-1}(m+3/2)_{j-1}}\\
&\times\sum_{l_1, l_2, \dots, l_{n-m}=0}^{n-1}\left(\prod_{i=1}^{n-m} \frac{((1-i)/3)_{l_i}((1+i)/3)_{l_i}(1-n)_{l_i}(m+3/2)_{l_i}}{(5/6)_{l_i} (4/3)_{l_i} (l_i!)^2}\vphantom{\prod_{1\leq i < j \leq n-m}}\prod_{1\leq i < j \leq n-m} (l_i-l_j)\right). 
\end{align*}
Series of similar type appear e.g. in \cite{GustafsonKrattenthaler1997, Schlosser2000}. 

Inserting the above into \eqref{thmNk} of the last section, we get an expression for the number of states with exactly $k$ turns of type $k_+$ as an $(n-m)$-fold hypergeometric sum, 
\begin{align*}
&N_k
=\binom{m}{k}\frac{2^{n^2-n-m^2-m}(n-m)!}{3^{2m^2-m-n^2+n-mn}} \prod_{j=1}^n \frac{(2j-2)!}{(4j-3)!}\prod_{j=1}^m \frac{(6j-2)!}{(4j-1)!} \nonumber\\
&\times\prod_{j=m+1}^n (5/6)_{j-1} (4/3)_{j-1}\prod_{j=1}^{n-m} \frac{(5/2+2n-2j)_{j-1}}{(1-n)_{j-1}(m+3/2)_{j-1}}\nonumber\\
&\times\sum_{l_1, l_2, \dots, l_{n-m}=0}^{n-1}\left(\prod_{i=1}^{n-m} \frac{((1-i)/3)_{l_i}((1+i)/3)_{l_i}(1-n)_{l_i}(m+3/2)_{l_i}}{(5/6)_{l_i} (4/3)_{l_i} (l_i!)^2}\vphantom{\prod_{1\leq i < j \leq n-m}}\prod_{1\leq i < j \leq n-m} (l_i-l_j)\right).
\end{align*}

\subsection*{Acknowledgments}
First I would like to thank my supervisor Hjalmar Rosengren for his support and encouragement throughout the whole research process. I am also very thankful for discussions with my cosupervisor Jules Lamers. 

\appendix
\section*{Appendix A: Foda--Wheeler method}
\addcontentsline{toc}{section}{Appendix A: Foda--Wheeler method}
\setcounter{section}{1}
\setcounter{theorem}{0}
\setcounter{equation}{0}
\setcounter{figure}{0}
\label{sec:fodawheeler}
Another way to find the partition function is suggested in \cite{FodaZarembo2016}, by starting from Tsuchiya's determinant for the partition function on a $2n\times n$ lattice, and step by step removing the extra vertical lines by letting the corresponding variables $\mu_j\to \infty$, following Foda and Wheeler \cite{FodaWheeler2012}. 

To do this we need to choose weights that behave well in the limit. The weights 
that we have chosen in Section~\ref{sec:prel} will do the job. 
In the lattice, the local weights are either $w(\lambda-\mu)$ for vertices on the upper part of a double line, or $w(\mu+\lambda)$ for vertices on the lower part, where $w$ is one of $a_\pm$, $b_\pm$ and $c_\pm$. When letting $\mu_i\to \infty$ the local weights become
\begingroup
\allowdisplaybreaks
\begin{align*}
&\lim_{\mu\to\infty} a_\pm(\lambda-\mu)=\lim_{\mu\to\infty} a_\pm(\lambda+\mu)=1,&\ \\
&\lim_{\mu\to\infty} b_-(\lambda-\mu)=e^{2\gamma},&\ &\lim_{\mu\to\infty} b_+(\lambda-\mu)=1,\\
&\lim_{\mu\to\infty} b_-(\lambda+\mu)=1, &  &\lim_{\mu\to\infty} b_+(\lambda+\mu)=e^{-2\gamma},\\
&\lim_{\mu\to\infty} c_-(\lambda-\mu)=-e^\gamma(e^\gamma-e^{-\gamma}), &\  &\lim_{\mu\to\infty} c_+(\lambda-\mu)=-\lim_{\mu\to\infty} e^\gamma(e^\gamma-e^{-\gamma}) \frac{1}{e^{2\mu-2\lambda}},\\
&\lim_{\mu\to\infty} c_+(\lambda+\mu)=\frac{e^\gamma-e^{-\gamma}}{e^\gamma},&\  &\lim_{\mu\to\infty} c_-(\lambda+\mu)=\lim_{\mu\to\infty} \left(\frac{e^\gamma-e^{-\gamma}}{e^\gamma}\right) \frac{1}{e^{2\mu+2\lambda}}.
\end{align*}
\endgroup
We start from Tsuchiya's determinant for $2n\times n$ lattices,
\begin{align}
Z_n(\boldsymbol{\lambda}, \boldsymbol{\mu})&= e^{-\binom{n+1}{2}\gamma} f(\gamma)^n \frac{\prod_{i=1}^n\left[e^{\mu_i+\zeta} f(\mu_i-\zeta)f(2\lambda_i)\right]\prod_{i=1}^n\prod_{j=1}^n f(\mu_j\pm\lambda_i)}{\prod_{1\leq i<j\leq n}\left[f(\mu_j\pm\mu_i)f(\lambda_i-\lambda_j)f(\lambda_i+\lambda_j+\gamma)\right]} \det_{1\leq i,j\leq n} M,
\label{Tsuchiyadeterminantformula}
\end{align}
where $M$ is an $n\times n$ matrix with
$$M_{ij}=
\frac{1}{f(\mu_i\pm\lambda_j)f(\mu_i\pm(\lambda_j+\gamma))}.
$$
Tsuchiya considered a diagonal $K$-matrix, but we have a triangular $K$-matrix. This does not affect the determinant formula in the $2n\times n$ case, since the ice rule implies that there can not be any $k_c$ turns, coming from the off-diagonal entry in the $K$-matrix.

We first consider the $2\times 1$ lattice with reflecting end, since the observations that can be made for this case are important in the general case. This partition function consists of two terms (see Figure~\ref{figure:kpkmiskc}), 
$$c_+(\lambda+\mu)k_+(\lambda)b_+(\lambda-\mu)+b_+(\lambda+\mu)k_-(\lambda)c_-(\lambda-\mu),$$
 which in the limit is $\frac{e^\gamma-e^{-\gamma}}{e^\gamma}(k_+(\lambda)-k_-(\lambda))$. It is easy to check that $k_+(\lambda)-k_-(\lambda)=\frac{k_c(\lambda)}{\varphi}$. Hence letting $\mu\to\infty$ yields that the partition function becomes $\frac{e^\gamma-e^{-\gamma}}{\varphi e^\gamma} k_c(\lambda)$. This makes sense in the picture as well, where removing the vertical line means that both the $k_+$ and the $k_-$ turn in the $2\times 1$ lattice become a $k_c$ turn in the $2\times 0$ lattice. Conversely, the $k_c$ turn in the smaller lattice can come from removing the vertical line from a state with a $k_+$ turn, as well as from a state with a $k_-$ turn.

\begin{figure}[h]
\centering
\begin{tikzpicture}[baseline={([yshift=-.5*10pt*0.6+6]current bounding box.center)}, scale=0.6, font=\scriptsize]
		\draw[midarrow={stealth}] (1.55,1.5-.25-.38) -- +(0.01,0);
		\draw[midarrow={stealth}] (1.55,1.5-.25+.38) -- +(0.01,0);
		
		\draw (0.38, 1.5-.25-.38) -- +(1.62, 0);
		\draw[->] (0.38, 1.5-.25+.38) -- +(1.62, 0);
		
		\draw (0.38,1.5-.25+.38) arc (90:270:0.38);
		
	\node[anchor=west] at (2, 1.5-.25-.38) {$-\lambda$};
		
		\draw[->] (1,0) -- +(0,2.5); 
			\draw[midarrow={stealth}] (1,0.55) -- +(0,0.01);
			\draw[midarrow={stealth reversed}] (1,1.5+.13+.43)  -- +(0,0.01);	
	
	\node at (1,-0.3) {$\mu$};
	
		
		\draw[midarrow={stealth}] (.55,0.5+.75-.38) -- +(0.01,0);
		\draw[midarrow={stealth reversed}] (.55,0.5+.75+.38) -- +(0.01,0);
		\draw [midarrow={stealth}] (1,0.87+.43) -- +(0,0.01); 
			
\end{tikzpicture}
 \! $+$ \ \
\begin{tikzpicture}[baseline={([yshift=-.5*10pt*0.6+6]current bounding box.center)}, scale=0.6, font=\scriptsize]
		\draw[midarrow={stealth}] (1.55,1.5-.25-.38) -- +(0.01,0);
		\draw[midarrow={stealth}] (1.55,1.5-.25+.38) -- +(0.01,0);
		
		\draw (0.38, 1.5-.25-.38) -- +(1.62, 0);
		\draw[->] (0.38, 1.5-.25+.38) -- +(1.62, 0);
		
		\draw (0.38,1.5-.25+.38) arc (90:270:0.38);
		
	\node[anchor=west] at (2, 1.5-.25-.38) {$-\lambda$};
		
		\draw[->] (1,0) -- +(0,2.5); 
			\draw[midarrow={stealth}] (1,0.55) -- +(0,0.01);
			\draw[midarrow={stealth reversed}] (1,1.5+.13+.43)  -- +(0,0.01);	
	
	\node at (1,-0.3) {$\mu$};
	
		
		\draw[midarrow={stealth reversed}] (.55,0.5+.75-.38) -- +(0.01,0);
		\draw[midarrow={stealth}] (.55,0.5+.75+.38) -- +(0.01,0);
		\draw [midarrow={stealth reversed}] (1,0.87+.43) -- +(0,0.01); 
			
\end{tikzpicture}
 \! $=$\quad $\dfrac{e^\gamma-e^{-\gamma}}{x e^\gamma}$\ $\times$\ \
\begin{tikzpicture}[baseline={([yshift=-.5*10pt*0.6+3]current bounding box.center)}, scale=0.6, font=\scriptsize]
		
		\draw (0.38, 1.5-.25-.38) -- +(0.62, 0);
		\draw[->] (0.38, 1.5-.25+.38) -- +(0.62, 0);
		
		\draw (0.38,1.5-.25+.38) arc (90:270:0.38);
		
	\node[anchor=west] at (1, 1.5-.25-.38) {$-\lambda$};
		
		\draw[midarrow={stealth}] (.55,0.5+.75-.38) -- +(0.01,0);
		\draw[midarrow={stealth}] (.55,0.5+.75+.38) -- +(0.01,0);
			
\end{tikzpicture}
\caption{A $k_c$ turn comes from the sum of a $k_+$ and $k_-$ turn, when removing the vertical line from the one size bigger lattice.}
\label{figure:kpkmiskc}
\end{figure}

\begin{figure}[ht]
\center
\subcaptionbox{\label{type1}}{%
\begin{tikzpicture}[baseline={([yshift=-.5*10pt*0.6+6]current bounding box.center)}, scale=0.6, font=\scriptsize]
\draw[midarrow={stealth}] (1.55,1.5-.25-.38) -- +(0.01,0);
		\draw[midarrow={stealth}] (1.55,1.5-.25+.38) -- +(0.01,0);
		
		\draw (0.38, 1.5-.25-.38) -- +(1.62, 0);
		\draw[->] (0.38, 1.5-.25+.38) -- +(1.62, 0);
		
		\draw (0.38,1.5-.25+.38) arc (90:270:0.38);
		
	\node[anchor=west] at (2, 1.5-.25-.38) {$-\lambda_i$};
		
		\draw[->] (1,0) -- +(0,2.5); 
			\draw[midarrow={stealth}] (1,0.55) -- +(0,0.01);
			\draw[midarrow={stealth reversed}] (1,1.5+.13+.43)  -- +(0,0.01);	
	
	\node at (1,-0.3) {$\mu_{m+1}$};
	
		
		\draw[midarrow={stealth}] (.55,0.5+.75-.38) -- +(0.01,0);
		\draw[midarrow={stealth reversed}] (.55,0.5+.75+.38) -- +(0.01,0);
		\draw [midarrow={stealth}] (1,0.87+.43) -- +(0,0.01); 
\end{tikzpicture}			
}\hfil
\subcaptionbox{\label{type2}}{%
\begin{tikzpicture}[baseline={([yshift=-.5*10pt*0.6+6]current bounding box.center)}, scale=0.6, font=\scriptsize]
		\draw[midarrow={stealth}] (1.55,1.5-.25-.38) -- +(0.01,0);
		\draw[midarrow={stealth}] (1.55,1.5-.25+.38) -- +(0.01,0);
		
		\draw (0.38, 1.5-.25-.38) -- +(1.62, 0);
		\draw[->] (0.38, 1.5-.25+.38) -- +(1.62, 0);
		
		\draw (0.38,1.5-.25+.38) arc (90:270:0.38);
		
	\node[anchor=west] at (2, 1.5-.25-.38) {$-\lambda_i$};
		
		\draw[->] (1,0) -- +(0,2.5); 
			\draw[midarrow={stealth}] (1,0.55) -- +(0,0.01);
			\draw[midarrow={stealth reversed}] (1,1.5+.13+.43)  -- +(0,0.01);	
	
	\node at (1,-0.3) {$\mu_{m+1}$};
	
		
		\draw[midarrow={stealth reversed}] (.55,0.5+.75-.38) -- +(0.01,0);
		\draw[midarrow={stealth}] (.55,0.5+.75+.38) -- +(0.01,0);
		\draw [midarrow={stealth reversed}] (1,0.87+.43) -- +(0,0.01); 
\end{tikzpicture}
}\hfil
\subcaptionbox{\label{type3}}{%
\begin{tikzpicture}[baseline={([yshift=-.5*10pt*0.6+6]current bounding box.center)}, scale=0.6, font=\scriptsize]
		\draw[midarrow={stealth}] (1.55,1.5-.25-.38) -- +(0.01,0);
		\draw[midarrow={stealth}] (1.55,1.5-.25+.38) -- +(0.01,0);
		
		\draw (0.38, 1.5-.25-.38) -- +(1.62, 0);
		\draw[->] (0.38, 1.5-.25+.38) -- +(1.62, 0);
		
		\draw (0.38,1.5-.25+.38) arc (90:270:0.38);
		
	\node[anchor=west] at (2, 1.5-.25-.38) {$-\lambda_i$};
		
		\draw[->] (1,0) -- +(0,2.5); 
			\draw[midarrow={stealth}] (1,0.55) -- +(0,0.01);
			\draw[midarrow={stealth}] (1,1.5+.13+.43)  -- +(0,0.01);	
	
	\node at (1,-0.3) {$\mu_{m+1}$};
	
		
		\draw[midarrow={stealth}] (.55,0.5+.75-.38) -- +(0.01,0);
		\draw[midarrow={stealth}] (.55,0.5+.75+.38) -- +(0.01,0);
		\draw[midarrow={stealth}] (1,0.87+.43) -- +(0,0.01); 
			
\end{tikzpicture}
}\hfil
\subcaptionbox{\label{type4}}{%
\begin{tikzpicture}[baseline={([yshift=-.5*10pt*0.6+6]current bounding box.center)}, scale=0.6, font=\scriptsize]
		\draw[midarrow={stealth}] (1.55,1.5-.25-.38) -- +(0.01,0);
		\draw[midarrow={stealth}] (1.55,1.5-.25+.38) -- +(0.01,0);
		
		\draw (0.38, 1.5-.25-.38) -- +(1.62, 0);
		\draw[->] (0.38, 1.5-.25+.38) -- +(1.62, 0);
		
		\draw (0.38,1.5-.25+.38) arc (90:270:0.38);
		
	\node[anchor=west] at (2, 1.5-.25-.38) {$-\lambda_i$};
		
		\draw[->] (1,0) -- +(0,2.5); 
			\draw[midarrow={stealth reversed}] (1,0.55) -- +(0,0.01);
			\draw[midarrow={stealth reversed}] (1,1.5+.13+.43)  -- +(0,0.01);	
	
	\node at (1,-0.3) {$\mu_{m+1}$};
	
		
		\draw[midarrow={stealth}] (.55,0.5+.75-.38) -- +(0.01,0);
		\draw[midarrow={stealth}] (.55,0.5+.75+.38) -- +(0.01,0);
		\draw[midarrow={stealth reversed}] (1,0.87+.43) -- +(0,0.01); 
\end{tikzpicture}
}
\caption{The $k_c$ turns in a lattice of size $2n\times m$ come from either one of these configurations in the lattice of size $2n\times (m+1)$, when letting $\mu_{m+1}\to\infty$.}
\label{figure:kcvertices}
\bigskip
\subcaptionbox*{}{%
\begin{tikzpicture}[baseline={([yshift=-.5*10pt*0.6+6]current bounding box.center)}, scale=0.6, font=\scriptsize]
		\draw[midarrow={stealth reversed}] (1.55,1.5-.25-.38) -- +(0.01,0);
		\draw[midarrow={stealth}] (1.55,1.5-.25+.38) -- +(0.01,0);
		
		\draw (0.38, 1.5-.25-.38) -- +(1.62, 0);
		\draw[->] (0.38, 1.5-.25+.38) -- +(1.62, 0);
		
		\draw (0.38,1.5-.25+.38) arc (90:270:0.38);
		
	\node[anchor=west] at (2, 1.5-.25-.38) {$-\lambda_i$};
		
		\draw[->] (1,0) -- +(0,2.5); 
			\draw[midarrow={stealth}] (1,0.55) -- +(0,0.01);
			\draw[midarrow={stealth}] (1,1.5+.13+.43)  -- +(0,0.01);	
	
	\node at (1,-0.3) {$\mu_{m+1}$};
	
		
		\draw[midarrow={stealth reversed}] (.55,0.5+.75-.38) -- +(0.01,0);
		\draw[midarrow={stealth}] (.55,0.5+.75+.38) -- +(0.01,0);
		\draw [midarrow={stealth}] (1,0.87+.43) -- +(0,0.01); 
			
\end{tikzpicture}
}\hfil
\subcaptionbox*{}{%
\begin{tikzpicture}[baseline={([yshift=-.5*10pt*0.6+6]current bounding box.center)}, scale=0.6, font=\scriptsize]
		\draw[midarrow={stealth reversed}] (1.55,1.5-.25-.38) -- +(0.01,0);
		\draw[midarrow={stealth}] (1.55,1.5-.25+.38) -- +(0.01,0);
		
		\draw (0.38, 1.5-.25-.38) -- +(1.62, 0);
		\draw[->] (0.38, 1.5-.25+.38) -- +(1.62, 0);
		
		\draw (0.38,1.5-.25+.38) arc (90:270:0.38);
		
	\node[anchor=west] at (2, 1.5-.25-.38) {$-\lambda_i$};
		
		\draw[->] (1,0) -- +(0,2.5); 
			\draw[midarrow={stealth reversed}] (1,0.55) -- +(0,0.01);
			\draw[midarrow={stealth reversed}] (1,1.5+.13+.43)  -- +(0,0.01);	
	
	\node at (1,-0.3) {$\mu_{m+1}$};
	
		
		\draw[midarrow={stealth reversed}] (.55,0.5+.75-.38) -- +(0.01,0);
		\draw[midarrow={stealth}] (.55,0.5+.75+.38) -- +(0.01,0);
		\draw [midarrow={stealth reversed}] (1,0.87+.43) -- +(0,0.01); 
			
\end{tikzpicture}
}\hfil
\subcaptionbox*{}{%
\begin{tikzpicture}[baseline={([yshift=-.5*10pt*0.6+6]current bounding box.center)}, scale=0.6, font=\scriptsize]
		\draw[midarrow={stealth}] (1.55,1.5-.25-.38) -- +(0.01,0);
		\draw[midarrow={stealth reversed}] (1.55,1.5-.25+.38) -- +(0.01,0);
		
		\draw (0.38, 1.5-.25-.38) -- +(1.62, 0);
		\draw[->] (0.38, 1.5-.25+.38) -- +(1.62, 0);
		
		\draw (0.38,1.5-.25+.38) arc (90:270:0.38);
		
	\node[anchor=west] at (2, 1.5-.25-.38) {$-\lambda_i$};
		
		\draw[->] (1,0) -- +(0,2.5); 
			\draw[midarrow={stealth}] (1,0.55) -- +(0,0.01);
			\draw[midarrow={stealth}] (1,1.5+.13+.43)  -- +(0,0.01);	
	
	\node at (1,-0.3) {$\mu_{m+1}$};
	
		
		\draw[midarrow={stealth}] (.55,0.5+.75-.38) -- +(0.01,0);
		\draw[midarrow={stealth reversed}] (.55,0.5+.75+.38) -- +(0.01,0);
		\draw [midarrow={stealth}] (1,0.87+.43) -- +(0,0.01); 
			
\end{tikzpicture}
}\hfil
\subcaptionbox*{}{%
\begin{tikzpicture}[baseline={([yshift=-.5*10pt*0.6+6]current bounding box.center)}, scale=0.6, font=\scriptsize]
		\draw[midarrow={stealth}] (1.55,1.5-.25-.38) -- +(0.01,0);
		\draw[midarrow={stealth reversed}] (1.55,1.5-.25+.38) -- +(0.01,0);
		
		\draw (0.38, 1.5-.25-.38) -- +(1.62, 0);
		\draw[->] (0.38, 1.5-.25+.38) -- +(1.62, 0);
		
		\draw (0.38,1.5-.25+.38) arc (90:270:0.38);
		
	\node[anchor=west] at (2, 1.5-.25-.38) {$-\lambda_i$};
		
		\draw[->] (1,0) -- +(0,2.5); 
			\draw[midarrow={stealth reversed}] (1,0.55) -- +(0,0.01);
			\draw[midarrow={stealth reversed}] (1,1.5+.13+.43)  -- +(0,0.01);	
	
	\node at (1,-0.3) {$\mu_{m+1}$};
	
		
		\draw[midarrow={stealth}] (.55,0.5+.75-.38) -- +(0.01,0);
		\draw[midarrow={stealth reversed}] (.55,0.5+.75+.38) -- +(0.01,0);
		\draw [midarrow={stealth reversed}] (1,0.87+.43) -- +(0,0.01); 
			
\end{tikzpicture}
}
\vspace{-6mm}
\caption{The turns that are not $k_c$ turns in a lattice of size $2n\times m$ come from either one of these configurations in the lattice of size $2n\times (m+1)$, when letting $\mu_{m+1}\to\infty$.}
\label{figure:zeroconfigs}
\end{figure}

\begin{figure}[!htb]
\centering
\begin{tikzpicture}[baseline={([yshift=-.5*10pt*0.6+1pt]current bounding box.center)}, scale=0.6, font=\scriptsize]
	\foreach \y in {1,3,5} {
		
		\draw (0.38, 1.5*\y-.25-.38) -- +(1.62, 0);
		\draw (0.38, 1.5*\y-.25+.38) -- +(1.62, 0);
		
		\foreach \x in {-1,...,1} \draw (2.4+.2*\x, 1.5*\y-.25-.38) node{$\cdot\mathstrut$};
		\foreach \x in {-1,...,1} \draw (2.4+.2*\x, 1.5*\y-.25+.38) node{$\cdot\mathstrut$};
		
		\draw (0.38,1.5*\y-.25+.38) arc (90:270:0.38);
	}
		
	\node[anchor=west] at (3, 1.5-.25-.38) {$-\lambda_i$};
	\node[anchor=west] at (3, 1.5*3-.25-.38) {$-\lambda_{k}$};
	\node[anchor=west] at (3, 1.5*5-.25-.38) {$-\lambda_j$};
		
	\foreach \x in {1} {
		
		\foreach \y in {-1,...,1} \draw (\x, 0-0.25+.2*\y) node{$\cdot\mathstrut$};
		\draw (\x,0-0.25+0.38) -- (1,3-0.25-.38); 
		\foreach \y in {-1,...,1} \draw (\x, 3-0.25+.2*\y) node{$\cdot\mathstrut$};
		\draw (\x,3-0.25+.38) -- (\x,6-0.25-.38); 
		\foreach \y in {-1,...,1} \draw (\x, 6-0.25+.2*\y) node{$\cdot\mathstrut$};
		\draw (\x,6-0.25+.38) -- (1,9-0.25-0.38); 
		\foreach \y in {-1,...,1} \draw (\x, 9-0.25+.2*\y) node{$\cdot\mathstrut$};
		\draw[->] (\x,9-0.25+0.38)  -- +(0,.2);

	}
	
	\node at (1, -0.25-.2-0.3) {${\mu_{m+1}}$};
	
		\draw[midarrow={stealth}] (.55,6.5+.75-.38) -- +(0.01,0);
		\draw[midarrow={stealth}] (.55,6.5+.75+.38) -- +(0.01,0);
		
		\draw[midarrow={stealth reversed}] (.55,3.5+.75-.38) -- +(0.01,0);
		\draw[midarrow={stealth}] (.55,3.5+.75+.38) -- +(0.01,0);
		
		
		\draw[midarrow={stealth}] (.55,0.5+.75-.38) -- +(0.01,0);
		\draw[midarrow={stealth}] (.55,0.5+.75+.38) -- +(0.01,0);
		
	
		\draw [midarrow={stealth}] (1.55, 7.5-.25-.38) -- +(0.01, 0);
		\draw [midarrow={stealth}] (1.55, 7.5-.25+.38) -- +(0.01, 0);
		
		\draw [midarrow={stealth}] (1.55, 4.5-.25-.38) -- +(0.01, 0);
		\draw [midarrow={stealth}] (1.55, 4.5-.25+.38) -- +(0.01, 0);
		
		
		\draw [midarrow={stealth}] (1.55, 1.5-.25-.38) -- +(0.01, 0);
		\draw [midarrow={stealth}] (1.55, 1.5-.25+.38) -- +(0.01, 0);
		
	

		\draw[midarrow={stealth}] (1,0.55) -- +(0,0.01);
			\draw [midarrow={stealth}] (1,0.87+.43) -- +(0,0.01); 
		  \draw [midarrow={stealth}] (1,1.63+.43) -- +(0,0.01); 
			\draw [midarrow={stealth}] (1,3.13+.43) -- +(0,0.01); 
			\draw [midarrow={stealth reversed}] (1,3.87+.43) -- +(0,0.01);
			\draw [midarrow={stealth reversed}] (1,4.63+.43) -- +(0,0.01); 
			\draw [midarrow={stealth reversed}] (1,6.13+.43) -- +(0,0.01); 
			\draw [midarrow={stealth reversed}] (1,6.87+.43) -- +(0,0.01); 
			
		\draw[midarrow={stealth reversed}] (1,7.5+.13+.43)  -- +(0,0.01);	
			\end{tikzpicture}
			\quad \ $+$ \quad \
			\begin{tikzpicture}[baseline={([yshift=-.5*10pt*0.6+1pt]current bounding box.center)}, scale=0.6, font=\scriptsize]
	\foreach \y in {1,3,5} {
		
		\draw (0.38, 1.5*\y-.25-.38) -- +(1.62, 0);
		\draw (0.38, 1.5*\y-.25+.38) -- +(1.62, 0);
		
		\foreach \x in {-1,...,1} \draw (2.4+.2*\x, 1.5*\y-.25-.38) node{$\cdot\mathstrut$};
		\foreach \x in {-1,...,1} \draw (2.4+.2*\x, 1.5*\y-.25+.38) node{$\cdot\mathstrut$};
		
		\draw (0.38,1.5*\y-.25+.38) arc (90:270:0.38);
	}
		
	\node[anchor=west] at (3, 1.5-.25-.38) {$-\lambda_i$};
	\node[anchor=west] at (3, 1.5*3-.25-.38) {$-\lambda_{k}$};
	\node[anchor=west] at (3, 1.5*5-.25-.38) {$-\lambda_j$};
		
	\foreach \x in {1} {
		
		\foreach \y in {-1,...,1} \draw (\x, 0-0.25+.2*\y) node{$\cdot\mathstrut$};
		\draw (\x,0-0.25+0.38) -- (1,3-0.25-.38); 
		\foreach \y in {-1,...,1} \draw (\x, 3-0.25+.2*\y) node{$\cdot\mathstrut$};
		\draw (\x,3-0.25+.38) -- (\x,6-0.25-.38); 
		\foreach \y in {-1,...,1} \draw (\x, 6-0.25+.2*\y) node{$\cdot\mathstrut$};
		\draw (\x,6-0.25+.38) -- (1,9-0.25-0.38); 
		\foreach \y in {-1,...,1} \draw (\x, 9-0.25+.2*\y) node{$\cdot\mathstrut$};
		\draw[->] (\x,9-0.25+0.38)  -- +(0,.2);

	}
	
	\node at (1, -0.25-.2-0.3) {${\mu_{m+1}}$};
	
		\draw[midarrow={stealth}] (.55,6.5+.75-.38) -- +(0.01,0);
		\draw[midarrow={stealth}] (.55,6.5+.75+.38) -- +(0.01,0);
		
		\draw[midarrow={stealth}] (.55,3.5+.75-.38) -- +(0.01,0);
		\draw[midarrow={stealth reversed}] (.55,3.5+.75+.38) -- +(0.01,0);
		
		
		\draw[midarrow={stealth}] (.55,0.5+.75-.38) -- +(0.01,0);
		\draw[midarrow={stealth}] (.55,0.5+.75+.38) -- +(0.01,0);
		
	
		\draw [midarrow={stealth}] (1.55, 7.5-.25-.38) -- +(0.01, 0);
		\draw [midarrow={stealth}] (1.55, 7.5-.25+.38) -- +(0.01, 0);
		
		\draw [midarrow={stealth}] (1.55, 4.5-.25-.38) -- +(0.01, 0);
		\draw [midarrow={stealth}] (1.55, 4.5-.25+.38) -- +(0.01, 0);
		
		
		\draw [midarrow={stealth}] (1.55, 1.5-.25-.38) -- +(0.01, 0);
		\draw [midarrow={stealth}] (1.55, 1.5-.25+.38) -- +(0.01, 0);
		
	

		\draw[midarrow={stealth}] (1,0.55) -- +(0,0.01);
			\draw [midarrow={stealth}] (1,0.87+.43) -- +(0,0.01); 
		  \draw [midarrow={stealth}] (1,1.63+.43) -- +(0,0.01); 
			\draw [midarrow={stealth}] (1,3.13+.43) -- +(0,0.01); 
			\draw [midarrow={stealth}] (1,3.87+.43) -- +(0,0.01);
			\draw [midarrow={stealth reversed}] (1,4.63+.43) -- +(0,0.01); 
			\draw [midarrow={stealth reversed}] (1,6.13+.43) -- +(0,0.01); 
			\draw [midarrow={stealth reversed}] (1,6.87+.43) -- +(0,0.01); 
			
		\draw[midarrow={stealth reversed}] (1,7.5+.13+.43)  -- +(0,0.01);	
		\end{tikzpicture}
\caption{The $k_c$ turns of a state of size $2n\times m$ come from letting $\mu_{m+1}\to\infty$, i.e. removing the leftmost vertical line, in all states of size $2n\times (m+1)$ where all turns are the same as in the smaller lattice, except at some row $k$, where the $k_c$ turn in the smaller lattice corresponds to a $k_+$ or a $k_-$ turn at row $k$ in the bigger lattice. In the bigger lattice, all $k_c$ double rows below row $k$ are forced to be of the type in Figure~\ref{type3} and all $k_c$ double rows above are of the type in Figure~\ref{type4}.}
\label{figure:leftborder}
\end{figure}
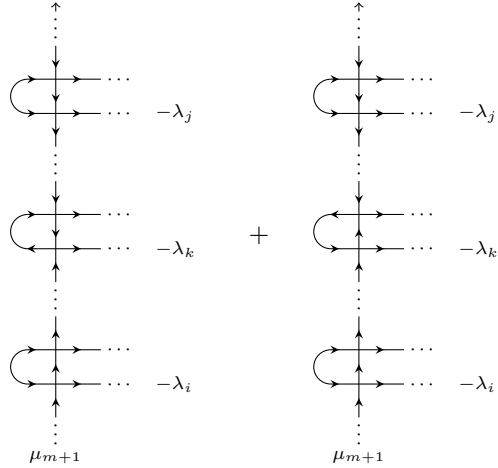

\begin{lemma}
Taking the limits $\mu_j\to\infty$, for $m+1\leq j\leq n$, in Tsuchiyas determinant formula $Z_n(\boldsymbol\lambda, \boldsymbol\mu)$, one after each other, starting with $\mu_n\to\infty$, we obtain
\begin{multline}
Z_{n,m}(\boldsymbol\lambda, \mu_1, \dots, \mu_m)\\
=\frac{\varphi^{n-m}}{(1-e^{-2(n-m)\gamma})\cdots(1-e^{-4\gamma})(1-e^{-2\gamma})}
\lim_{\mu_n, \mu_{n-1}, \dots, \mu_{m+1} \to\infty} Z_n(\boldsymbol\lambda, \boldsymbol\mu).
\label{recursionrelation}
\end{multline}
\end{lemma}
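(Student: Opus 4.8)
The plan is to establish the relation by induction, removing one spectral parameter at a time, so that it suffices to prove the single-step identity
\[
\lim_{\mu_{m+1}\to\infty} Z_{n,m+1}(\boldsymbol\lambda,\mu_1,\dots,\mu_{m+1})
=\frac{1-e^{-2(n-m)\gamma}}{\varphi}\,Z_{n,m}(\boldsymbol\lambda,\mu_1,\dots,\mu_m)
\]
for every $0\le m<n$, where each $Z_{n,j}$ is read as the state sum \eqref{naivepartition} for the $2n\times j$ model, so that $Z_{n,n}=Z_n$ is Tsuchiya's partition function. Iterating this from $j=n$ down to $j=m+1$ and multiplying the resulting factors $\varphi/(1-e^{-2s\gamma})$, $s=1,\dots,n-m$, reproduces exactly the prefactor in \eqref{recursionrelation}; the base instance $n=1$, $m=0$ is the computation already carried out for the $2\times1$ lattice around Figure~\ref{figure:kpkmiskc}.

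For the single step I would work directly with the state sum for the $2n\times(m+1)$ lattice and send the spectral parameter $\mu_{m+1}$ of the leftmost vertical line to infinity, using the limiting weights listed at the start of the appendix. The only weights that vanish in this limit are $c_+(\lambda-\mu)$ on an upper sub-row and $c_-(\lambda+\mu)$ on a lower sub-row, so a state survives only if the leftmost line carries neither. Since the vertical spin can change only at a $c$~vertex on that line, and the two surviving $c$~types ($c_+$ on a lower sub-row, $c_-$ on an upper sub-row) can each flip the spin in only one direction as one moves upward, the ice rule together with the domain-wall data on the leftmost line (up spin entering at the bottom, down spin leaving at the top) forces every surviving state to reverse its spin \emph{exactly once}, at a single double row $k$: up spin below row $k$ and down spin above it, as in Figure~\ref{figure:leftborder}. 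The $k_c$ double rows below $k$ are then frozen into the configuration of Figure~\ref{type3} and those above into that of Figure~\ref{type4}, the non-active $k_\pm$ rows become the pass-through configurations of Figure~\ref{figure:zeroconfigs}, and the active row $k$ is a $k_+$ or a $k_-$ turn.

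I would then read off the induced map onto states of the $2n\times m$ lattice: deleting the leftmost line leaves every vertex not on that line untouched, hence its weight is preserved, and each turn keeps its type except at the active row, where the deletion merges the $k_+$ and $k_-$ possibilities into a single $k_c$ turn. The two possibilities at the active row combine exactly as in the $2\times1$ computation: by $k_+-k_-=k_c/\varphi$ their limiting contributions add up to $\tfrac{e^\gamma-e^{-\gamma}}{\varphi e^\gamma}\,k_c=\tfrac{1-e^{-2\gamma}}{\varphi}\,k_c$. Finally one sums over the position of the active row, which may be any of the $n-m$ $k_c$ rows of the target state; the limiting $b_\pm$ weights accumulated along the up- and down-spin portions of the leftmost line differ by a factor $e^{-2\gamma}$ each time the reversal point is moved past one $k_c$ row, so the position sum is the geometric series $\sum_{j=0}^{n-m-1}e^{-2j\gamma}=(1-e^{-2(n-m)\gamma})/(1-e^{-2\gamma})$. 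Multiplying the two factors yields $\varphi^{-1}(1-e^{-2(n-m)\gamma})$, which is the single-step identity.

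The main obstacle will be the last bookkeeping step: confirming that, once the surviving spin pattern is fixed, the product of all limiting vertex weights along the leftmost line collapses to the clean factor $\varphi^{-1}(1-e^{-2(n-m)\gamma})$ independently of the rest of the target state. Concretely one must check (i) that the pass-through configurations at the non-active $k_\pm$ rows contribute region-independent weights, so that skipping them does not alter the ratio between consecutive active-row positions, and (ii) that the type-3 versus type-4 contribution of a $k_c$ row, combined with the normalisations hidden in $c_-(\lambda-\mu)\to-e^\gamma(e^\gamma-e^{-\gamma})$ and $c_+(\lambda+\mu)\to(e^\gamma-e^{-\gamma})/e^\gamma$, produces a uniform factor $e^{-2\gamma}$ per step and no stray spin-dependent signs. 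This is precisely the content encoded in Figures~\ref{figure:kcvertices}--\ref{figure:leftborder}, and once it is verified the induction closes.
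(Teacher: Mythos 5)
Your proposal is correct and follows essentially the same route as the paper's proof: reduce to the single-step limit $\mu_{m+1}\to\infty$, note that only states with a single surviving $c$ vertex on the leftmost vertical line contribute, combine the $k_+$ and $k_-$ options at the active double row via $k_+-k_-=k_c/\varphi$, and sum the geometric series in $e^{-2\gamma}$ over the $n-m$ possible positions of the reversal row before iterating. The paper performs exactly this bookkeeping (Figures~\ref{figure:kpkmiskc}--\ref{figure:leftborder}) and, like you, leaves the verification of the limiting weights of the frozen turn configurations to the figures.
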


\begin{proof}
Consider a lattice of general size. In each state, the leftmost vertical line has an odd number of $c_\pm$ vertices, because of the DWBC and the ice rule. The $c_\pm$ vertices come as products $c_+(\lambda_i+\mu_m)c_+(\lambda_j-\mu_m)$ or $c_-(\lambda_i+\mu_m)c_-(\lambda_j-\mu_m)$, except for one single $c_+(\lambda_k+\mu_m)$ or $c_-(\lambda_k-\mu_m)$ vertex which is left alone. In the limit $\mu_m\to \infty$, the nonzero contribution to $Z_{n,m}(\boldsymbol\lambda, \mu_1, \dots, \mu_m)$ comes from the states with exactly one $c_\pm$ vertex in the $m$th column, so we only need to consider these states. In this case, each state contains exactly one of the configurations in Figure~\ref{type1} or Figure~\ref{type2}.

Each state has $n-m$ turns of type $k_c$. For $m<n$, each state of size $2n \times m$ comes from letting $\mu_{m+1}\to\infty$ in states of size $2n\times (m+1)$. In the smaller lattice all turns are the same as in the bigger lattice, except for at one double row $k$, where the $k_c$ turn corresponds to either a $k_+$ turn or a $k_-$ turn in a bigger lattice configuration, just as in the case $n=1$. 
Each $k_+$ or $k_-$ turn in the smaller lattice comes from one of the configurations in Figure~\ref{figure:zeroconfigs} in the bigger lattice. In the limit, each state in Figure~\ref{figure:zeroconfigs}, although consisting of two vertices and one turn, also have total weights $k_+(\lambda_i)$ or $k_-(\lambda_i)$ respectively. 
We are interested in how the partition function for the smaller lattice differs from the bigger lattice, and therefore we only need to consider the $k_c$ turns in the $2n\times m$ lattice, see Figure~\ref{figure:leftborder}. All turns of type $k_c$ below row $k$ come from turns of the type in Figure~\ref{type3}. All turns of type $k_c$ above row $k$ come from turns of the type in Figure~\ref{type4}. Each turn of type \ref{type3} has weight $e^{-2\gamma}k_c(\lambda_i)$ and each turn of type \ref{type4} has weight $k_c(\lambda_i)$ in the limit. 

The row $k$ can be chosen in $n-m$ different ways, so we need to sum over all these possibilities. Putting all of the above together, we conclude that
\begin{align}
\label{znm1}
Z_{n,m+1}(\boldsymbol\lambda, \mu_1, \dots, \mu_{m+1}) \to \frac{e^\gamma-e^{-\gamma}}{\varphi e^\gamma}\left(1+\frac{1}{e^{2\gamma}}+\cdots+\frac{1}{e^{2(n-m-1)\gamma}}\right)\nonumber\\
\times Z_{n,m}(\boldsymbol\lambda, \mu_1, \dots, \mu_m),
\end{align}
as $\mu_{m+1}\to\infty$. 
By iteration, the lemma follows. 
\end{proof}

We now prove the determinant formula by induction. The following proposition is the base step. 

\begin{prop}
\label{basestep}
For the 6V model of size $2n\times (n-1)$ with DWBC and one partially reflecting end, the partition function is
\begin{align*}
&Z_{n,n-1}(\boldsymbol\lambda, \mu_1, \dots, \mu_{n-1}) \nonumber\\
&=\varphi e^{\left(1-\binom{n+1}{2}\right)\gamma}\frac{\prod_{i=1}^{n-1} \left[e^{\mu_i+\zeta} f(\mu_i-\zeta)\right]\prod_{i=1}^n f(2\lambda_i)\prod_{i=1}^n\prod_{j=1}^{n-1} f(\mu_j\pm\lambda_i)}{\prod_{1\leq i<j\leq n-1}f(\mu_j\pm\mu_i)\prod_{1\leq i<j\leq n}\left[f(\lambda_i-\lambda_j)f(\lambda_i+\lambda_j+\gamma)\right]}\nonumber\\
&\phantom{=\ }\times\det 
\renewcommand{\arraystretch}{2}
\begin{pmatrix}
\dfrac{1}{f(\mu_1\pm\lambda_1)f(\mu_1\pm(\lambda_1+\gamma))} & \cdots & \dfrac{1}{f(\mu_1\pm\lambda_n)f(\mu_1\pm(\lambda_n+\gamma))}\\
\vdots & \ddots & \vdots\\
\dfrac{1}{f(\mu_{n-1}\pm\lambda_1)f(\mu_{n-1}\pm(\lambda_1+\gamma))} & \cdots & \dfrac{1}{f(\mu_{n-1}\pm\lambda_n)f(\mu_{n-1}\pm(\lambda_n+\gamma))}\\
1 & \cdots & 1
\end{pmatrix}.
\end{align*}
\end{prop}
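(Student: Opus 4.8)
The plan is to read this formula off from Tsuchiya's determinant formula \eqref{Tsuchiyadeterminantformula} by sending a single spectral parameter to infinity. Since $m=n-1$, the recursion relation \eqref{recursionrelation} collapses to the single identity
$$Z_{n,n-1}(\boldsymbol\lambda,\mu_1,\dots,\mu_{n-1})=\frac{\varphi}{1-e^{-2\gamma}}\lim_{\mu_n\to\infty}Z_n(\boldsymbol\lambda,\boldsymbol\mu),$$
so the whole task reduces to evaluating $\lim_{\mu_n\to\infty}Z_n$ and then simplifying the constant. First I would isolate every factor in \eqref{Tsuchiyadeterminantformula} that depends on $\mu_n$: the term $e^{\mu_n+\zeta}f(\mu_n-\zeta)$ from the first product, the slice $\prod_{i=1}^n f(\mu_n\pm\lambda_i)$ of the double product, the slice $\prod_{i=1}^{n-1}f(\mu_n\pm\mu_i)$ of the denominator, and the last row $M_{nj}$ of the matrix. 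Everything else is manifestly equal to the prefactor fraction appearing in the statement, after merging the $\mu_n$-free factor $f(2\lambda_n)$ and the factors $e^{\mu_i+\zeta}f(\mu_i-\zeta)$ into the stated products.

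The core of the proof is the asymptotics as $\mu_n\to\infty$, all governed by $f(x)=2\sinh x=e^{x}(1-e^{-2x})\sim e^{x}$. Tracking leading exponentials, the isolated prefactor block grows like $e^{4\mu_n}$: one has $e^{\mu_n+\zeta}f(\mu_n-\zeta)\sim e^{2\mu_n}$, $\prod_{i=1}^n f(\mu_n\pm\lambda_i)\sim e^{2n\mu_n}$, and $\prod_{i=1}^{n-1}f(\mu_n\pm\mu_i)\sim e^{2(n-1)\mu_n}$, and because the additive shifts $\zeta,\lambda_i,\mu_i$ cancel pairwise, each rescaled block tends to exactly $1$. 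The matrix behaves oppositely: only the last row carries $\mu_n$, with $M_{nj}\sim e^{-4\mu_n}$ and, more precisely, $e^{4\mu_n}M_{nj}\to 1$ for every $j$ since the four shifts in $f(\mu_n\pm\lambda_j)f(\mu_n\pm(\lambda_j+\gamma))$ sum to zero. Expanding $\det M$ along its last row, whose cofactor minors are independent of $\mu_n$, then yields $e^{4\mu_n}\det M\to\det\tilde M$, where $\tilde M$ is exactly the matrix in the statement, with last row $(1,\dots,1)$. Multiplying the two rescalings the powers $e^{\pm4\mu_n}$ cancel, so $\lim_{\mu_n\to\infty}Z_n$ is finite and equals the stated prefactor fraction times $\det\tilde M$.

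Finally I would collect the constants. The surviving coefficient is $\tfrac{\varphi}{1-e^{-2\gamma}}$ times Tsuchiya's overall factor $e^{-\binom{n+1}{2}\gamma}f(\gamma)^n$; the identity $1-e^{-2\gamma}=e^{-\gamma}f(\gamma)$ turns $\tfrac{f(\gamma)}{1-e^{-2\gamma}}$ into $e^{\gamma}$, so the combined constant simplifies to $\varphi\,e^{(1-\binom{n+1}{2})\gamma}f(\gamma)^{n-1}$, in agreement with the $m=n-1$ specialization of Theorem~\ref{thm:determinantformula}. The main obstacle is entirely the bookkeeping: one must verify that each rescaled factor converges to $1$ rather than to a stray constant, which relies on the exact cancellation of all additive shifts in the exponents, and on the fact that the minors in the cofactor expansion of $\det M$ genuinely do not involve $\mu_n$. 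Once these are checked, the formula drops out.
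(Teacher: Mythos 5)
Your proof is correct and follows essentially the same route as the paper: both invoke the $m=n-1$ case of the limit relation \eqref{recursionrelation}, absorb every $\mu_n$-dependent factor of Tsuchiya's formula into the last row of the determinant, and verify that each rescaled entry tends to $1$ (the paper does this by setting up the full Taylor expansion $F(\lambda_j,\mu_n)=\sum_{k\geq 0}C_k(\lambda_j)e^{-2(k+2)\mu_n}$ with $C_0=1$, which it needs later for the induction step, whereas you use only the leading asymptotics $e^{4\mu_n}M_{nj}\to 1$ together with the cofactor expansion along the last row --- the same computation). One remark: the constant you obtain, $\varphi\, e^{(1-\binom{n+1}{2})\gamma}f(\gamma)^{n-1}$, is what the calculation (and the $m=n-1$ specialization of Theorem~\ref{thm:determinantformula}) actually yields, since $f(\gamma)^n/(1-e^{-2\gamma})=e^{\gamma}f(\gamma)^{n-1}$; the factor $f(\gamma)^{n-1}$ appears to have been dropped from the statement of the proposition as printed.
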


\begin{proof}
We start from Tsuchiya's determinant formula \eqref{Tsuchiyadeterminantformula}. Using \eqref{recursionrelation} for $m=n-1$, we compute the limit 
$$Z_{n,n-1}(\boldsymbol\lambda, \mu_1, \dots, \mu_{n-1})=\frac{\varphi}{1-e^{-2\gamma}}\lim_{\mu_n\to\infty} Z_n(\boldsymbol\lambda, \mu_1, \dots, \mu_n).$$
Absorb everything that has to do with $\mu_n$ into the last row of the determinant. Then
\begin{align*}
&Z_{n,n-1}(\boldsymbol\lambda, \mu_1, \dots, \mu_{n-1})\\
&= \frac{\varphi e^{-\binom{n+1}{2}\gamma} f(\gamma)^n}{1-e^{-2\gamma}}\frac{\prod_{i=1}^{n-1} \left[e^{\mu_i+\zeta} f(\mu_i-\zeta)\right]\prod_{i=1}^n f(2\lambda_i)\prod_{i=1}^n\prod_{j=1}^{n-1} f(\mu_j\pm\lambda_i)}{\prod_{1\leq i<j\leq n-1}f(\mu_j\pm\mu_i)\prod_{1\leq i<j\leq n}\left[f(\lambda_i-\lambda_j)f(\lambda_i+\lambda_j+\gamma)\right]}\\
&\phantom{=\ }\times\lim_{\mu_n\to\infty}  \det 
\renewcommand{\arraystretch}{1.5}
\begin{pmatrix}
F(\lambda_1, \mu_1) & \cdots & F(\lambda_n, \mu_1)\\
\vdots & \ddots & \vdots\\
F(\lambda_1, \mu_{n-1}) & \cdots & F(\lambda_n, \mu_{n-1})\\
d_1^n & \cdots & d_n^n
\end{pmatrix}, 
\end{align*}
where
\begin{equation}
\label{Flambdamu}
F(\lambda_i, \mu_j)=\frac{1}{f(\mu_j\pm\lambda_i)f(\mu_j\pm(\lambda_i+\gamma))},
\end{equation}
and
\begin{equation}
\label{djk}
d_j^k=\frac{e^{\mu_k+\zeta} f(\mu_k-\zeta)\prod_{i=1}^n f(\mu_k\pm\lambda_i)}{\prod_{i=1}^{k-1}f(\mu_k\pm\mu_i)} F(\lambda_j, \mu_k).
\end{equation}
Write $v_i=2\lambda_i+\gamma$. Then 
\begin{equation*}
F(\lambda_i, \mu_j)=\frac{e^{-4\pi}}{(1-e^{-2\mu_j+v_i+\gamma})(1-e^{-2\mu_j+v_i-\gamma})(1-e^{-2\mu_j-v_i+\gamma})(1-e^{-2\mu_j-v_i-\gamma})}.
\end{equation*}
By Taylor expansion 
\begin{equation*}
F(\lambda_i, \mu_j)=\sum_{k=0}^\infty C_k(\lambda_i) e^{-2(k+2)\mu_j},
\end{equation*}
where
\begin{equation}
\label{Ck}
C_k(\lambda_i)=\quad\sum_{\mathclap{\substack{k_1, k_2, k_3, k_4\geq 0 \\ k_1+k_2+k_3+k_4=k}}} e^{(k_1+k_2-k_3-k_4)v_i+(k_1-k_2+k_3-k_4)\gamma}.
\end{equation}
Taking the limit yields
$
\lim_{\mu_n\to\infty} d_j^n 
=C_0(\lambda_j)=1,
$
which proves the proposition.
\end{proof}

The next proposition yields the induction step. 

\begin{prop}
For the 6V model with DWBC and a partially reflecting end on a lattice of size $2n\times m$, $m\leq n$, the partition function is
\begin{align}
&Z_{n,m}(\boldsymbol\lambda, \mu_1, \dots, \mu_m)=\frac{\varphi^{n-m}e^{-\binom{n+1}{2}\gamma} f(\gamma)^n}{(1-e^{-2(n-m)\gamma})\cdots(1-e^{-4\gamma})(1-e^{-2\gamma})}\nonumber\\
&\phantom{=\ }\times
\frac{\prod_{i=1}^{m} \left[e^{\mu_i+\zeta} f(\mu_i-\zeta)\right]\prod_{i=1}^n f(2\lambda_i)\prod_{i=1}^n\prod_{j=1}^{m} f(\mu_j\pm\lambda_i)}{\prod_{1\leq i<j\leq m}f(\mu_j\pm\mu_i)\prod_{1\leq i<j\leq n}\left[f(\lambda_i-\lambda_j)f(\lambda_i+\lambda_j+\gamma)\right]}\nonumber\\
&\phantom{=\ }\times\det 
\renewcommand{\arraystretch}{1.5}
\begin{pmatrix}
\dfrac{1}{f(\mu_1\pm\lambda_1)f(\mu_1\pm(\lambda_1+\gamma))} & \cdots & \dfrac{1}{f(\mu_1\pm\lambda_n)f(\mu_1\pm(\lambda_n+\gamma))}\\[2ex]
\vdots & \ddots & \vdots\\[1.5ex]
\dfrac{1}{f(\mu_{m}\pm\lambda_1)f(\mu_{m}\pm(\lambda_1+\gamma))} & \cdots & \dfrac{1}{f(\mu_{m}\pm\lambda_n)f(\mu_{m}\pm(\lambda_n+\gamma))}\\[2ex]
C_{n-m-1}(\lambda_1) & \cdots & C_{n-m-1}(\lambda_n)\\
\vdots & \ddots & \vdots\\
C_0(\lambda_1) & \cdots & C_0(\lambda_n)
\end{pmatrix},
\label{Pnm}
\end{align}
where
$C_i(\lambda_j)$ are as in \eqref{Ck}. 
\end{prop}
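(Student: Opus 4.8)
The plan is to prove \eqref{Pnm} by downward induction on $m$, taking the base case to be Proposition~\ref{basestep} (the case $m=n-1$, whose proof already establishes the Taylor expansion $F(\lambda_j,\mu)=\sum_{l\geq0}C_l(\lambda_j)e^{-2(l+2)\mu}$ with the $C_l$ of \eqref{Ck}), and descending from $m+1$ to $m$ by means of the single-step recursion \eqref{znm1}; the remaining case $m=n$ is Tsuchiya's formula \eqref{Tsuchiyadeterminantformula} itself. So I assume that \eqref{Pnm} holds with $m$ replaced by $m+1$. Inverting \eqref{znm1} and using $e^\gamma-e^{-\gamma}=f(\gamma)$ together with $1-e^{-2\gamma}=e^{-\gamma}f(\gamma)$ to collapse the geometric sum, I get $Z_{n,m}=\Phi^{-1}\lim_{\mu_{m+1}\to\infty}Z_{n,m+1}$ with $\Phi^{-1}=\varphi/(1-e^{-2(n-m)\gamma})$, so everything reduces to computing a single limit.

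First I would isolate the $\mu_{m+1}$-dependence of the inductive expression for $Z_{n,m+1}$. In the prefactor exactly the factors $e^{\mu_{m+1}+\zeta}f(\mu_{m+1}-\zeta)$, $\prod_{i=1}^n f(\mu_{m+1}\pm\lambda_i)$ and $\prod_{i=1}^m f(\mu_{m+1}\pm\mu_i)^{-1}$ involve $\mu_{m+1}$, and the only $\mu_{m+1}$-dependent row of the determinant is the $(m+1)$st, namely $F(\lambda_j,\mu_{m+1})$ from \eqref{Flambdamu}. Absorbing the former group into the latter row turns row $m+1$ into $d_j^{m+1}$ as defined in \eqref{djk}, while the prefactor that survives is precisely the product $P_m$ appearing in \eqref{Pnm} at level $m$. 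The remaining rows, namely the $F$-rows for $\mu_1,\dots,\mu_m$ and the rows $C_{n-m-2}(\lambda_\bullet),\dots,C_0(\lambda_\bullet)$, are left untouched.

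The heart of the argument is to evaluate $\lim_{\mu_{m+1}\to\infty}\det$ with this absorbed row. Using the expansion $F(\lambda_j,\mu)=\sum_{l\geq0}C_l(\lambda_j)e^{-2(l+2)\mu}$, I write $d_j^{m+1}=\sum_{l\geq0}g_l(\mu_{m+1})\,C_l(\lambda_j)$, where each coefficient $g_l(\mu_{m+1})$ is independent of the column index $j$; a short leading-order computation (as in the proof of Proposition~\ref{basestep}) gives $g_l(\mu_{m+1})\sim e^{2(n-m-1-l)\mu_{m+1}}$ with leading coefficient $1$, so that $g_l$ diverges for $l\leq n-m-2$, $g_{n-m-1}\to1$, and $g_l\to0$ for $l\geq n-m$. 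The crucial observation is that the divergent contributions $C_0(\lambda_\bullet),\dots,C_{n-m-2}(\lambda_\bullet)$ are exactly the rows $n,n-1,\dots,m+2$ already present in the matrix; since $g_l$ does not depend on $j$, these are scalar multiples of other rows and can be removed by row operations without changing the determinant. After doing so, row $m+1$ converges to $C_{n-m-1}(\lambda_\bullet)$. I expect this divergence-killing step to be the main obstacle: one must justify truncating the expansion and passing to the limit term by term (legitimate because the omitted tail is uniformly small for large $\mu_{m+1}$) and verify that after the finitely many row operations the surviving row genuinely converges to the stated $C$-row.

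With row $m+1$ now equal to $C_{n-m-1}(\lambda_\bullet)$ and all other rows unchanged, the matrix is exactly the $n\times n$ matrix of \eqref{Pnm} at level $m$. It then remains only to reconcile the scalar factors: multiplying the inductive prefactor $\varphi^{n-m-1}e^{-\binom{n+1}{2}\gamma}f(\gamma)^n/\prod_{k=1}^{n-m-1}(1-e^{-2k\gamma})$ by $\Phi^{-1}=\varphi/(1-e^{-2(n-m)\gamma})$ telescopes the geometric factors into $\varphi^{n-m}e^{-\binom{n+1}{2}\gamma}f(\gamma)^n/\prod_{k=1}^{n-m}(1-e^{-2k\gamma})$, which is precisely the prefactor of \eqref{Pnm} at level $m$. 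This closes the induction.
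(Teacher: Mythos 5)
Your proposal is correct and follows essentially the same route as the paper's own proof: downward induction on the number of columns with Proposition~\ref{basestep} as the base case, inverting the single-step limit relation \eqref{znm1}, absorbing all $\mu_{m+1}$-dependent prefactors into the corresponding row to form $d_j^{m+1}$ as in \eqref{djk}, expanding via $F(\lambda_j,\mu)=\sum_{l\ge0}C_l(\lambda_j)e^{-2(l+2)\mu}$, cancelling the divergent $C_0,\dots,C_{n-m-2}$ contributions by row operations against the rows already present, and letting the surviving term converge to $C_{n-m-1}(\lambda_\bullet)$ while the scalar prefactors telescope. The only differences are cosmetic (indexing the step as $m+1\to m$ rather than $m\to m-1$, and making the asymptotics of the coefficients $g_l$ slightly more explicit).
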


\begin{proof}
Let $P_{n-m}$ be the claim that \eqref{Pnm} holds. We have already proven in Proposition~\ref{basestep} that $P_1$ is true. Assume $P_{n-m}$. We will prove that $P_{n-m+1}$ also holds.
From \eqref{znm1}, we have
\begin{equation*}
Z_{n,m-1}(\boldsymbol\lambda, \mu_1, \dots, \mu_{m-1})=\frac{\varphi}{1-e^{-2(n-m+1)\gamma}}\lim_{\mu_m\to\infty}Z_{n,m}(\boldsymbol\lambda, \mu_1, \dots, \mu_{m}) 
\end{equation*}
Insert \eqref{Pnm}. As in the proof of the previous proposition, absorb factors that have to do with $\mu_m$ into the $m$th row of the matrix,
\begin{align*}
&Z_{n,m-1}(\boldsymbol\lambda, \mu_1, \dots, \mu_{m-1})
=
\frac{\varphi^{n-m+1}e^{-\binom{n+1}{2}\gamma} f(\gamma)^n}{(1-e^{-2(n-m+1)\gamma})\cdots(1-e^{-4\gamma})(1-e^{-2\gamma})}\nonumber\\
&\hphantom{=\ }\times
\frac{\prod_{i=1}^{m-1} \left[e^{\mu_i+\zeta} f(\mu_i-\zeta)\right]\prod_{i=1}^n f(2\lambda_i)\prod_{i=1}^n\prod_{j=1}^{m-1} f(\mu_j\pm\lambda_i)}{\prod_{1\leq i<j\leq m-1}f(\mu_j\pm\mu_i)\prod_{1\leq i<j\leq n}\left[f(\lambda_i-\lambda_j)f(\lambda_i+\lambda_j+\gamma)\right]}\\
&\hphantom{=\ }\times\lim_{\mu_m\to\infty}\det 
\renewcommand{\arraystretch}{1.5}
\begin{pmatrix}
F(\lambda_1, \mu_1) & \cdots & F(\lambda_n, \mu_1)\\
\vdots & \ddots & \vdots\\
F(\lambda_1, \mu_{m-1}) & \cdots &F(\lambda_n, \mu_{m-1})\\
d_1^{m}(\epsilon_m) & \cdots & d_n^{m}(\epsilon_m)\\
C_{n-m-1}(\lambda_1) & \cdots & C_{n-m-1}(\lambda_n)\\
\vdots & \ddots & \vdots\\
C_0(\lambda_1) & \cdots & C_0(\lambda_n)
\end{pmatrix},
\end{align*}
where $F(\lambda_j, \mu_k)$, $d_j^k$ and $C_i(\lambda_j)$ are as defined in \eqref{Flambdamu}, \eqref{djk} and \eqref{Ck} in the proof above. Consider
\begin{equation*}
d_j^m=\frac{e^{\mu_m+\zeta} f(\mu_m-\zeta)\prod_{i=1}^n f(\mu_m\pm\lambda_i)}{\prod_{i=1}^{m-1}f(\mu_m\pm\mu_i)} \sum_{l=0}^\infty C_l(\lambda_j) e^{-2(l+2)\mu_m}.
\end{equation*}
By row reduction in the determinant all terms with $C_l(\lambda_j)$, for $0\leq l \leq n-m-1$, can be removed from the sum. The elements of row $m$ become
\begin{align*}
\hat d_j^m=\frac{e^{\mu_m+\zeta} f(\mu_m-\zeta)\prod_{i=1}^n f(\mu_m\pm\lambda_i)}{\prod_{i=1}^{m-1}f(\mu_m\pm\mu_i)} \sum_{l=n-m}^\infty C_l(\lambda_j) e^{-2(l+2)\mu_m}.
\end{align*}
Now we take the limit,
\begin{multline*}
\lim_{\mu_m\to\infty}\hat d_j^m 
=\lim_{\mu_m\to\infty} \frac{(1-e^{2\zeta-2\mu_m})\prod_{i=1}^n \left[(1-e^{-2\mu_m-2\lambda_i})(1-e^{-2\mu_m+2\lambda_i})\right]}{\prod_{i=1}^{m-1}\left[(1-e^{-2\mu_m-2\mu_i})(1-e^{-2\mu_m+2\mu_i})\right]} \\
\times e^{2(n-m+2)\mu_m}\sum_{l=n-m}^\infty C_l(\lambda_j) e^{-2(l+2)\mu_m}
=C_{n-m}(\lambda_j), 
\end{multline*}
so $P_{n-m+1}$ is true. 
\end{proof}

The last thing left to do is to simplify the last $n-m$ rows of the determinant. 

\begin{theorem}[Theorem~\ref{thm:determinantformula}]
\label{thm:partfcn}
For the 6V model with DWBC and a partially reflecting end on a lattice of size $2n\times m$, $m\leq n$, the partition function is
\begin{align*}
&Z_{n,m}(\boldsymbol\lambda, \mu_1, \dots, \mu_m)\nonumber\\
&=\varphi^{n-m}e^{\left(\binom{m}{2}-nm\right)\gamma}f(\gamma)^m
\frac{\prod_{i=1}^{m} \left[e^{\mu_i+\zeta} f(\mu_i-\zeta)\right]\prod_{i=1}^n f(2\lambda_i)\prod_{i=1}^n\prod_{j=1}^{m} f(\mu_j\pm\lambda_i)}{\prod_{1\leq i<j\leq m}f(\mu_j\pm\mu_i)\prod_{1\leq i<j\leq n}\left[f(\lambda_i-\lambda_j)f(\lambda_i+\lambda_j+\gamma)\right]}\nonumber\\
&\phantom{=\ }\times\det 
\renewcommand{\arraystretch}{1.5}
\begin{pmatrix}
\dfrac{1}{f(\mu_1\pm\lambda_1)f(\mu_1\pm(\lambda_1+\gamma))} & \cdots & \dfrac{1}{f(\mu_1\pm\lambda_n)f(\mu_1\pm(\lambda_n+\gamma))}\\[2ex]
\vdots & \ddots & \vdots\\[1ex]
\dfrac{1}{f(\mu_m\pm\lambda_1)f(\mu_m\pm(\lambda_1+\gamma))} & \cdots & \dfrac{1}{f(\mu_m\pm\lambda_n)f(\mu_m\pm(\lambda_n+\gamma))}\\[2ex]
h(2(n-m-1)(\lambda_1+\gamma/2)) & \cdots & h(2(n-m-1)(\lambda_n+\gamma/2))\\
\vdots & \ddots & \vdots\\
h(2(\lambda_1+\gamma/2)) & \cdots & h(2(\lambda_n+\gamma/2))\\
1 & \cdots & 1
\end{pmatrix},
\end{align*}
where $f(x)=2\sinh x$ and $h(x)=2\cosh x$. 
\end{theorem}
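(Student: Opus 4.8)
The plan is to start from the formula established in the preceding proposition, namely \eqref{Pnm}, whose last $n-m$ rows carry the entries $C_{n-m-1}(\lambda_j),\dots,C_0(\lambda_j)$ from \eqref{Ck}, and to reduce these rows, by row operations among themselves, to the rows $h(2(n-m-1)(\lambda_j+\gamma/2)),\dots,h(2(\lambda_j+\gamma/2)),1$ appearing in the theorem, tracking the scalar by which the determinant changes and absorbing it into the prefactor.

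The key observation is that each $C_k(\lambda_j)$ is a symmetric function of a single variable. Writing $v_j=2\lambda_j+\gamma$, $q=e^{v_j}$ and $p=e^\gamma$, the sum \eqref{Ck} is precisely the coefficient of $t^k$ in $\big[(1-qpt)(1-qp^{-1}t)(1-q^{-1}pt)(1-q^{-1}p^{-1}t)\big]^{-1}$, i.e.\ the complete homogeneous symmetric function of degree $k$ in the four quantities $qp,\ qp^{-1},\ q^{-1}p,\ q^{-1}p^{-1}$. Since this set of four quantities is invariant under $q\mapsto q^{-1}$, the Laurent polynomial $C_k$ is invariant under $q\mapsto q^{-1}$, and its highest power of $q$ is $q^k$. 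Hence $C_k$ lies in the span of $1,\ q+q^{-1},\dots,\ q^k+q^{-k}$, and because $q^\ell+q^{-\ell}=2\cosh(\ell v_j)=h(2\ell(\lambda_j+\gamma/2))$, these spanning elements are exactly the target rows.

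Next I would compute the leading coefficient. Extracting the coefficient of $q^k$ in $C_k$ forces $k_3=k_4=0$ and $k_1+k_2=k$, giving $\sum_{k_1+k_2=k}p^{\,k_1-k_2}=\frac{p^{k+1}-p^{-(k+1)}}{p-p^{-1}}=\frac{f((k+1)\gamma)}{f(\gamma)}$. Thus, in the ordered basis $\{h(2(n-m-1)(\lambda+\gamma/2)),\dots,h(2(\lambda+\gamma/2)),1\}$, the transition matrix $T$ from the target rows to the $C$-rows is triangular, with diagonal entries $\frac{f((k+1)\gamma)}{f(\gamma)}$ for $k\ge 1$ and $1$ for $k=0$. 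Subtracting the appropriate multiples of the lower $C$-rows (which leaves the determinant unchanged) reduces each $C_k$ to its leading term, and rescaling produces the target row; since these operations involve only the last $n-m$ rows, they change the full determinant by the factor $1/\det T$, where $\det T=\prod_{k=1}^{n-m-1}\frac{f((k+1)\gamma)}{f(\gamma)}=f(\gamma)^{-(n-m-1)}\prod_{j=2}^{n-m}f(j\gamma)$.

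Finally I would reconcile the prefactors. Keeping $Z$ fixed, replacing $\det(C)$ by the target determinant forces the prefactor of \eqref{Pnm} to be multiplied by $\det T$. Using $1-e^{-2k\gamma}=e^{-k\gamma}f(k\gamma)$, the denominator $\prod_{k=1}^{n-m}(1-e^{-2k\gamma})$ equals $e^{-\binom{n-m+1}{2}\gamma}f(\gamma)\prod_{j=2}^{n-m}f(j\gamma)$, so multiplying $\frac{\varphi^{n-m}e^{-\binom{n+1}{2}\gamma}f(\gamma)^n}{\prod_{k=1}^{n-m}(1-e^{-2k\gamma})}$ by $\det T$ cancels the products $\prod_{j=2}^{n-m}f(j\gamma)$, reduces the power of $f(\gamma)$ to $f(\gamma)^m$, and leaves the $e^\gamma$-exponent $-\binom{n+1}{2}+\binom{n-m+1}{2}$, which equals $\binom{m}{2}-nm$ by the one-line identity $\binom{m}{2}-nm+\binom{n+1}{2}-\binom{n-m+1}{2}=0$. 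This yields exactly the prefactor $\varphi^{n-m}e^{(\binom{m}{2}-nm)\gamma}f(\gamma)^m$ of the theorem. The only genuinely nontrivial step is the identification of $C_k$ as a $q\mapsto q^{-1}$-symmetric polynomial of top degree $k$ together with the evaluation of its leading coefficient $\frac{f((k+1)\gamma)}{f(\gamma)}$; once that triangular reduction is in place, the constant bookkeeping is routine.
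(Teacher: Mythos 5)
Your proposal is correct and follows essentially the same route as the paper's own proof: observe that $C_k$ is even in $v_j=2\lambda_j+\gamma$ of degree $k$ in $e^{v_j}$ with leading coefficient $f((k+1)\gamma)/f(\gamma)$, row-reduce the last $n-m$ rows to $e^{kv_j}+e^{-kv_j}=h(2k(\lambda_j+\gamma/2))$, and factor out the leading coefficients. Your explicit computation of $\det T$ and the reconciliation of the exponent via $\binom{n-m+1}{2}-\binom{n+1}{2}=\binom{m}{2}-nm$ correctly supplies the bookkeeping that the paper compresses into ``yields the desired result.''
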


\begin{proof}
We start from the determinant \eqref{Pnm} from the last proposition. Focus on the entries of the lower part of the determinant, defined in \eqref{Ck} as
\begin{equation*}
C_k(\lambda_j)=\quad\sum_{\mathclap{\substack{k_1, k_2, k_3, k_4\geq 0 \\ k_1+k_2+k_3+k_4=k}}} e^{(k_1+k_2-k_3-k_4)v_j+(k_1-k_2+k_3-k_4)\gamma}.
\end{equation*}
These are clearly Laurent polynomials of degree $k$ in $e^{v_j}$, and they are even in $v_j$. We have already seen that $C_0(\lambda_j)=1$. For $k>0$, the leading coefficient is 
\begin{equation*}
\sum_{\mathclap{\substack{k_1, k_2\geq 0 \\ k_1+k_2=k}}} e^{(k_1+k_2)v_j+(k_1-k_2)\gamma}
=\frac{f((k+1)\gamma)}{f(\gamma)}.
\end{equation*}
Thus by row reduction in the determinant, we can replace $C_k(\lambda_j)$ by $\frac{f((k+1)\gamma)}{f(\gamma)}(e^{k v_j}+e^{-k v_j})$ for $k>0$.
Switching back to the variables $\lambda_j$ and factoring out $\frac{f((k+1)\gamma)}{f(\gamma)}$ from each row of the lower part of the determinant yields the desired result. 
\end{proof}

\emergencystretch 1em
\sloppy
\printbibliography[heading=bibintoc]


\end{document}